\def\qu#1 {\fbox {\footnote {\ }}\ \footnotetext { From Qu: {\color{red}#1}}}
\def\hqu#1 {}
\def\yin#1 {\fbox {\footnote {\ }}\ \footnotetext { From Yin: {\color{blue}#1}}}
\def\hyin#1 {}
\newtheorem{result}{Result}
\newtheorem{Th}{Theorem}[section]
\newtheorem{Cor}[Th]{Corollary}
\newtheorem{Prop}[Th]{Proposition}
\newtheorem{Prob}[Th]{Problem}
\newtheorem{Lemma}[Th]{Lemma}
\newtheorem{Def}[Th]{Definition}
\newtheorem{example}{Example}
\newcommand{\tr}{{\rm Tr}}
\newcommand{\gf}{{\mathbb F}}
\newcommand{\figcaption}{\def\@captype{figure}\caption}
\newcommand{\tabcaption}{\def\@captype{table}\caption}
\begin{document}


\title{A New Approach to Constructing  Quadratic Pseudo-Planar Functions over $\gf_{2^n}$ }



\author{Longjiang~Qu
        \thanks{L.J. Qu is with
                College of Science, National University of Defense Technology, ChangSha, 410073,
                China. He is also with State Key Laboratory of Cryptology, P. O. Box 5159, Beijing, 100878, China.
                Email: ljqu\_happy@hotmail.com.
                The research of L.J. Qu was supported by the NSFC of China
                under Grant  61272484, 11531002, 61572026,  the National Basic
Research Program of China(Grant No. 2013CB338002), the Basic Research Fund of National University of Defense Technology (No. CJ 13-02-01),
the Open Foundation of State Key Laboratory of Cryptology
and the Program for New Century Excellent Talents in University (NCET).}
       }

\maketitle{}

\begin{abstract}
Planar functions over finite fields give rise to finite projective planes. They
were also used in the constructions of DES-like iterated ciphers, error-correcting codes, and codebooks.
They were originally defined only in finite fields with odd characteristic, but recently Zhou introduced pesudo-planar functions
in even characteristic which yields similar applications.
All known pesudo-planar functions are quadratic  and hence they give presemifields. In this paper, a new
approach to constructing quadratic pseudo-planar functions is given. Then five explicit families of
pseudo-planar functions are constructed, one of which is a binomial,
two of which are trinomials, and the other two are quadrinomials. All known pesudo-planar functions
 are revisited, some of which are generalized.
These functions not only lead to projective planes,
relative difference sets and presemifields,  but also give optimal codebooks meeting the Levenstein bound,
complete sets of mutually unbiased bases (MUB) and compressed sensing matrices with low coherence.
\end{abstract}

\begin{IEEEkeywords}
Pseudo-planar function, Quadratic function, Linearized polynomial, Presemifield, Codebook.
\end{IEEEkeywords}

\section{Introduction}

\IEEEPARstart{L}et $p$ be an odd prime and $n$  a positive integer. A function
$F:\gf_{p^n}\rightarrow \gf_{p^n}$ is \emph{planar} if the mapping
$$x\mapsto F(x+a) - F(x)$$
is a permutation of $\gf_{p^n}$ for each $a\in \gf_{p^n}^\ast$, where
$\gf_{p^n}^\ast$ denotes the set of all nonzero elements of $\gf_{p^n}$. Planar functions
were introduced by Dembowski and Ostrom to construct finite projective planes and
arised in many other contexts. For example, Ganley and Spence \cite{GanleySpence} showed that
planar functions give rise to certain relative difference sets, Nyberg and
Knudsen \cite{NybergKnudsen}, among others, studied planar functions
for applications in cryptography, Carlet,
Ding, and Yuan \cite{CDY}, among others, used planar functions to construct error-correcting
codes, and Ding, and Yin \cite{DingYin}, among others, used planar functions to construct optimal codebooks
meeting the Levenstein bound.

If $p=2$, then there are no planar functions $F:\gf_{p^n}\rightarrow \gf_{p^n}$ since $0$ and $a$ have
the same image under the map $x\mapsto F(x+a) - F(x)$. Recently, Zhou \cite{Zhou2013}
 introduced a characteristic $2$ analogue of planar functions, which have the same types of applications as do
 odd-characteristic planar functions.

\begin{Def}\label{def_PN}
A function $F:\gf_{2^n}\rightarrow \gf_{2^n}$ is called \emph{pseudo-planar} if
\begin{equation}\label{eq_def_PN}
    F(x+a)+F(x)+ax
\end{equation}
is a permutation polynomial over $\gf_{2^n}$ for each $a\in \gf_{2^n}^\ast$.
\end{Def}

Note that Zhou \cite{Zhou2013} called such functions ``\emph{planar}'',
and the term ``\emph{pseudo-planar}'' was first used by Abdukhalikov \cite{Abdukhalikov} to avoid confusion with
planar functions in odd characteristic. Schmidt and Zhou \cite{SchmidtZhou} showed a pseudo-planar function
can be used  to produce a finite projective plane, a relative difference set with parameters $(2^n, 2^n, 2^n, 1)$,
and certain codes with unusual properties. Abdukhalikov \cite{Abdukhalikov} used pseudo-planar functions
to give new explicit constructions of complete sets of MUBs, and showed the connection between quadratic
pseudo-planar functions and commutative presemifields. Here, as usual,
 a quadratic function refers to a function with algebraic degree $2$,
 which is also called a Dembowski-Ostrom type function.
  It should be noted that we distinguish \emph{algebraic degree} and \emph{degree} in this paper. Let
$F(x)=\sum_{i=0}^{2^n-1} c_ix^i$ be a polynomial over $\gf_{2^n}$. Then its \emph{algebraic degree} is defined to
be the maximum $2$-adic weight of $i$ for all nonzero $c_i$, while its \emph{degree} is defined to
be the maximum integer $i$ for all nonzero $c_i$. For example, the algebraic degree of $x^6$ is $2$,
while its degree  is $6$. A function with algebraic degree at most $1$ is called a \emph{linearized polynomial}.
It is trivial that a linearized polynomial is necessarily pseudo-planar. It is also clear that
a function  is pseudo-planar if and only if so is the summation of it with any linearized polynomial.
Hence, throughout this paper, we assume that a function is free of linearized terms, that is, the coefficient of
$x^{2^i}$ is $0$ for any nonnegative integer $i$.

To the best of  the author's knowledge, all known pesudo-planar functions are of Dembowski-Ostrom type.
The equivalence on them is the same as the isotopism of the corresponding semifields.
(See Section II.A for more details.) Moreover, there are only two types of presemifields with
even characteristic, that is, finite fields and the Kantor family
\cite{LavrauwPolverino}\cite{Zhou2013}.

\begin{result} \label{result1} \cite[Examples 2.1 and 2.2]{Zhou2013}

1) For each positive integer $n$, every affine mapping, especially $f(x) =
0$, is a pesudo-planar function on $\gf_{2^n}$. The corresponding plane is a Desarguesian plane
and the corresponding semifield is the finite field.

2) Assume that we have a chain of fields $\gf = \gf_0 \supset \gf_1 \supset \cdots \supset \gf_r$ of
characteristic $2$ with $[\gf : \gf_r]$ odd and corresponding trace mappings $\tr_i : \gf \rightarrow\gf_i$.
Then
  \begin{equation}\label{eq_Kantor_Pla}
    \left(x\sum_{i=1}^r \tr_i(\zeta_i x)\right)^2, \text{where}\ \zeta_i\in  \gf^\ast
\end{equation}
 is a pesudo-planar function on $\gf$, which is corresponding to the Kantor family of commutative presemifields \cite{Kantor}.
\end{result}


It seems to be quite difficult to find pesudo-planar functions which are inequivalent to those in Result \ref{result1}.
Schmidt and Zhou \cite{SchmidtZhou},
 and Scherr and Zieve \cite{ScherrZieve}  turned to study the classification of monomial planar functions.
 Three families of  monomial pseudo-planar functions were got.
 However, as  pointed out by Schmidt and Zhou, the corresponding planes are all desarguesian, i.e., the semifields are finite fields, or the functions are all equivalent to $F(x)=0$.

 \begin{result}\label{result_mon}
 The following monomials are pesudo-planar functions.

 1) $F(x)=cx^{2^m}$, where $c\in \gf_{2^n}$ (Trivial);

 2)  $F(x)=cx^{2^m+1}$, where $n=2m$, $c\in \gf_q^\ast$ and $\tr_{m/1}(c)=0$ and $\tr_{m/1}$ denotes the trace function from $\gf_{2^m}$ to $\gf_2$
 (\cite[Theorem 6]{SchmidtZhou},
 generalized by Theorem \ref{th_ind2Monomial});

 3) $F(x)=cx^{2^{2m}+2^{m}}$, where  $n=3m$, $m$ is even, $q=2^m$, $c\in \gf_{2^n}^\ast$ is
                              a ($q-1$)-th power but not a $3(q-1)$-th power
                                (\cite[Theorem 1.1]{ScherrZieve}, see also Proposition \ref{prop_ind3Monmial}).
 \end{result}

Later,  Hu, Li, Zhang, et. al. \cite{HuLiZhang} introduced three families of binomial pesuso-planar functions.

 \begin{result}\label{result_binom}
  The following binomials are pesudo-planar functions.

  1)  $F(x)=a^{-(q+1)}x^{q+1} + a^{q^2+1}x^{q^2+1}$, where $n=3m$, $q=2^m$ and $a$ satisfies a trace equation
 (see \eqref{eq_new_condition} or \eqref{eq_HLZ_condition} in Example \ref{ex_known_Pla}.(3) ) (\cite[Proposition 3.2]{HuLiZhang}).

 2) $F(x)=x^{q+1} + x^{q^2+q}$, where $n=3m$, $m\not\equiv 2\mod 3$, and  $q=2^m$ (\cite[Proposition 3.6]{HuLiZhang}).

 3) $F(x)=x^{q^2+q} + x^{q^2+1}$, where  $n=3m$, $m\not\equiv 1\mod 3$, and  $q=2^m$ (\cite[Proposition 3.8]{HuLiZhang}).
 \end{result}

It is open to classify the pseudo-planar
functions. Only the classification of the monomial pseudo-planar functions was studied, and it was conjectured
that there are only three families of such monomials \cite[Conjecture 3.2]{SchmidtZhou}.

Throughout the rest of this section, let $n=tm$, and let $q=2^m$, where $t, m$ are positive integers and $t\geq 2$.
Then $\gf_{2^n}$  is an extension  field of $\gf_{2^m}$  with extension degree $t$.

There are five families of pseudo-planar functions excluding the trivial monomial one in Results \ref{result_mon} and \ref{result_binom}.
Four families of them are defined over $\gf_{2^{3m}}$,
 and the rest one is defined over $\gf_{2^{2m}}$. Further,
 all the exponents of the terms in these five families are in the set of $\{q^2+q, q^2+1,  q+1\}$,
 where $q=2^m$.

In this paper, a new approach to constructing quadratic pseudo-planar functions is introduced.
Firstly, according to Definition \ref{def_PN}, a quadratic function $F$ over $\gf_{2^n}$ is pseudo-planar
if and only if $$\mathbb{L}_a(x) := F(x+a)+F(x)+F(a)+ax$$ is a linearized permutation polynomial for each $a\in \gf_{2^n}^\ast$.
We then convert it to studying the permutation property of
 the dual polynomial $\mathbb{L}_{b}^\ast(a)$ (see the proof of Theorem \ref{th_Gen} for the detailed definition) of $\mathbb{L}_a(x)$,
 and further link it with the problem of deciding whether a corresponding determinant can be zero.
 For the general family of functions defined by \eqref{eq_gen_Form} (in Theorem \ref{th_Gen}), this
 determinant is of size $t$, and with additional properties which will simplify the later calculation.
 Secondly, we relate this determinant with a polynomial $m_b(x)$ (cf. \eqref{eq_mb_Gen} in Section III.B) over $\gf_q$ with degree $t$.
Assuming the determinant to be zero leads to an equation on the coefficients of $m_b(x)$.
Then the problem is reduced to discussing whether there exists an irreducible polynomial $m_b(x)$ over $\gf_q$ satisfying
the aforementioned equation. Please refer to Section III for more details.

Then we use this new approach to construct new
explicit families of quadratic pseudo-planar functions over $\gf_{2^n}$,
and reconstruct known families.
The constructions are split into three cases according to the values of the extension degree $t$.
For the  case of extension degree $t=3$,
we construct three families of pseudo-planar functions, and study a family of trinomial, which
is a generalization of the three families of functions in \cite{HuLiZhang}.
The monomial polynomial is also revisited, and a sufficient and necessary condition for it to be pseudo-planar is given.
For the case of extension degree $t=4$, we construct two families of pseudo-planar functions.
One is a trinomial, the other is a quadrinomial.
For the case of extension degree $t=2$, we revisit the monomial pseudo-planar function and provide a simple
sufficient and necessary condition, which generalize \cite[Theorem 6]{SchmidtZhou}. However,
we can not construct pseudo-planar function with new explicit form in this case and leave it as an open problem.
The equivalence problem of these constructed functions is then investigated.
%
%
The  functions constructed in this paper not only lead to projective planes,
relative difference sets and presemifields,  but also give optimal codebooks meeting the Levenstein bound,
complete sets of MUBs and compressed sensing matrices with low coherence.

The rest of this paper is organized as follows. Necessary
definitions and results are given in Section II. In Section III we
introduce the new approach of constructing quadratic pesudo-planar functions. Several families of such functions
with new forms are constructed in Section IV, which is divided into three subsections according to
the values of the extension degree $t$. In Section V,  the equivalence problem of these functions
is investigated. A small  application example is given in Section VI. Section VII is  the concluding remarks.

\section{Preliminaries}

In this section, we  give necessary definitions and results which
will be used in the paper.

\subsection{Relative Difference Set, Galois Ring and Presemifield}

Let $G$ be a finite abelian group and let $N$ be a subgroup of $G$. A subset $D$ of $G$ is a \emph{relative
difference set (RDS)} with parameters $(|G|/|N|, |N|, |D|, \lambda)$ and \emph{forbidden subgroup} $N$ if the
list of nonzero differences of $D$ comprises every element in $G\setminus N$ exactly $\lambda$ times, and no
element of $N\setminus \{0\}$. We are interested in RDSs $D$ with parameters $(q, q, q, 1)$ and a normal
forbidden subgroup, in that case a classical result due to Ganley and Spence \cite[Theorem 3.1]{GanleySpence}
 shows that $D$ can be uniquely extended to a finite projective plane.
Particularly, if   $D$ is with parameter
$(2^n, 2^n, 2^n, 1)$, then $D$ is necessarily a subset of $\mathbb{Z}_{4}^n$ (where $\mathbb{Z}_4=  \mathbb{Z} /4\mathbb{Z}$) and
the forbidden subgroup is $2\mathbb{Z}_4^n$. This fact motivated Zhou \cite{Zhou2013} to study such difference
sets, which then led to the notion of pseudo-planar functions over finite fields of characteristic two.

We recall some basic facts about the Galois ring $R=GR(4^n)$ of characteristic $4$ and cardinality
$4^n$. We have
$R/2R\cong \gf_{2^n}$, the unit group $R^\ast = R\setminus 2R$ contains a cyclic subgroup $C$
of size $2^n-1$ isomorphic to $\gf_{2^n}^\ast$. The set $\mathcal{T}=\{0\}\cup C$ is
called the \emph{Teichm$\ddot{\text{u}}$ller set} in R. Every element $x\in R$ can be
written uniquely in the form $x=a+2b$ for $a, b\in \mathcal{T}$.
Then \emph{the trace function over Galois ring $R$} is defined as follows.
$$\tr_R (x) = (a + a^2 + \cdots + a^{2^{n-1}}) + 2 (b + b^2 + \cdots + b^{2^{n-1}}).$$
Since $R/2R\cong \gf_{2^n}$, for every element $u\in \gf_{2^n}$ there exists a corresponding
unique element $\widehat{u} \in \mathcal{T}$, called the \emph{Teichm$\ddot{\text{u}}$ller lift} of $u$.
Using the  Teichm$\ddot{\text{u}}$ller lift, we can also regard a function $F:\gf_{2^n}\rightarrow \gf_{2^n}$
as a function $F:\mathcal{T}\rightarrow \mathcal{T}$. For more information on Galois rings, please
refer to \cite{HKCSS94}\cite{Wan2003}.

It can be easily proved that a relative difference set in $R$ with parameters $(2^n, 2^n, 2^n, 1)$
can always be written as
\begin{equation}\label{eq_RDS}
    D=\{x+2\sqrt{F(x)} : x \in \mathcal{T}\},
\end{equation}
where $F$ is some function from $\mathcal{T}$ to itself, and $\sqrt{x}$ denotes $x^{2^{n-1}}$.
Then we have the following link between RDS in $R$ and pseudo-planar functions over $\gf_{2^n}$.

\begin{Th}\cite[Theorem 2.1]{SchmidtZhou}
The set $D$, given in \eqref{eq_RDS} is a relative difference set in $R$ with parameters $(2^n, 2^n, 2^n, 1)$
and forbidden group $2R$ if and only if $F$ is pseudo-planar over $\gf_{2^n}$.
\end{Th}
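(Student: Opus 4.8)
The plan is to recast the relative-difference-set condition as a counting statement about representations $g = d_1 - d_2$ with $d_1,d_2\in D$, and then to match that count with Definition~\ref{def_PN}. I will use three facts about $R=GR(4^n)$: the decomposition $r=u+2v$ with $u,v\in\mathcal{T}$ is unique; reduction modulo $2$ is a bijection $\mathcal{T}\to\gf_{2^n}$, $z\mapsto\ov z$, that intertwines $\sqrt{\cdot}=(\cdot)^{2^{n-1}}$ on $\mathcal{T}$ with the square root on $\gf_{2^n}$; and for $u,v\in\mathcal{T}$ one has $u+v=\widehat{\ov u+\ov v}+2\sqrt{uv}$. Since $R$ has characteristic $4$, this last fact forces the collapse $2u_1+\cdots+2u_k=2\,\widehat{\ov u_1+\cdots+\ov u_k}$ (every cross term carries a factor $4=0$); that is, the $2$-parts of a sum add $\gf_{2^n}$-linearly.

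First I would fix the parameters. Writing $d(x)=x+2\sqrt{F(x)}$, the $\mathcal{T}$-part of $d(x)$ is $x$, so $x\mapsto d(x)$ is injective on $\mathcal{T}$ and $|D|=2^n$; here $|G|=|R|=4^n$, $|N|=|2R|=2^n$, and the number of ordered pairs of distinct elements of $D$ equals $2^n(2^n-1)=|R\setminus 2R|$. Thus it suffices to prove that $(d_1,d_2)\mapsto d_1-d_2$ is a bijection from $\{(d_1,d_2):d_1\ne d_2\}$ onto $R\setminus 2R$. That no nonzero difference lies in $2R$ is automatic: reducing modulo $2$ sends $d(x)-d(y)$ to $\ov x+\ov y$, which is $0$ only if $\ov x=\ov y$, hence $x=y$ and $d(x)=d(y)$. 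So every nonzero difference already lies in $R\setminus 2R$, and all the content is in the representation count $N(g):=\#\{(x,y):d(x)-d(y)=g\}$ for $g\in R\setminus 2R$.

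Next I would compute that count. Fix $g=a+2c$ with $a\in\mathcal{T}\setminus\{0\}$, $c\in\mathcal{T}$, and solve $d(x)=d(y)+g$; put $X=\ov x$, $Y=\ov y$, $A=\ov a\neq 0$, $C=\ov c$ in $\gf_{2^n}$. Expanding $y+a=\widehat{Y+A}+2\sqrt{ya}$ and collapsing all $2$-multiples by the characteristic-$4$ rule, the right-hand side becomes $\widehat{Y+A}+2\,\widehat{\sqrt{YA}+\sqrt{F(Y)}+C}$, while $d(x)=\widehat{X}+2\,\widehat{\sqrt{F(X)}}$. Matching $\mathcal{T}$-parts gives $X=Y+A$, and matching $2$-parts, then squaring to clear the square roots and substituting $Y=X+A$, yields the single equation
\[
 F(X+A)+F(X)+AX=A^2+C^2 .
\]
Each solution $X$ determines $Y=X+A$ and hence a unique pair, so $N(g)$ equals the number of preimages of $A^2+C^2$ under $X\mapsto F(X+A)+F(X)+AX$, which is exactly the map in \eqref{eq_def_PN} with parameter $a=A$. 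Since $C\mapsto C^2$ is a bijection, for fixed $A\neq 0$ the value $A^2+C^2$ runs over all of $\gf_{2^n}$ as $c$ varies.

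Finally I would read off the equivalence. $F$ is pseudo-planar iff for every $A\neq 0$ the map $X\mapsto F(X+A)+F(X)+AX$ is a permutation, i.e. iff every value $A^2+C^2$ has exactly one preimage, i.e. iff $N(g)=1$ for all $g\in R\setminus 2R$; together with the automatic vanishing of $N$ on $2R\setminus\{0\}$ and $N(0)=|D|$, this is precisely the assertion that $D$ is an RDS with parameters $(2^n,2^n,2^n,1)$ and forbidden subgroup $2R$. I expect the only real obstacle to be the Galois-ring bookkeeping in the third paragraph: verifying uniqueness of the $u+2v$ decomposition, that reduction modulo $2$ intertwines $\sqrt{\cdot}$ with $(\cdot)^{2^{n-1}}$, and above all that the characteristic-$4$ collapse makes the $2$-parts add linearly. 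Once those are secured, matching components and squaring are routine.
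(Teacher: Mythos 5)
Your proof is correct. Note that the paper does not prove this statement at all --- it is quoted verbatim from \cite[Theorem 2.1]{SchmidtZhou} --- so there is no internal proof to compare against; your argument is essentially the standard one from that reference (and from Zhou's original paper), namely a self-contained verification via Teichm\"{u}ller arithmetic in $GR(4^n)$. The three Galois-ring facts you isolate are all genuine and standard: uniqueness of the decomposition $r=u+2v$ with $u,v\in\mathcal{T}$ is stated in the paper itself; reduction modulo $2$ is a multiplicative bijection $\mathcal{T}\to\gf_{2^n}$ commuting with $z\mapsto z^{2^{n-1}}$; and the carry identity $u+v=\widehat{\ov{u}+\ov{v}}+2\sqrt{uv}$ is the classical identity from the $\mathbb{Z}_4$-linearity literature \cite{HKCSS94}, from which the characteristic-$4$ collapse of $2$-parts follows at once. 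Your counting reduction is also sound: since every nonzero difference $d(x)-d(y)$ reduces to $\ov{x}+\ov{y}\neq 0$ modulo $2R$, and the number of ordered pairs of distinct elements of $D$ equals $|R\setminus 2R|=2^n(2^n-1)$, the RDS property is equivalent to each $g=a+2c\in R\setminus 2R$ having exactly one representation; matching Teichm\"{u}ller parts and then $2$-parts, and squaring, gives exactly $F(X+A)+F(X)+AX=A^2+C^2$, and since $C\mapsto A^2+C^2$ is a bijection of $\gf_{2^n}$ for fixed $A\neq 0$, this single-representation condition is precisely the permutation condition of Definition \ref{def_PN}.
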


A \emph{presemifield} is a ring with no zero-divisor, and with left and right distributivity \cite{Dembowski}. A presemifield
with multiplicative identity is called a \emph{semifield}. A finite presemifield can be
obtained from a finite field $(\gf_{q}, +, \cdot)$ by introducing a new product operation $\star$, so
it is denoted by $(\gf_{q}, +, \star)$.
An \emph{isotopism} between two presemifields $P = (\gf_q, +, \star)$ and $P' = (\gf_q, +,  \circ)$ is a
triple $(M, N, L)$ of  bijective linearized mapping $\gf_q \rightarrow \gf_q$ such that
$$ M(x) \circ N(y) = L(x \star y), \text{ for all } x, y \in \gf_q.$$
Any presemifield $P = (\gf_q, +, \star)$ is isotopic to a semifield: fix any $0\neq e\in \gf_q$ and
define $\circ$ by $(x \star e)\circ  (e \star y) = x \star y$ for all $x, y \in \gf_q$. Then $(\gf_q, +, \circ)$ is a semifield with identity $e\star e$, and is obviously isotopic to $P$. If $(\gf_q, +, \star)$ is commutative then
so is each such semifield $(\gf_q, +, \circ)$.

There exists a correspondence between commutative
semifield (up to isotopism) over finite fields of characteristic two and quadratic  pseudo-planar functions \cite[Theorem 9]{Abdukhalikov}.
More specifically, if $F$ is a quadratic pseudo-planar function over $\gf_{2^n}$, then
$(\gf_{2^n}, +, \star)$ with multiplication $x \star y =
xy + F (x + y) + F (x) + F (y)$ is a presemifield. On the other side, if $(\gf_{2^n}, +,  *)$ is a commutative presemifield, then there exist a strongly isotopic
commutative presemifield $(\gf_{2^n}, +, \star)$ and a pseudo-planar function $F$ such that
$x \star y =
xy + F (x + y) + F (x) + F (y)$.

Let $\mathbb{S}=(\gf_{p^n}, +, \ast)$ be a semifield. The subsets
  $$N_l(\mathbb{S}) = \{a\in \mathbb{S} | (a\ast x)\ast y = a\ast (x\ast y) \ \text{ for all} \ x,y \in \mathbb{S} \},$$
  $$N_m(\mathbb{S}) = \{a\in \mathbb{S} | (x\ast a)\ast y = x\ast (a\ast y) \ \text{ for all} \ x,y \in \mathbb{S} \},$$
  $$N_r(\mathbb{S}) = \{a\in \mathbb{S} | (x\ast y)\ast a = x\ast (y\ast a) \ \text{ for all} \ x,y \in \mathbb{S} \},$$
are called the \emph{left, middle and right nucleus} of $\mathbb{S}$, respectively. It is easy to
check that these sets are finite fields.

A pseudo-planar function is just a field-function illustration of the $(2^n,2^n,2^n,1)$-RDS in $\mathbb{Z}_4^n$,
 and the equivalence of RDSs in $\mathbb{Z}_4^n$ is the same as the isotopism of the corresponding semifields \cite[Proposition 3.4]{Zhou2013}.
 Hence if the pseudo-planar functions are of Dembowski-Ostrom type, then the equivalence on them is the same as the isotopism of the corresponding semifields.
 To check  whether a  semifield is  new or not, a natural way is to determine its left (right) nucleus.


\subsection{Codebook,  MUB and Compressed Sensing Matrix}

Let $\mathcal{C} = \{\mathbf{c}_0, \cdots, \mathbf{c}_{N-1}\}$, where each $\mathbf{c}_l$ is a unit norm
$1\times K$ complex vector over an alphabet $A$. Such a set $\mathcal{C}$ is
called an $(N,K)$ \emph{codebook} (also called a signal set). The size
of $A$ is called the \emph{alphabet size} of $\mathcal{C}$. As a performance
measure of a codebook in practical applications, the \emph{maximum
crosscorrelation amplitude} of an $(N,K)$ codebook  $\mathcal{C}$  is defined
by
$$
I_{\max}( \mathcal{C} ) = \max_{0\leq i<j\leq N-1} |\mathbf{c}_i\mathbf{c}_j^H|,
$$
where $\mathbf{c}^H$ stands for the conjugate transpose of the complex
vector $\mathbf{c}$. For $I_{\max}( \mathcal{C} )$, we have the well-known
\emph{Welch bound} \cite{Welch1974} and the \emph{Levenstein bounds} \cite{KabatyanskiiLevenshtein}\cite{Levenshtein}, while the latter are better than
the former when $N$ is large.
For latter use, we give in the following the Levenstein bound for complex-valued codebooks.

\begin{Lemma}(Levenstein Bound)
For any complex-valued $(N,K)$ codebook $\mathcal{C}$ with $N> K^2$,
we have
\begin{equation}\label{eq_Lev_boundC}
    I_{\max}( \mathcal{C} ) \geq \sqrt{\frac{2N-K^2-K}{(K+1)(N-K)}}.
\end{equation}
\end{Lemma}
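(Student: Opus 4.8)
The plan is to recast the inequality as a linear-programming (Delsarte-type) estimate on the Gram structure of the codewords, and then to exhibit one explicit quadratic test polynomial that yields the stated constant. First I would fix notation: write $u_{ij}=|\mathbf{c}_i\mathbf{c}_j^H|^2$ and $A=I_{\max}(\mathcal{C})^2$, so that $u_{ii}=1$ and $0\le u_{ij}\le A$ for $i\neq j$. To each unit row vector associate the rank-one orthogonal projection $P_i=\mathbf{c}_i^H\mathbf{c}_i$ on $\mathbb{C}^K$; then $\tr(P_i)=1$ and $\tr(P_iP_j)=u_{ij}$. This encodes the two relevant moments operator-theoretically: with $S=\sum_i P_i$ (a $K\times K$ positive semidefinite matrix of trace $N$) one has $\sum_{i,j}u_{ij}=\tr(S^2)$, and with $T=\sum_i P_i\otimes P_i$ acting on the symmetric square $\mathrm{Sym}^2(\mathbb{C}^K)$ (of dimension $\binom{K+1}{2}$) one has $\sum_{i,j}u_{ij}^2=\tr(T^2)$.

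The core of the argument is the positive-definiteness of the zonal (Jacobi) polynomials on complex projective space. Writing $\nu_1=1/K$ and $\nu_2=2/(K(K+1))$ for the first two moments of $u$ under the natural measure (the distribution of $|\langle x,y\rangle|^2$ for a fixed and a uniform unit vector), the degree-one and degree-two orthogonal polynomials $Q_1(u)=u-\nu_1$ and $Q_2(u)=u^2-\tfrac{4}{K+2}u-c$ satisfy $\sum_{i,j}Q_1(u_{ij})\ge 0$ and $\sum_{i,j}Q_2(u_{ij})\ge 0$. I would then test the configuration against the single polynomial $F(u)=u(u-A)=u^2-Au$. Expanding $F=f_0+f_1Q_1+f_2Q_2$ gives $f_2=1$, $f_1=\tfrac{4}{K+2}-A$, and $f_0=\nu_2-A\nu_1$; when $A\le \tfrac{4}{K+2}$ both $f_1,f_2\ge 0$, so positive-definiteness yields $\sum_{i,j}F(u_{ij})\ge f_0N^2$.

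On the other hand $F(u)\le 0$ on $[0,A]$, so every off-diagonal term is nonpositive, while each diagonal term equals $F(1)=1-A$; hence $\sum_{i,j}F(u_{ij})\le NF(1)=N(1-A)$. Comparing the two estimates gives $(\nu_2-A\nu_1)N\le 1-A$, that is $\nu_2N-1\le A(\nu_1N-1)$; since $\nu_1N-1=(N-K)/K>0$, dividing and substituting $\nu_1,\nu_2$ yields exactly $A\ge \frac{2N-K^2-K}{(K+1)(N-K)}$, i.e. the claimed bound for $I_{\max}(\mathcal{C})^2$. The leftover case $A>\tfrac{4}{K+2}$ I would dispose of directly: a short computation shows $\tfrac{4}{K+2}\ge\frac{2N-K^2-K}{(K+1)(N-K)}$ whenever $N>K^2$, so the bound holds trivially there. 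The hypothesis $N>K^2$ is precisely what places us in the regime where this degree-two estimate is the governing Levenstein bound and keeps the right-hand side positive.

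The main obstacle is establishing the degree-two positive-definiteness $\sum_{i,j}Q_2(u_{ij})\ge 0$. Unlike the degree-one inequality $\tr(S^2)\ge N^2/K$, which is immediate from Cauchy--Schwarz applied to the eigenvalues of $S$, this inequality couples the first and second moments and does \emph{not} follow from the two Welch moment bounds taken in isolation (those only give $\tr(S^2)\ge N^2/K$ and $\tr(T^2)\ge 2N^2/(K(K+1))$ separately, which is insufficient). I would prove it either by invoking the addition formula for the zonal polynomials of $\mathbb{CP}^{K-1}$, or self-containedly by decomposing the $U(K)$-action on $\mathrm{Sym}^2(\mathbb{C}^K)$ and writing $\sum_{i,j}Q_2(u_{ij})$ as the squared norm of the projection of $T$ onto the top irreducible component. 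The only other point needing care is the sign condition $f_1\ge 0$, which is exactly the case split handled above.
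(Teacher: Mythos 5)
The paper does not prove this lemma at all --- it is quoted as a known result, with the proof residing in the cited works of Kabatyanskii--Levenshtein and Levenshtein --- so the real comparison is between your reconstruction and that literature: your proposal is a correct rendering of the standard linear-programming (Delsarte--Goethals--Seidel/Levenshtein) argument. Your computations check out. The moments are $\nu_1=1/K$ and $\nu_2=2/(K(K+1))$; the mean-zero degree-two orthogonal polynomial is $Q_2(u)=u^2-\tfrac{4}{K+2}u+\tfrac{2}{(K+1)(K+2)}$; expanding $F(u)=u(u-A)$ gives $f_2=1$, $f_1=\tfrac{4}{K+2}-A$, and $f_0=\nu_2-A\nu_1$ (the last precisely because $Q_1,Q_2$ have mean zero); the sandwich $f_0N^2\le\sum_{i,j}F(u_{ij})\le N(1-A)$ rearranges, using $\nu_1N-1=(N-K)/K>0$, to exactly \eqref{eq_Lev_boundC}; and in the residual case $A>\tfrac{4}{K+2}$ one computes $4(K+1)(N-K)-(K+2)(2N-K^2-K)=K(2N+K^2-K-2)>0$ for $N>K^2$, so the bound holds trivially there. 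You also correctly isolate the one genuinely nontrivial ingredient, $\sum_{i,j}Q_2(u_{ij})\ge 0$, and you are right that it does not follow from the two Welch moment inequalities taken separately; either of your proposed routes (the addition formula for the zonal Jacobi polynomials of complex projective space, or the $U(K)$-isotypic decomposition exhibiting the sum as a squared norm of a projection) is a valid and standard way to close that gap, though as written this remains the one step deferred to representation theory rather than executed. As for what each treatment buys: the paper's citation keeps the exposition short, since the bound serves only as a benchmark certifying optimality of the codebooks arising from pseudo-planar functions; your argument would make the lemma self-contained, makes transparent why $N>K^2$ is assumed (it is the regime where this degree-two bound governs and keeps the right-hand side positive), and identifies the equality case --- off-diagonal inner products confined to the roots $\{0,A\}$ of $F$ together with the design conditions $\sum_{i,j}Q_k(u_{ij})=0$ --- which is exactly why the $(q^2+q,q)$ codebooks of Corollary \ref{Cor_codebook} meet the bound.
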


Constructing codebooks achieving the Welch bound or the Levenstein bound
looks very hard in general.
An efficient approach is to use combinatorial objects such
as difference sets, almost difference sets, and
so on (see \cite{Ding06}\cite{DingFeng07}\cite{DingFeng08}\cite{ZhangFeng} and the references therein).
Particularly, Zhou and Tang used relative difference sets to
construct codebooks \cite{ZT2011}.

Let $G$ be a finite abelian group and let $N$ be a subgroup of $G$ with order $v$
and index $u$. Set $\hat{G}$ be the set of all the characters of $G$. Let
$D=\{d_0, \cdots, d_{k-1}\}$ be a $k$-subset of $G$. For any $\chi \in \hat{G}$,
we define a complex codeword
$$\mathbf{C}_{\chi}=\frac{1}{\sqrt{k}}(\chi(d_0), \cdots, \chi(d_{k-1}) ).$$
Then we define the codebook
\begin{equation}\label{eq_Codebook}
    \mathcal{C}_{D}=\{\mathbf{C}_{\chi}: \chi \in \hat{G}\}\cup E_{k},
\end{equation}
where $E_{k}=\{e_{i}: 1\leq i\leq k\}$ is the standard basis of the $k$-dimensional Hilbert space.

\begin{Th}\cite[Theorem 3.1]{ZT2011}
Let $D$ be a $(u, v, k, \lambda)$ relative difference set in $G$ relative to $N$.
Then $\mathcal{C}_{D}$ of \eqref{eq_Codebook} is a $(uv+k, k)$ codebook with $I_{\max}(\mathcal{C}_{D})=\sqrt{\frac{1}{k}}$.
\end{Th}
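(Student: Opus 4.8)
The plan is to compute $I_{\max}(\mathcal{C}_D)$ directly from the definition by examining every pairwise inner product, after first checking the size and normalization. Since $G$ is finite abelian we have $|\hat G| = |G| = uv$, so adjoining the $k$ vectors of $E_k$ produces exactly $uv+k$ codewords, each a $1\times k$ complex vector. Every $\mathbf{C}_\chi$ has unit norm because each entry $\chi(d_i)$ lies on the unit circle, whence $\mathbf{C}_\chi\mathbf{C}_\chi^H = \frac1k\sum_{i=0}^{k-1}|\chi(d_i)|^2 = 1$; the vectors in $E_k$ are unit vectors by construction. Hence $\mathcal{C}_D$ is a $(uv+k,k)$ codebook, and it remains to determine the maximum crosscorrelation amplitude.

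There are three types of distinct pairs. For two standard basis vectors $e_i,e_j$ with $i\neq j$ we have $e_ie_j^H = 0$. For a character codeword and a basis vector, the product $\mathbf{C}_\chi e_i^H$ selects a single entry of $\mathbf{C}_\chi$, which has modulus $1$, so $|\mathbf{C}_\chi e_i^H| = \frac1{\sqrt k}$; these pairs already attain $\sqrt{1/k}$, giving $I_{\max}(\mathcal{C}_D)\ge\sqrt{1/k}$. The whole statement thus reduces to showing the remaining pairs, two distinct character codewords $\mathbf{C}_\chi,\mathbf{C}_\psi$, never exceed this value. Writing $\phi=\chi\psi^{-1}$, which is a nontrivial character since $\chi\neq\psi$, one finds
$$\mathbf{C}_\chi\mathbf{C}_\psi^H = \frac1k\sum_{i=0}^{k-1}\chi(d_i)\overline{\psi(d_i)} = \frac1k\sum_{d\in D}\phi(d),$$
so the task is to bound the character sum $\big|\sum_{d\in D}\phi(d)\big|$ by $\sqrt k$.

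The key step, which I expect to be the crux, is this character-sum estimate, obtained from the group-algebra translation of the RDS property. Identifying a subset $S\subseteq G$ with the element $\sum_{s\in S}s$ of the group algebra $\mathbb{Z}[G]$ and writing $D^{(-1)}=\sum_{d\in D}(-d)$, the defining condition --- each element of $G\setminus N$ has exactly $\lambda$ difference representations and no nonzero element of $N$ has any --- becomes the identity $D\,D^{(-1)} = k + \lambda(G-N)$, where $k$ stands for $k$ times the identity element. Extending $\phi$ to a ring homomorphism $\mathbb{Z}[G]\to\mathbb{C}$ and using $\phi(D^{(-1)})=\overline{\phi(D)}$ yields $|\phi(D)|^2 = k + \lambda\big(\phi(G)-\phi(N)\big)$. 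Because $\phi$ is nontrivial, $\phi(G)=0$ by character orthogonality; and $\phi(N)=v$ when $\phi$ restricts trivially to $N$ and $\phi(N)=0$ otherwise. Hence $|\phi(D)|^2$ equals either $k-\lambda v$ or $k$, in both cases at most $k$, so $\big|\sum_{d\in D}\phi(d)\big|\le\sqrt k$ and $|\mathbf{C}_\chi\mathbf{C}_\psi^H|\le\sqrt{1/k}$.

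Combining the three cases, every pairwise inner product has magnitude at most $\sqrt{1/k}$, and the character--basis pairs attain it, so $I_{\max}(\mathcal{C}_D)=\sqrt{1/k}$, completing the proof. The only genuinely nontrivial ingredient is the group-algebra computation together with the character orthogonality relations; everything else is bookkeeping about norms and cardinalities.
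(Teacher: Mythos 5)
Your proof is correct. Note that the paper does not prove this statement itself---it is quoted directly from Zhou and Tang \cite{ZT2011}---but your argument (reducing to the character sum $\phi(D)$ for a nontrivial character $\phi=\chi\psi^{-1}$, applying the group-ring identity $DD^{(-1)}=k+\lambda(G-N)$ to get $|\phi(D)|^2\in\{k,\,k-\lambda v\}$, and handling the pairs involving the standard basis by inspection) is precisely the standard proof of that theorem, essentially the one given in the cited reference.
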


In particular, we have the following corollary.
\begin{Cor}\label{Cor_codebook}
Let $D$ be a $(q, q, q, 1)$ relative difference set in $G$ relative to $N$.
Then $\mathcal{C}_{D}$ of \eqref{eq_RDS} is a
$(q^2+q, q)$ codebook with $I_{\max}(\mathcal{C}_{D})=\sqrt{\frac{1}{q}}$,
 which is an optimal codebook meeting the Levenstein bound \eqref{eq_Lev_boundC}.
\end{Cor}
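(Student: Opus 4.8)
The plan is to derive the two structural claims directly from the preceding theorem of Zhou and Tang, and then to establish optimality by a single algebraic comparison with the Levenstein bound in \eqref{eq_Lev_boundC}. No new machinery is needed: the whole argument is a specialization followed by a short computation.

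First I would apply the theorem on codebooks from relative difference sets to the case $u=v=k=q$ and $\lambda=1$. Recalling that a $(q,q,q,1)$ relative difference set has index $u=q$, forbidden subgroup order $v=q$, and cardinality $k=q$, the theorem immediately gives that $\mathcal{C}_D$ is a $(uv+k,\,k)=(q^2+q,\,q)$ codebook and that $I_{\max}(\mathcal{C}_D)=\sqrt{1/k}=\sqrt{1/q}$. This settles both the parameters and the value of the maximum crosscorrelation amplitude, leaving only optimality to verify.

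For optimality I would set $N=q^2+q$ and $K=q$ and check that $\mathcal{C}_D$ attains the Levenstein bound with equality. The hypothesis $N>K^2$ of the Levenstein lemma holds since $q^2+q>q^2$ for every $q\geq 2$, so the bound is indeed applicable. It then remains to evaluate the right-hand side of \eqref{eq_Lev_boundC}: the numerator is $2N-K^2-K=2(q^2+q)-q^2-q=q^2+q$, while the denominator is $(K+1)(N-K)=(q+1)q^2$. Hence the bound equals $\sqrt{(q^2+q)/(q^2(q+1))}=\sqrt{q(q+1)/(q^2(q+1))}=\sqrt{1/q}$, which coincides exactly with the value $I_{\max}(\mathcal{C}_D)=\sqrt{1/q}$ obtained above. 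Therefore the Levenstein bound is met with equality and $\mathcal{C}_D$ is optimal.

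I do not expect a genuine obstacle here. The only points requiring any care are the algebraic simplification of the Levenstein expression, where the factor $q+1$ must cancel so that the bound collapses precisely to $\sqrt{1/q}$, and the verification of the side condition $N>K^2$ that makes the Levenstein lemma available in the first place.
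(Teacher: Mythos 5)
Your proposal is correct and follows exactly the route the paper intends: the corollary is stated as an immediate specialization of the Zhou--Tang theorem with $u=v=k=q$, $\lambda=1$, and the optimality claim amounts to precisely your verification that for $N=q^2+q$, $K=q$ the condition $N>K^2$ holds and the right-hand side of \eqref{eq_Lev_boundC} simplifies to $\sqrt{1/q}$. The paper omits this computation (presenting the corollary without proof), but your argument fills it in correctly with no gaps.
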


For $q$ odd, a $(q, q, q, 1)$ RDS is corresponding to a planar function over $\gf_q$.
Optimal codebooks from planar functions were originally presented by Ding and Yin \cite{DingYin}.
However, for $q$ even, pseudo-planar functions and the corresponding optimal codebooks
seem not to be widely known by the codebook researchers.
For  others (known) codebooks meeting the
Levenshtein bound, please refer to  \cite{XDM}\cite{ZDL14} and the references therein.

To write explicitly the codebook from a pseudo-planar function, one need to write explicitly
the characters over the underlying group, the additional group of the Galois ring $GR(4^n)$.
This was done by K. Abdukhalikov in the language of \emph{mutually unbiased base (MUB)} \cite{Abdukhalikov}.
A set of MUBs in the Hilbert space $\mathbb{C}_n$ is defined as
a set of orthonormal bases $\{B_0, B_1, \cdots, B_r\}$ of the space such that the square of the
absolute value of the inner product $|(x, y)|^2$ is equal to $1/n$ for any two vectors $x, y$
 from distinct bases. Mutually
unbiased bases have important applications in quantum physics \cite{WoottersFields}.
Recently it was discovered that MUBs are very closely related or
even equivalent to other problems in various parts of mathematics, such as algebraic
combinatorics, finite geometry, discrete mathematics, coding theory, metric geometry,
sequences, and spherical codes.

There is no general classification of MUBs. The main open problem in this area is
to construct a maximal number of MUBs for any given $n$. It is known that the maximal
set of MUBs of $\mathbb{C}_n$ consists of at most $n + 1$ bases, and sets attaining this bound are
called complete sets of MUBs. Constructions of complete sets of MUBs  are known only for
prime power dimensions. Even for the smallest non-prime power dimension six the
problem of finding a maximal set of MUBs is extremely hard  and remains open
after more than 30 years. For known constructions of MUBs and their link with
the complex Lie algebra $sl_n(\mathbb{C})$, please refer to \cite{Abdukhalikov}
and the references therein. Particularly, it was shown that
pseudo-planar functions over $\gf_{2^n}$ can be used to construct complete sets of
MUBs in $\mathbb{C}^{2^n}$.

\begin{Th}\cite[Theorem 8]{Abdukhalikov}\label{th_Abdukhalikov}
Let $F$ be a pseudo-planar function over $\gf_{2^n}$. Then the following
forms a complete set of MUBs:
$$B_{\infty}=\{e_{w}| w \in \gf_{2^n}\}, \ \ B_{m}=\{b_{m, v}| v \in \gf_{2^n}\}, m \in \gf_{2^n}, $$
$$b_{m, v} = \frac{1}{\sqrt{2^n}} \sum_{w\in \gf_{2^n}} \omega^{\tr_R\left(\widehat{m}(\widehat{w}^2 + 2F(\widehat{w})) + 2 \widehat{v}\widehat{w} \right)} e_{w}, $$
where $B_{\infty}=\{e_{w}| w \in \gf_{2^n}\}$ is the standard basis of the $2^n$-dimensional Hilbert space,
$\omega=\sqrt{-1}$ is the primitive $4$-root of unity, and $\widehat{m}$ is the Teichm$\ddot{\text{u}}$ller lift of $m$.
\end{Th}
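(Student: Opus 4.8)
The plan is to verify directly the three defining properties of a complete set of MUBs in $\mathbb{C}^{2^n}$. First a counting check: the proposed system consists of the single basis $B_\infty$ together with the $2^n$ bases $B_m$, one for each $m\in\gf_{2^n}$, giving $2^n+1$ orthonormal systems in dimension $2^n$. Since $2^n+1$ is the maximal possible number of MUBs, it suffices to show that (i) each $B_m$ is an orthonormal basis, (ii) $B_\infty$ is unbiased with respect to every $B_m$, and (iii) $B_m$ and $B_{m'}$ are mutually unbiased for $m\neq m'$. Condition (ii) is immediate, since every coordinate of each $b_{m,v}$ has modulus $2^{-n/2}$, whence $|(e_w,b_{m,v})|^2=1/2^n$ for all $w$.

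For (i) and (iii) I would reduce the relevant inner products to exponential sums over $\gf_{2^n}$. Expanding $(b_{m,v},b_{m',v'})$ and using additivity of $\tr_R$, the quadratic part $\widehat m(\widehat w^2+2F(\widehat w))$ is replaced by $(\widehat m-\widehat{m'})(\widehat w^2+2F(\widehat w))$; set $c=\widehat m-\widehat{m'}$ and $d=\widehat v-\widehat{v'}$, with reductions $\bar c=m+m'$ and $\bar d=v+v'$ in $\gf_{2^n}$. For (i) one has $c=0$, and the surviving sum is $\sum_{w}\omega^{\tr_R(2d\widehat w)}=\sum_{w\in\gf_{2^n}}(-1)^{\tr_{n/1}(\bar d\,w)}$, using $\omega^{\tr_R(2\widehat z)}=(-1)^{\tr_{n/1}(z)}$ (from $\omega^2=-1$). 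This equals $2^n$ when $v=v'$ and $0$ otherwise, so $B_m$ is orthonormal; note this step does not use pseudo-planarity.

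Condition (iii) is the heart. Set $S=2^n(b_{m,v},b_{m',v'})=\sum_w\omega^{\tr_R(c\widehat w^2+2cF(\widehat w)+2d\widehat w)}$ with $\bar c\neq0$, and compute $|S|^2=\sum_{w,w'}(\cdots)$ by the standard difference trick, substituting $w'=w+a$ and summing over $a,w\in\gf_{2^n}$. Here I would invoke the Galois-ring identities for the Teichm\"uller lift, namely $\widehat{w+a}=\widehat w+\widehat a+2\sqrt{\widehat w\widehat a}$ and hence $\widehat{w+a}^2=\widehat w^2+\widehat a^2+2\widehat w\widehat a$, together with the reductions $2\widehat x\widehat y=2\widehat{xy}$ and $2c=2\widehat{\bar c}$. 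After these simplifications the exponent splits as a phase $\omega^{\tr_R(-c\widehat a^2)}$ depending only on $a$, multiplied by $(-1)^{\tr_{n/1}\left(\bar c(F(w+a)+F(w)+aw)+\bar d\,a\right)}$. The inner factor exhibits precisely the pseudo-planar combination $F(w+a)+F(w)+aw$ of Definition \ref{def_PN}.

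Summing over $w$ for fixed $a$, whenever $a\neq0$ the quantity $\bar c(F(w+a)+F(w)+aw)$ runs over a permutation of $\gf_{2^n}$ (by pseudo-planarity of $F$ and $\bar c\neq0$), so the inner additive character sum vanishes; only $a=0$ survives, where $F(w)+F(w)+0=0$ and the sum contributes $2^n$. Thus $|S|^2=2^n$ and $|(b_{m,v},b_{m',v'})|^2=1/2^n$, establishing (iii). I expect the main obstacle to be the Galois-ring bookkeeping — verifying the Teichm\"uller lift and carry identities and tracking which contributions survive multiplication by $2$ — rather than the final, conceptually clean appeal to pseudo-planarity.
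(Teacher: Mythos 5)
Your verification is correct, but there is nothing in the paper to compare it against: the paper states this theorem purely as a citation of \cite[Theorem 8]{Abdukhalikov} and gives no proof of it, so your self-contained argument is a genuine addition rather than a variant of an internal proof. The delicate Galois-ring steps all check out: $\widehat{x+y}=\widehat{x}+\widehat{y}+2\sqrt{\widehat{x}\widehat{y}}$ gives $\widehat{w+a}^{\,2}\equiv\widehat{w}^{\,2}+\widehat{a}^{\,2}+2\widehat{w}\widehat{a}\pmod{4}$, and together with $\widehat{x}\widehat{y}=\widehat{xy}$, $2c=2\widehat{\bar c}$ and $\omega^{\tr_R(2\widehat{z})}=(-1)^{\tr_{n/1}(z)}$, the $w$-dependent part of the exponent in $|S|^2$ collapses to $(-1)^{\tr_{n/1}\left(\bar c(F(w+a)+F(w)+aw)+\bar d a\right)}$; pseudo-planarity (Definition \ref{def_PN}) together with $\bar c=m+m'\neq 0$ then kills the inner sum for every $a\neq 0$, so $|S|^2=2^n$ exactly. (The sign discrepancy in your phase, $\omega^{\tr_R(-c\widehat{a}^{\,2})}$ versus $\omega^{\tr_R(c\widehat{a}^{\,2})}$, is immaterial since that factor has modulus one and is independent of $w$.) One remark on how your route relates to machinery the paper does include: your computation of $|S|^2$ is, in substance, a re-proof of the Schmidt--Zhou correspondence quoted in Section II.A, namely that $F$ is pseudo-planar if and only if $D=\{x+2\sqrt{F(x)}:x\in\mathcal{T}\}$ is a $(2^n,2^n,2^n,1)$-RDS in $GR(4^n)$ \cite{SchmidtZhou}; the vanishing of the difference character sums for all $a\neq 0$ and all $\bar c\neq 0$ is exactly the RDS property, so you could shorten case (iii) by invoking that cited theorem instead of redoing the bookkeeping. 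Your observation that orthonormality of each $B_m$ (case (i)) uses nothing about $F$ is also correct.
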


Since $\{B_{\infty}, B_{m}, m \in \gf_{2^n}\}$ forms a complete set of MUB, the square of the
absolute value of the inner product $|(x, y)|^2$ is equal to $1/2^n$ for any two vectors $x,
y$ from distinct bases. Then the following result follows directly from \eqref{eq_Lev_boundC},
which give  explicit expression  of  the codebook in Corollary \ref{Cor_codebook}.
\begin{Prop}\label{prop_codebook}
Let $F$, $B_{\infty}$ and $B_{m}$ be defined as in Theorem \ref{th_Abdukhalikov}, and let
$C=B_{\infty}\cup B_{m}$. Then $\mathcal{C}$ is an optimal $(2^{2n} + 2^n, 2^n)$  complex codebook meeting
Levenstein bound with alphabet size $6$.
\end{Prop}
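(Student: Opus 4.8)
The plan is to combine the explicit MUB construction in Theorem \ref{th_Abdukhalikov} with the Levenstein bound of \eqref{eq_Lev_boundC}, reducing everything to a direct parameter count. First I would recall that since $F$ is pseudo-planar over $\gf_{2^n}$, Theorem \ref{th_Abdukhalikov} guarantees that $\{B_\infty, B_m : m\in\gf_{2^n}\}$ is a \emph{complete} set of MUBs in $\mathbb{C}^{2^n}$. A complete set of MUBs in dimension $2^n$ consists of exactly $2^n+1$ orthonormal bases, each containing $2^n$ unit vectors, so the total number of vectors in $\mathcal{C}=B_\infty\cup\bigcup_m B_m$ is $(2^n+1)\cdot 2^n = 2^{2n}+2^n$. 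Each vector is a unit-norm vector in $\mathbb{C}^{2^n}$, so $\mathcal{C}$ is an $(N,K)$ codebook with $N=2^{2n}+2^n$ and $K=2^n$, which already gives the claimed size.

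Next I would analyze the maximum crosscorrelation amplitude $I_{\max}(\mathcal{C})$. For two vectors lying in the \emph{same} basis, orthonormality forces the inner product to be $0$. For two vectors from \emph{distinct} bases, the defining MUB property gives $|(x,y)|^2 = 1/2^n$, hence $|(x,y)| = 2^{-n/2} = \sqrt{1/2^n}$. Since the pairs from distinct bases are the only ones contributing a nonzero correlation, we obtain $I_{\max}(\mathcal{C}) = \sqrt{1/2^n}$. To establish optimality I would substitute $N=2^{2n}+2^n$ and $K=2^n$ into the right-hand side of \eqref{eq_Lev_boundC}. Writing $Q=2^n$ so that $N=Q^2+Q$ and $K=Q$, the bound reads
$$
\sqrt{\frac{2(Q^2+Q)-Q^2-Q}{(Q+1)(Q^2+Q-Q)}} = \sqrt{\frac{Q^2+Q}{(Q+1)Q^2}} = \sqrt{\frac{Q(Q+1)}{(Q+1)Q^2}} = \sqrt{\frac{1}{Q}},
$$
which equals $\sqrt{1/2^n} = I_{\max}(\mathcal{C})$; one must also check $N>K^2$, i.e. $Q^2+Q>Q^2$, which holds trivially. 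Thus the Levenstein bound is met with equality and $\mathcal{C}$ is optimal.

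Finally, for the alphabet size, I would inspect the entries appearing in the vectors. The vectors of $B_\infty$ are standard basis vectors with entries in $\{0,1\}$. The vectors $b_{m,v}$ have entries of the form $2^{-n/2}\,\omega^{e}$ where $\omega=\sqrt{-1}$ and the exponent $e$ lives in $\Z_4$ by the structure of $\tr_R$ over the Galois ring $GR(4^n)$; hence these nonzero entries take values in $\{2^{-n/2}\omega^{j}: j\in\{0,1,2,3\}\}$, four distinct complex numbers, together with the possibility of a $0$ entry. Collecting the symbols from both families gives the alphabet $\{0,\,1,\,2^{-n/2},\,2^{-n/2}\omega,\,2^{-n/2}\omega^2,\,2^{-n/2}\omega^3\}$, of size $6$.

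I do not anticipate a serious obstacle here: the result is essentially a bookkeeping consequence of Theorem \ref{th_Abdukhalikov} and an algebraic simplification of the Levenstein expression. The one place warranting care is the alphabet-size claim, since it depends on verifying that the Galois-ring trace $\tr_R$ in the exponent genuinely lands in $\Z_4$ (so only four phase values occur) and on correctly accounting for the $0$ and $1$ symbols contributed by $B_\infty$; the rest is the routine algebraic identity verified above.
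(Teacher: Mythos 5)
Your proposal is correct and takes essentially the same route as the paper, which obtains Proposition \ref{prop_codebook} directly from the complete-MUB property of Theorem \ref{th_Abdukhalikov} together with the Levenstein bound \eqref{eq_Lev_boundC}: the paper leaves the vector count $(2^n+1)\cdot 2^n=2^{2n}+2^n$, the algebraic verification that the bound equals $\sqrt{1/2^n}$, and the alphabet inspection implicit, and your write-up just fills in these routine details correctly (including the needed check $N>K^2$ and the fact that $\tr_R$ takes values in $\Z_4$, so the nonzero entries of the $b_{m,v}$ contribute only four phases, which with the $0$ and $1$ entries of $B_{\infty}$ gives alphabet size $6$).
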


As pointed out by Zhou, Ding and Li \cite{ZDL14},
codebooks achieving the Levenstein bound  can be used in compressed sensing.
Compressed sensing is a novel sampling theory, which provides
a fundamentally new approach to data acquisition. A central
problem in compressed sensing is the construction of the sensing
matrix. For more information on the theory of compressed
sensing, the reader is referred to Donoho \cite{Donoho} and Cand\`{e}s and
Tao \cite{CandesTao}. Recently, Li, Gao,  Ge et. al. \cite{LGGZ14}  found that codebooks
achieving the Levenstein bound can be used to construct deterministic
sensing matrices with smallest coherence. The numerical
experiments conducted in \cite{LGGZ14} showed that the sensing
matrices from some known codebooks meeting the Levenstein
bound have a good performance. Since a pseudo-planar function leads to an optimal codebook
meeting the Levenstein bound, it would be interesting to
investigate the application of these codebooks constructed in this paper using
the framework developed in \cite{LGGZ14}.

Hence a pseudo-planar function over $\gf_{2^n}$ not only gives rise to a finite projective plane
and a relative difference set, it also leads to a complete set of MUB in $\mathbb{C}^{2^n}$, an optimal $(2^{2n} + 2^n, 2^n)$  complex codebook meeting
the Levenstein bound, and compressed sensing matrices with low coherence. These interesting links are
the motivations for the author to study the construction of pseudo-planar functions.

\subsection{Other Results}

In this subsection, we review some necessary definitions and results for future use.
For a nonzero element $\alpha$ in $\gf_{2^n}$, $\text{Ord}(\alpha)$ denotes the multiplicative
order of $\alpha$, that is, the smallest positive integer $t$ such that $\alpha^t=1$.
Let $k$ be a divisor of $n$. Then for $\alpha\in \gf_{2^n}$, the trace $\tr_{n/k}(\alpha)$ of $\alpha$ over $\gf_{2^k}$ is defined by
$$\tr_{n/k}(\alpha)=\alpha + \alpha^{2^k} + \alpha^{2^{2k}} +\cdots + \alpha^{2^{n-k}},$$
the norm  $\text{N}_{n/k}(\alpha)$ of $\alpha$ over $\gf_{2^k}$ is defined by
$$\text{N}_{n/k}(\alpha)=\alpha \cdot \alpha^{2^k} \cdot \alpha^{2^{2k}} \cdot \cdots \cdot \alpha^{2^{n-k}} = \alpha^{\frac{2^n-1}{2^k-1}}.$$
\begin{Lemma}\cite{Lidl}
  \label{redu-condition}
  For any $a,b\in\gf_{2^n}$ and $a\ne 0$, the polynomial $p(x)=x^2+ax+b\in\gf_{2^n}[x]$ is irreducible
  if and only if $\tr_{n/1}(b/a^2)=1$.
\end{Lemma}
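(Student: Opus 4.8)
The plan is to reduce $p(x)$ to Artin--Schreier form and then identify the solvability condition with the absolute trace. First I would normalize the polynomial by the substitution $x = ay$, which is legitimate since $a\neq 0$: this gives $p(ay)=a^2\bigl(y^2+y+b/a^2\bigr)$, so $p(x)$ is irreducible over $\gf_{2^n}$ if and only if $g(y):=y^2+y+c$ is, where $c:=b/a^2$. Because $g$ has degree two, it is irreducible precisely when it has no root in $\gf_{2^n}$, so the whole problem becomes: for which $c$ does the equation $y^2+y=c$ have a solution in $\gf_{2^n}$?

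Next I would study the map $\varphi:\gf_{2^n}\rightarrow\gf_{2^n}$ given by $\varphi(y)=y^2+y$. Since squaring is the Frobenius, $\varphi$ is $\gf_2$-linear, and its kernel is $\{y:y(y+1)=0\}=\gf_2$, of size $2$. By rank-nullity applied to $\varphi$ as a $\gf_2$-linear endomorphism of the $n$-dimensional space $\gf_{2^n}$, the image $\image(\varphi)$ is therefore an $\gf_2$-subspace of index $2$, i.e.\ of cardinality $2^{n-1}$. The condition ``$y^2+y=c$ is solvable'' is exactly ``$c\in\image(\varphi)$''.

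The key step is to identify $\image(\varphi)$ with $\ker\tr_{n/1}$. For the inclusion I would use that the absolute trace is invariant under the Frobenius, $\tr_{n/1}(y^2)=\tr_{n/1}(y)$, whence $\tr_{n/1}(\varphi(y))=\tr_{n/1}(y^2)+\tr_{n/1}(y)=0$ for every $y$, so $\image(\varphi)\subseteq\ker\tr_{n/1}$. Since $\tr_{n/1}$ is a nonzero $\gf_2$-linear functional, its kernel also has cardinality $2^{n-1}$; the two subspaces then coincide by a dimension count. Hence $c\in\image(\varphi)$ if and only if $\tr_{n/1}(c)=0$.

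Assembling these facts, $g(y)$ has a root in $\gf_{2^n}$---equivalently is reducible---if and only if $\tr_{n/1}(c)=0$, so $g$ is irreducible if and only if $\tr_{n/1}(c)=1$. Substituting $c=b/a^2$ yields the claimed criterion $\tr_{n/1}(b/a^2)=1$. The argument is essentially routine; the only point demanding a little care is the equality $\image(\varphi)=\ker\tr_{n/1}$, which I expect to be the main (though mild) obstacle, since it rests on combining the Frobenius-invariance of the trace (for the inclusion) with the matching cardinalities (for the reverse), rather than on producing an explicit solution of $y^2+y=c$.
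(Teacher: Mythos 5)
Your proof is correct. The paper itself gives no proof of this lemma---it is quoted from Lidl and Niederreiter---and your argument (normalizing to Artin--Schreier form $y^2+y+c$, then identifying the image of $y\mapsto y^2+y$ with the kernel of the absolute trace via $\gf_2$-linearity and a cardinality count) is essentially the standard proof found in that reference, so there is nothing further to reconcile.
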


\begin{Lemma}\cite[Theorem 7.7]{Lidl}
\label{le_PP_dual}
A mapping $f : \gf_{2^n}\rightarrow \gf_{2^n}$ is a permutation polynomial of $\gf_{2^n}$
if and only if for every nonzero $b\in \gf_{2^n}$,
\begin{equation*}
    \sum\limits_{x\in \gf_{2^n}} (-1)^{\tr_{n/1} (bf(x))} = 0.
\end{equation*}
\end{Lemma}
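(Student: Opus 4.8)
The plan is to prove both implications through the orthogonality relations of the additive characters of $\gf_{2^n}$. Recall that for $b\in\gf_{2^n}$ the map $\chi_b(x)=(-1)^{\tr_{n/1}(bx)}$ is an additive character, and since $\tr_{n/1}$ is $\gf_2$-linear and onto $\gf_2$, for every fixed $b\ne 0$ the value $\tr_{n/1}(bx)$ equals $0$ for exactly $2^{n-1}$ elements $x$ and $1$ for the remaining $2^{n-1}$. This yields the basic orthogonality relation
\[
\sum_{x\in\gf_{2^n}}(-1)^{\tr_{n/1}(bx)}=\begin{cases}2^n,& b=0,\\ 0,& b\ne 0,\end{cases}
\]
which is essentially the only external fact the argument needs. (Here $-1$ is the complex number $-1$; the field itself has characteristic two, so inside the trace all signs are immaterial, and in particular $\tr_{n/1}(-bc)=\tr_{n/1}(bc)$.)

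For the forward implication, I would assume $f$ is a permutation of $\gf_{2^n}$. Then as $x$ ranges over $\gf_{2^n}$ the image $f(x)$ ranges over all of $\gf_{2^n}$ exactly once, so for each nonzero $b$ one simply reindexes the sum by $y=f(x)$ and applies the orthogonality relation:
\[
\sum_{x\in\gf_{2^n}}(-1)^{\tr_{n/1}(bf(x))}=\sum_{y\in\gf_{2^n}}(-1)^{\tr_{n/1}(by)}=0.
\]

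For the converse I would count preimages by Fourier inversion. Writing $N(c)=|\{x\in\gf_{2^n}:f(x)=c\}|$ and using that $\tfrac{1}{2^n}\sum_b(-1)^{\tr_{n/1}(b(f(x)-c))}$ is the indicator of $f(x)=c$, the orthogonality relation gives
\[
N(c)=\frac{1}{2^n}\sum_{b\in\gf_{2^n}}\sum_{x\in\gf_{2^n}}(-1)^{\tr_{n/1}(b(f(x)-c))}=1+\frac{1}{2^n}\sum_{b\ne0}(-1)^{\tr_{n/1}(bc)}\sum_{x\in\gf_{2^n}}(-1)^{\tr_{n/1}(bf(x))},
\]
where the leading $1$ is the contribution of the term $b=0$. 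Under the hypothesis that the inner character sum vanishes for every nonzero $b$, this collapses to $N(c)=1$ for all $c\in\gf_{2^n}$, which is precisely the statement that $f$ is a bijection.

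Since the result is entirely classical, I do not anticipate any genuine obstacle; the only points requiring care are bookkeeping the two roles played by ``$-1$'' and separating off the $b=0$ term cleanly, so that the permutation property is seen to be equivalent to the simultaneous vanishing of all the nontrivial character sums $\sum_x(-1)^{\tr_{n/1}(bf(x))}$.
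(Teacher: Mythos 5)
Your proof is correct, and it is the standard character-orthogonality argument: the paper itself gives no proof of this lemma, citing it directly from Lidl--Niederreiter (Theorem 7.7), whose proof is essentially the one you reproduce (reindexing for the forward direction, Fourier inversion to count preimages $N(c)$ for the converse). Both directions are handled cleanly, including the separation of the $b=0$ term and the characteristic-two simplification $\tr_{n/1}(-bc)=\tr_{n/1}(bc)$, so there is nothing to correct.
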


\begin{Lemma}\cite[P. 362]{Lidl}
\label{le_PP_det}
Let $q$ be a prime power and $\gf_{q^t}$ be an extension of $\gf_q$. Then the
linearized polynomial
$$L(x)=\sum_{i=0}^{t-1} a_i x^{q^i}\in \gf_{q^t}[x]$$
 is a permutation polynomial of $\gf_{q^t}$
if and only if the Dickson determinant of $a_0, a_1, \cdots, a_{t-1}$ is nonzero,
that is,
\begin{equation*}
    \det \left(\begin{array}{ccccc}
                       a_0 & a_1 & a_2 & \cdots & a_{t-1} \\
                       a_{t-1}^q & a_0^q & a_1^q  & \cdots & a_{t-2}^q \\
                      \vdots &  \vdots &  \vdots & &   \vdots\\
                       a_1^{q^{t-1}} & a_2^{q^{t-1}} & a_3^{q^{t-1}} & \cdots  &  a_0^{q^{t-1}}
                     \end{array}
  \right) \neq 0.
\end{equation*}

\end{Lemma}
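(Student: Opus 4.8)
The plan is to read $L$ as a $\gf_q$-linear endomorphism of $\gf_{q^t}$ and to tie its kernel to the null space of the displayed Dickson matrix $M$, whose $(k,j)$-entry is $a_{j-k}^{q^k}$ with indices read modulo $t$. Because $L$ is $\gf_q$-linear, it permutes $\gf_{q^t}$ if and only if it is injective, i.e.\ if and only if $\ker L=\{0\}$, where $\ker L$ is a $\gf_q$-subspace of the $t$-dimensional space $\gf_{q^t}$. It therefore suffices to prove
$$\dim_{\gf_{q^t}} \mathcal{S} \;=\; \dim_{\gf_q}\ker L, \qquad \mathcal{S}:=\{y\in \gf_{q^t}^{\,t}: My=0\},$$
since then $\det M\neq 0 \iff \dim_{\gf_{q^t}}\mathcal{S}=0 \iff \ker L=\{0\} \iff L \text{ permutes } \gf_{q^t}$.

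The easy inclusion comes from raising $L(x)=0$ to the $q^k$-th power: the term $a_i x^{q^i}$ becomes $a_i^{q^k}x^{q^{i+k}}$, so every root $x$ of $L$ yields the vector $(x, x^q,\dots,x^{q^{t-1}})\in\mathcal S$. This already gives the forward direction ($L$ not injective $\Rightarrow \det M=0$). The substance is the reverse dimension count, for which I would introduce the $\sigma$-semilinear \emph{Frobenius shift} $\Phi$ on $\gf_{q^t}^{\,t}$ defined by $\Phi(y)_j=y_{j-1}^{\,q}$ (indices modulo $t$), where $\sigma$ denotes the Frobenius $x\mapsto x^q$. A direct check shows that $\Phi$ sends the $k$-th defining equation of $\mathcal S$ to the $q$-th power of the $(k-1)$-st, hence $\Phi(\mathcal S)\subseteq \mathcal S$; moreover $\Phi$ is invertible, is $\sigma$-semilinear (i.e.\ $\Phi(\lambda y)=\lambda^q\Phi(y)$), and satisfies $\Phi^t=\mathrm{id}$ since $y_j^{q^t}=y_j$. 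The fixed vectors of $\Phi$ are exactly those of the shape $(x,x^q,\dots,x^{q^{t-1}})$, and such a vector lies in $\mathcal S$ precisely when $L(x)=0$; thus the fixed space $\mathcal S^{\Phi}$ is $\gf_q$-isomorphic to $\ker L$.

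The key step, and the main obstacle, is then the semilinear Galois-descent fact (the additive/multiplicative form of Hilbert's Theorem~90, or Speiser's theorem): for any invertible $\sigma$-semilinear operator $\Phi$ on a finite-dimensional $\gf_{q^t}$-space $V$ with $\Phi^t=\mathrm{id}$, the $\gf_q$-space $V^{\Phi}$ of fixed vectors satisfies $\dim_{\gf_q}V^{\Phi}=\dim_{\gf_{q^t}}V$ (in fact a $\gf_q$-basis of $V^\Phi$ is a $\gf_{q^t}$-basis of $V$). Applying this with $V=\mathcal S$ gives $\dim_{\gf_{q^t}}\mathcal S=\dim_{\gf_q}\mathcal S^{\Phi}=\dim_{\gf_q}\ker L$, which closes the argument. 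I would prove the descent fact by fixing any $\gf_{q^t}$-basis, writing $\Phi$ as (an invertible matrix)$\circ\sigma$, so that $\Phi^t=\mathrm{id}$ becomes a norm-type cocycle condition on that matrix, and then trivializing the cocycle by a base change. Alternatively one may invoke the crossed-product identification $\gf_{q^t}\langle\sigma\rangle/(\sigma^t-1)\cong M_t(\gf_q)$, under which $L=\sum_i a_i\sigma^i$ becomes an honest $t\times t$ matrix over $\gf_q$ whose invertibility is recorded by the Dickson determinant. The bookkeeping of the cyclic index shifts modulo $t$ together with the semilinearity is where care is needed; the conceptual engine is purely Galois descent.
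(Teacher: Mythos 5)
The paper does not prove this lemma at all: it is quoted as a known result from Lidl--Niederreiter (p.~362), so there is no internal proof to compare against; the relevant comparison is with the standard textbook argument behind that citation. Your proof is correct. The entry identification $M_{k,j}=a_{(j-k)\bmod t}^{q^k}$, the observation that a root $x$ of $L$ yields the null vector $(x,x^q,\dots,x^{q^{t-1}})$ of $M$, the check that $\Phi(y)_j=y_{j-1}^q$ is an invertible $\sigma$-semilinear operator with $\Phi^t=\mathrm{id}$ stabilizing $\mathcal S$ (via $(M\Phi y)_k=\bigl((My)_{k-1}\bigr)^q$), the identification of the fixed space $\mathcal S^\Phi$ with $\ker L$, and the concluding dimension count are all sound; the one load-bearing ingredient you take as a black box is Speiser's theorem (Galois descent for semilinear operators), which you correctly isolate and which over finite fields follows from Hilbert 90 for $\mathrm{GL}_n$ or from Lang's theorem. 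The classical route is considerably lighter: fix an $\gf_q$-basis $\alpha_1,\dots,\alpha_t$ of $\gf_{q^t}$; since $L$ is $\gf_q$-linear, it permutes $\gf_{q^t}$ iff $L(\alpha_1),\dots,L(\alpha_t)$ are $\gf_q$-independent, which by the Moore determinant criterion happens iff $\det\bigl(L(\alpha_j)^{q^i}\bigr)\neq 0$; and the computation $L(\alpha_j)^{q^i}=\sum_k a_k^{q^i}\bigl(\alpha_j^{q^k}\bigr)^{q^i}$ shows that this matrix factors as $M\cdot\bigl(\alpha_j^{q^i}\bigr)_{i,j}$, where the second factor is the invertible Moore matrix of a basis, so the two determinants vanish together. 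In short, your route trades the elementary Moore-matrix lemma for Galois descent: it is conceptually cleaner (it exhibits the theorem as the statement that $M$ is the base change of $L$ to $\gf_{q^t}$) and generalizes, but it rests on a genuinely deeper auxiliary theorem, whereas the textbook proof is self-contained. One caution on your proposed alternative: the crossed-product identification $\gf_{q^t}\langle\sigma\rangle/(\sigma^t-1)\cong M_t(\gf_q)$ by itself only says $L$ acts as some $\gf_q$-matrix; tying the invertibility of that matrix to the nonvanishing of the Dickson determinant over $\gf_{q^t}$ requires essentially the same factorization or descent step, so that sketch as stated is not a shortcut.
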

%

%
%
%
%

\section{A New Approach to Constructing Quadratic pseudo-planar Functions over $\gf_{2^n}$}
%
%

\subsection{A General Family of Quadratic pseudo-planar Functions }

\begin{Th}\label{th_Gen}
Assume $n=tm (t\geq 2)$ and $q=2^m$. Let
\begin{equation}\label{eq_gen_Form}
    \begin{array}{ccc}
  F(x) &=& \sum\limits_{i=0}^{(t-1)m-1} c_{1, i}x^{2^i(q+1)} + \sum\limits_{i=0}^{(t-2)m-1} c_{2, i}x^{2^i(q^2+1)}  \\
   && \ \ \ \ + \cdots +
\sum\limits_{i=0}^{m-1} c_{t-1, i}x^{2^i(q^{t-1}+1)}\in \gf_{2^n}[x].
    \end{array}
\end{equation}
%
Then $F$ is pseudo-planar over $\gf_{2^n}$ if and only if
\begin{equation}\label{eq_detMb_Gen}
\det M_b =\left|\begin{array}{ccccc}
                       A_0 & A_1 & A_2 & \cdots & A_{t-1} \\
                       A_{t-1}^q & A_0^q & A_1^q  & \cdots & A_{t-2}^q \\
                      \vdots &  \vdots &  \vdots & &   \vdots\\
                       A_1^{q^{t-1}} & A_2^{q^{t-1}} & A_3^{q^{t-1}} & \cdots  &  A_0^{q^{t-1}}
                     \end{array}
  \right| \neq   0
\end{equation}
for any nonzero $b$ in $\gf_{2^n}$, where
\begin{equation}\label{eq_Aall}
    \left\{\begin{array}{lll}
             A_0 & = & b, \\
             A_1 &=&  \sum\limits_{i=0}^{(t-1)m-1}\left(c_{1,i}b\right)^{2^{n-i}} + \sum\limits_{i=0}^{m-1}\left(c_{t-1,i}b\right)^{2^{m-i}},  \\
             &\vdots&\\
             A_j & = & \sum\limits_{i=0}^{(t-j)m-1}\left(c_{j,i}b\right)^{2^{n-i}} + \sum\limits_{i=0}^{jm-1}\left(c_{t-j,i}b\right)^{2^{jm-i}}, \\
             &\vdots&\\
            A_{t-1} & = & \sum\limits_{i=0}^{m-1}\left(c_{t-1,i}b\right)^{2^{n-i}} + \sum\limits_{i=0}^{(t-1)m-1}\left(c_{1,i}b\right)^{2^{(t-1)m-i}}.
           \end{array}
\right.
\end{equation}
Moreover, we have
\begin{equation}\label{eq_Atj}
    A_j=A_{t-j}^{q^j}, \ \text{for all}\  1\leq j\leq t-1.
\end{equation}

\end{Th}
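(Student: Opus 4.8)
The plan is to follow the roadmap sketched in the introduction: reduce pseudo-planarity to the linearized permutation property of $\mathbb{L}_a(x)=F(x+a)+F(x)+F(a)+ax$, pass to a dual polynomial, and read off the Dickson determinant via Lemma~\ref{le_PP_det}. First I would compute $\mathbb{L}_a(x)$ explicitly for $F$ of the form \eqref{eq_gen_Form}. Since $F$ is quadratic (Dembowski-Ostrom), each term $c_{j,i}x^{2^i(q^j+1)}$ contributes a bilinear cross term to $F(x+a)+F(x)+F(a)$; expanding $(x+a)^{2^i(q^j+1)}$ and cancelling the pure-$x$ and pure-$a$ parts leaves, after adding $ax$, a linearized polynomial in $x$ whose coefficients are themselves (conjugated) linearized expressions in $a$. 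The key computational step is to collect these coefficients and recognize $\mathbb{L}_a(x)$ as a linearized polynomial over $\gf_q$ in the variable $x$, i.e. a $\gf_q$-linear map $\gf_{2^n}\to\gf_{2^n}$, so that $\mathbb{L}_a(x)=\sum_{j=0}^{t-1}\alpha_j(a)\,x^{q^j}$ for suitable $\alpha_j(a)$.

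The next step is the dualization. Following the stated definition, I would introduce the dual polynomial $\mathbb{L}_b^\ast(a)$ by using Lemma~\ref{le_PP_dual}: $\mathbb{L}_a$ is a permutation for every nonzero $a$ precisely when a certain character sum vanishes, and transposing the bilinear form $\tr_{n/1}(b\,\mathbb{L}_a(x))$ in $a$ and $x$ (using $\tr_{n/1}((cx)^{2^k})=\tr_{n/1}(cx)$ repeatedly to fold Frobenius powers back) exchanges the roles of $a$ and $b$. This produces a new linearized polynomial $\mathbb{L}_b^\ast(a)=\sum_{j=0}^{t-1}A_j\,a^{q^j}$ whose coefficients $A_j$ are exactly the expressions in \eqref{eq_Aall}: the $A_j$ arise by gathering, for each Frobenius exponent $q^j$, the contributions of $c_{j,i}$ (raised to $2^{n-i}$) together with the "wrapped-around" contributions of $c_{t-j,i}$ (raised to $2^{jm-i}$), which is the origin of the two sums in the formula for $A_j$. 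I would verify that $A_0=b$ is forced by the $ax$ term, which carries the identity coefficient. Because $\mathbb{L}_a$ permutes for all $a\neq0$ iff its dual $\mathbb{L}_b^\ast$ permutes for all $b\neq0$, Lemma~\ref{le_PP_det} applied to $\mathbb{L}_b^\ast$ yields precisely the determinant condition \eqref{eq_detMb_Gen}.

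For the final assertion \eqref{eq_Atj}, namely $A_j=A_{t-j}^{q^j}$, I would argue directly from \eqref{eq_Aall}. Raising $A_{t-j}$ to the $q^j$-th power sends the two sums of $A_{t-j}$ into sums with exponents shifted by $jm$; using $\gamma^{q^t}=\gamma^{2^n}=\gamma$ for $\gamma\in\gf_{2^n}$ to reduce exponents modulo $n$, the first sum of $A_{t-j}^{q^j}$ matches the second sum of $A_j$ and vice versa, after a reindexing of the summation variable $i$. This is the symmetry that makes $M_b$ a genuine Dickson matrix whose rows are successive $\gf_q$-conjugates, and it is what the phrase "with additional properties which will simplify the later calculation" refers to.

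I expect the main obstacle to be the bookkeeping in the dualization step: correctly tracking which Frobenius power $2^{n-i}$ versus $2^{jm-i}$ each coefficient acquires when the bilinear form is transposed, and ensuring that the ranges of $i$ (namely $0\le i\le (t-j)m-1$ and $0\le i\le jm-1$) partition the original exponents without overlap or omission. The term $c_{j,i}x^{2^i(q^j+1)}$ has a symmetric exponent, $2^i q^j + 2^i$, and the subtlety is that the same monomial can be written as $x^{2^{i'}(q^{j'}+1)}$ for a different pair $(i',j')$ after reduction modulo $2^n-1$; I would need to confirm that the index conventions in \eqref{eq_gen_Form} avoid this redundancy so that each $A_j$ is unambiguously defined. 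Once the index ranges are pinned down, establishing \eqref{eq_Atj} is a routine Frobenius computation, and the permutation-to-determinant reduction is immediate from the quoted lemmas.
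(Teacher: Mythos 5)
Your overall route is the same as the paper's: reduce pseudo-planarity to $\mathbb{L}_a(x)=F(x+a)+F(x)+F(a)+ax$ being a linearized permutation polynomial, dualize via the character-sum criterion (Lemma~\ref{le_PP_dual}), apply the Dickson determinant criterion (Lemma~\ref{le_PP_det}) to $\mathbb{L}^\ast_b(a)=\sum_{j=0}^{t-1}A_j a^{q^j}$, and verify \eqref{eq_Atj} by a direct Frobenius/reindexing computation. Your bookkeeping worries (the ranges of $i$ and possible redundancy among the exponents $2^i(q^j+1)$) are indeed resolved by the index conventions of \eqref{eq_gen_Form}, and your duality equivalence (``$\mathbb{L}_a$ permutes for all $a\neq 0$ iff $\mathbb{L}^\ast_b$ permutes for all $b\neq 0$'') is exactly how the paper argues.

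However, what you call the ``key computational step'' contains a genuine error: $\mathbb{L}_a(x)$ is \emph{not} a $q$-linearized polynomial in $x$, i.e.\ it cannot be written as $\sum_{j=0}^{t-1}\alpha_j(a)\,x^{q^j}$. Expanding \eqref{eq_gen_Form}, the term $c_{j,i}x^{2^i(q^j+1)}$ contributes $c_{j,i}\bigl(a^{2^i}x^{2^{jm+i}}+a^{2^{jm+i}}x^{2^i}\bigr)$, so the $x$-exponents of $\mathbb{L}_a$ run over $2^k$ for essentially all $0\le k\le n-1$, not only over the powers $q^j$; the polynomial is merely $\gf_2$-linear in $x$. (For instance, with $t=2$, $m\ge 2$ and $F(x)=c_1x^{2(q+1)}$ one gets $\mathbb{L}_a(x)=c_1(a^{2}x^{2q}+a^{2q}x^{2})+ax$, whose $x$-exponents $2q,2,1$ are not all powers of $q$.) If your step were true, Lemma~\ref{le_PP_det} could be applied to $\mathbb{L}_a$ directly and the whole dualization would be superfluous. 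The $t$-term $q$-linearized structure appears \emph{only after} dualizing: in each monomial $a^{2^i}x^{2^{jm+i}}$ the exponents of $a$ and $x$ differ by the factor $q^j$, so when $\tr_{n/1}\bigl(b\,\mathbb{L}_a(x)\bigr)$ is folded onto $x$, every $a$-exponent collapses to some $q^j$, yielding $\mathbb{L}^\ast_b(a)=\sum_{j=0}^{t-1}A_j a^{q^j}$ with the $A_j$ of \eqref{eq_Aall}. Your second paragraph states this correct conclusion, so the slip does not contaminate the rest of the plan; but as written, your first step would fail the moment you performed the expansion, and the point it misses --- that duality is precisely what trades an unwieldy $\gf_2$-linearized polynomial in $x$ for a $t$-term $q$-linearized polynomial in $a$, hence a $t\times t$ rather than $n\times n$ determinant --- is the heart of the theorem.
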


\begin{proof}
We only prove the first part. The second part can be verified directly from \eqref{eq_Aall}, that is,  the definitions of
$A_i, \ 0\leq i\leq t-1$.

It is clear that $F$ is pseudo-planar if and only if
\begin{eqnarray*}
   \mathbb{L}_a(x) &:=& F(x+a) + F(x) + F(a) + ax \\
   &=& \sum\limits_{i=0}^{(t-1)m-1} c_{1, i}\left(a^{2^i}x^{2^{m+i}} + a^{2^{m+i}}x^{2^i}\right) \\
    && \ + \sum\limits_{i=0}^{(t-2)m-1} c_{2, i}\left(a^{2^i}x^{2^{2m+i}} + a^{2^{2m+i}}x^{2^i}\right) \\
    && \ + \cdots \\
   & & \ + \sum\limits_{i=0}^{m-1} c_{t-1, i}\left(a^{2^i}x^{2^{(t-1)m+i}} + a^{2^{(t-1)m+i}}x^{2^i}\right)\\
   & & \  + ax
\end{eqnarray*}
is a linearized permutation polynomial over $\gf_{2^n}$ for any nonzero $a$ in $\gf_{2^n}$, or equivalently,
$\mathbb{L}_a(x)=0$ if and only if $x=0$ or $a=0$.

Instead of investigating $\mathbb{L}_a(x)$ directly, we turn to  studying  its dual
linearized polynomial. Thanks to the character theory, we can do this transformation as follows.

According to Lemma \ref{le_PP_dual},
$\mathbb{L}_a(x)$ is a linearized permutation polynomial over $\gf_{2^n}$ for any nonzero $a$ in $\gf_{2^n}$
if and only if for every nonzero $b\in \gf_{2^n}$,
\begin{equation*}
    0 = \sum\limits_{x\in \gf_{2^n}} (-1)^{\tr_{n/1} (b \mathbb{L}_a(x))} = \sum\limits_{x\in \gf_{2^n}} (-1)^{\tr_{n/1} (\mathbb{L}^\ast_b(a)  x )},
\end{equation*}
and if and only if
\begin{equation*}
    \mathbb{L}^\ast_b(a) \neq 0, \ \  \text{for all}\  a, b \in \gf_{2^n}^\ast,
\end{equation*}
where
\begin{eqnarray*}
   && \mathbb{L}^\ast_b(a)\\
  &=& \sum\limits_{i=0}^{(t-1)m-1} \left( (c_{1,i}a^{2^i}b)^{2^{(t-1)m-i}} + (c_{1, i}a^{2^{m+i}}b)^{2^{n-i}} \right) \\
   & & \ \  + \sum\limits_{i=0}^{(t-2)m-1} \left( (c_{2,i}a^{2^i}b)^{2^{(t-2)m-i}} + (c_{2,i}a^{2^{2m+i}}b)^{2^{n-i}}  \right) \\
   & & \ \ + \cdots \\
    & & \ \  + \sum\limits_{i=0}^{m-1} \left( (c_{t-1,i}a^{2^i}b)^{2^{m-i}} + (c_{t-1,i}a^{2^{(t-1)m+i}}b)^{2^{n-i}}  \right)  \\
    && \ \ + a b.
\end{eqnarray*}

Hence $F$ is pseudo-planar if and only if $\mathbb{L}^\ast_b(a)$
is a linearized permutation polynomial for any nonzero $b\in \gf_{2^n}$.

Then the result follows directly  from Lemma \ref{le_PP_det} and
\begin{eqnarray*}
   \mathbb{L}^\ast_b(a)&=& A_0\cdot a +  A_1\cdot a^{2^m}  + \cdots + A_{t-1}\cdot a^{2^{(t-1)m}},
\end{eqnarray*}
where $A_0, A_1, \cdots, A_{t-1}$ are defined in \eqref{eq_Aall}.
\end{proof}

A general family of quadratic pseudo-planar functions is constructed by Theorem \ref{th_Gen}.
Given a quadratic function $F$ in this family, a sufficient and necessary condition
for it to be pseudo-planar is presented. This condition is deduced from the permutation property
of the dual polynomial
$\mathbb{L}^\ast_b(a)$ of the corresponding derivative polynomial
$\mathbb{L}_a(x)$. It seems that this condition have
additional properties and it is more easily handled than the condition
deduced directly from the permutation property of $\mathbb{L}_a(x)$.
Combining this benefit with the technique that will be introduced in the next subsection,
we can construct several  families of pseudo-planar functions with new explicit forms,
reconstruct and generalize known families.

In the end of this subsection, we would like to point out that the function in \eqref{eq_Kantor_Pla},
that is, the pesudo-planar function from the semifields of the Kantor family, is with the form \eqref{eq_gen_Form}.
To see this, let $\gf_i=\gf_{2^{t_im}}$, $0\leq i\leq r$, where $1=t_r |t_{r-1}|\cdots | t_1| t_0=t $ and
$t$ is odd. Then it is clear that the function in \eqref{eq_Kantor_Pla} is with the form \eqref{eq_gen_Form}.
Hence all the known pesudo-planar functions are included in the general family of functions constructed by
Theorem \ref{th_Gen}.

\subsection{Discussing $\det M_b$}

According to Theorem \ref{th_Gen}, to discuss the pseudo-planarity of $F$ with the form
of \eqref{eq_gen_Form},
we need to discuss whether $\det M_b\neq 0$ or not,
where $\det M_b$ is defined by \eqref{eq_detMb_Gen}.
We will introduce a technique. It is generalized from a trick
which was firstly  used in the proof of \cite[Theorem 3.1]{DXY} and then in
the proof of \cite[Proposition 3.6]{HuLiZhang}.
Let us set up the following notations.

Throughout this subsection, let $q=2^m$ and $n=tm$, where $t\geq 2$.
For a nonzero $b$ in $\gf_{2^n}$, we define
$$x_1=b, \ x_2=b^q, \ \cdots, x_t=b^{q^{t-1}},$$
and let $B_1, B_2, \cdots, B_t$ be the first $t$ elementary symmetric
polynomial with variables $x_1, x_2, \cdots, x_t$, that is

 \begin{equation}\label{eq_B1Bt}
\left\{\begin{array}{lll}
   B_1 &=& x_1 + x_2 + \cdots + x_t = \tr_{n/m}(b), \\
    B_2 &=& \sum\limits_{1\leq i<j\leq t}x_i x_j,  \\
    &\vdots& \\
     B_t &=& x_1  x_2 \cdots x_t = {{\rm{N}}}_{n/m}(b).
     \end{array}\right.
\end{equation}
Denote the characteristic polynomial of $b$ over $\gf_q$ by
$$m_b(x)=(x+b)(x+b^q)\cdots(x+b^{q^{t-1}}).$$
Then we have
\begin{equation}\label{eq_mb_Gen}
m_b(x) = x^t + B_1 x^{t-1} + \cdots + B_{t-1}x + B_t\in \gf_q[x].
\end{equation}

It is clear that $m_b(x)$ is irreducible over $\gf_q$ if and
only if $b$ is not in any proper subfield of $\gf_{q^t}$.

Since $\det M_b$ is a Dickson determinant of $A_0, A_1, \cdots$, $A_{t-1}$,
where each $A_i$ is a linearized polynomial of $b$,
$\det M_b$ can be regarded as a homogenous multi-polynomial
of $x_1, x_2, \cdots, x_t$ with degree $t$.
If $b$ is in some proper subfield $\gf_{q^r}$ of $\gf_{q^t}$, then $\det M_b$ can be simplified since
$x_1, x_2, \cdots, x_t$ are just $t/r$ repetitions of $x_1, x_2, \cdots, x_r$.
Hence it is usually easy to discuss whether $\det M_b\neq 0$ or not.
We assume that $\det M_b\neq 0$ always holds in this case. Otherwise, $F$ can not
be a pseudo-planar function. In the following, we assume that $b$ is not in any proper subfield of $\gf_{q^t}$.
Then $m_b(x)$ is an irreducible polynomial over $\gf_q$. We distinguish two cases
according to whether $\det M_b$ is symmetric over $x_1, x_2, \cdots, x_t$
or not.

{\bf Case 1: $\det M_b$ is symmetric.}

Since $\det M_b$ is symmetric over $x_1, x_2, \cdots, x_t$, it follows from the theory
of linear algebra that  $\det M_b$ can be
expressed as a polynomial of $B_1, B_2, \cdots, B_t$, the first $t$
elementary symmetric polynomial of $x_1, x_2, \cdots, x_t$.
Then the assumption $\det M_b=0$ is equivalent to a relation, called \emph{Relation X} for convenience,  between $B_1, B_2, \cdots, B_t$.
If $m_b(x)$  is reducible over $\gf_q$
for any  $B_1, B_2, \cdots, B_t$ satisfying \emph{Relation X}, then this contradicts the assumption that  $m_b(x)$ is irreducible over
$\gf_q$, which means that $\det M_b=0$ is impossible for any nonzero $b$. Hence $F$ is pseudo-planar.
On the other hand, if there exists a collection of $B_1, B_2, \cdots, B_t$ satisfying \emph{Relation X} such that
$m_b(x)$,  defined by \eqref{eq_mb_Gen},  is irreducible over $\gf_q$, then
a zero of $m_b(x)$, denoted by $\beta$, will satisfy $\det M_{\beta}=0$, which means that
$F$ is not pseudo-planar.
 Thus the problem of checking the pseudo-planarity of $F$ is converted to discussing whether there exists an irreducible polynomial
$m_b(x)$ (defined by \eqref{eq_mb_Gen}) such that its coefficients $B_1, B_2, \cdots, B_t$ satisfy \emph{Relation X}.
This discussion may split into two subcases according to whether $B_1=0$ or not.
For more details, we refer the readers to the proofs in the next section.

{\bf Case 2: $\det M_b$ is not symmetric.}

Then $\det M_b$ can be expressed as the summation of its symmetric part over $x_1, x_2, \cdots, x_t$, denoted by $s$,
and its non-symmetric part, denoted by $t_1$. It is clear that $s$ can be  expressed as a polynomial of $B_1, B_2, \cdots, B_t$.
For the non-symmetric part $t_1$, let  $t_2, \cdots, t_k$ be the distinct
images of $t_1$ under all the  permutation transformations of  $x_1, x_2, \cdots, x_t$ (cf. $t_2$ in the proof of Theorem \ref{th_ind3Class2},
and $t_2, t_3$ in the proof of Theorem \ref{th_ind4Class1}).
Then all the first  $k$ elementary symmetric polynomials of $t_1, t_2, \cdots, t_k$ can
be expressed as a polynomial of $B_1, B_2, \cdots, B_t$ since they are also symmetric over $x_1, x_2, \cdots, x_t$.
Hence we get $k$ relations between $t_1, t_2, \cdots, t_k$ and $B_1, B_2, \cdots, B_t$.

Now assume that $\det M_b=0$. Then $t_1$ can be expressed by $B_1, B_2, \cdots, B_t$. Substituting it into the aforementioned
$k$ relations, one may get a relation between $B_1, B_2, \cdots, B_t$ as in Case 1, even though
this relation is usually much complicated. Similarly, if for any collection of $B_1, B_2, \cdots, B_t$
satisfying the aforementioned relation, $m_b(x)$ can  be proved to be reducible over $\gf_q$,
or $\det M_b\neq 0$ holds for any zero of the irreducible polynomial $m_b(x)$,
then $F$ is pseudo-planar.

\section{Families of Quadratic pseudo-planar Functions with New Explicit Forms}

In this section, we will use the new approach introduced in the last
section to construct
several families of quadratic pseudo-planar functions  with new explicit forms over $\gf_{2^n}$,
and reconstruct known families.
The section is divided into three subsections according to the values of $t$, the extension degree of $\gf_{2^n}$  over $\gf_{2^m}$.
We begin with the case of $t=3$.
We construct three new families of pseudo-planar functions, and study a family of trinomials, which
is a generalization of the three families of functions in \cite{HuLiZhang}.
The monomial polynomial is also revisited, and a sufficient and necessary condition for it to be pseudo-planar is given.
For the extension degree $4$ case, we construct two new families of pseudo-planar functions.
One is a trinomial, the other is a quadrinomial.
For the extension degree $2$ case, we revisit the monomial pseudo-planar function and provide a simple
sufficient and necessary condition, which generalizes \cite[Theorem 6]{SchmidtZhou}. However,
we cannot construct new pseudo-planar function in this case and leave it as an open problem.

\subsection{Case 1: Extension Degree $t=3$}

\begin{Th}\label{th_ind3Gen}
Set $n=3m$ and $q=2^m$.  Let
$$F(x)=\sum\limits_{i=0}^{2m-1} c_{1,i}x^{2^{m+i}+2^i} + \sum\limits_{i=0}^{m-1} c_{2,i}x^{2^{2m+i}+2^i} \in \gf_{2^n}[x].$$
Then $F$ is pseudo-planar over $\gf_{2^n}$ if and only if
$$b^{q^2+q+1} + \tr_{n/m}(b^q A_2^2)\neq 0$$
for any nonzero $b$ in $\gf_{2^n}$, where
$$A_2 = \sum\limits_{i=0}^{m-1}(c_{2,i}b)^{2^{n-i}} + \sum\limits_{i=0}^{2m-1}(c_{1,i}b)^{2^{2m-i}}.$$
\end{Th}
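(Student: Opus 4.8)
The plan is to specialize the general criterion of Theorem~\ref{th_Gen} to the case $t=3$ and show that the $3\times 3$ Dickson determinant $\det M_b$ simplifies to the stated scalar expression. First I would write out the determinant explicitly. With $t=3$ the matrix $M_b$ has rows indexed cyclically by the Frobenius $x\mapsto x^q$, so
\begin{equation*}
\det M_b = \left|\begin{array}{ccc}
A_0 & A_1 & A_2 \\
A_2^q & A_0^q & A_1^q \\
A_1^{q^2} & A_2^{q^2} & A_0^{q^2}
\end{array}\right|,
\end{equation*}
where $A_0=b$ and $A_1,A_2$ are the linearized expressions in $b$ given by \eqref{eq_Aall}. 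Expanding a $3\times 3$ circulant-type determinant gives
\begin{equation*}
\det M_b = A_0^{q^2+q+1} + A_1^{q^2+q+1} + A_2^{q^2+q+1} + \text{(mixed terms)},
\end{equation*}
and the goal is to collapse everything into $b^{q^2+q+1}+\tr_{n/m}(b^qA_2^2)$.

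The key algebraic input is the relation \eqref{eq_Atj}, which for $t=3$ reads $A_1=A_2^q$ and $A_2=A_1^{q^2}$; equivalently $A_1=A_2^q$. Substituting $A_1=A_2^q$ throughout collapses the two off-diagonal variables into a single variable $A_2$, so the determinant becomes a function of $b$ and $A_2$ alone. I expect the three ``norm'' terms to combine as $A_0^{q^2+q+1}=b^{q^2+q+1}$ together with $A_1^{q^2+q+1}+A_2^{q^2+q+1}=A_2^{q(q^2+q+1)}+A_2^{q^2+q+1}$, and the mixed cross terms (each a product of one entry from each row and column, following the permutation structure) to reorganize, after applying the $q$-power Frobenius identities, into the shape $\tr_{n/m}(b^qA_2^2)$. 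Concretely, $\tr_{n/m}(b^qA_2^2)=b^qA_2^2+(b^qA_2^2)^q+(b^qA_2^2)^{q^2}$ is a sum of three Frobenius conjugates, and I would match these three summands term-by-term against the cross terms of the expanded determinant, using $A_1=A_2^q$ to rewrite every $A_1$-occurrence. The pseudo-planarity criterion $\det M_b\neq 0$ then becomes exactly $b^{q^2+q+1}+\tr_{n/m}(b^qA_2^2)\neq 0$ for all nonzero $b$.

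The main obstacle I anticipate is the bookkeeping in the mixed-term reduction: one must verify that after substituting $A_1=A_2^q$ the six signed (here all positive, since we are in characteristic $2$) products in the Laplace expansion regroup precisely into the three Frobenius conjugates of $b^qA_2^2$ plus the diagonal norm terms, with the undesired terms either canceling in characteristic $2$ or merging into the trace. A secondary subtlety is confirming that the expression $b^{q^2+q+1}+\tr_{n/m}(b^qA_2^2)$ genuinely lies in $\gf_q$ (indeed in $\gf_2$ after the full reduction is consistent), which is what makes ``$\neq 0$'' a meaningful scalar condition rather than a vector one; this follows because $\det M_b$ is invariant under the Frobenius that permutes the rows and columns cyclically, hence lies in the fixed field $\gf_q$, and the trace term is manifestly $\gf_q$-valued. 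Once these identities are checked, the theorem follows immediately by invoking Theorem~\ref{th_Gen}, since $F$ here is precisely the $t=3$ instance of \eqref{eq_gen_Form}.
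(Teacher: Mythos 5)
Your proposal is correct and follows essentially the same route as the paper: both specialize Theorem~\ref{th_Gen} to $t=3$, expand the $3\times 3$ Dickson determinant, and exploit $A_1=A_2^q$, under which the $A_1$- and $A_2$-norm terms cancel in characteristic $2$ while the three mixed products form $\tr_{n/m}\left(A_0A_1^qA_2^{q^2}\right)=\tr_{n/m}\left(bA_2^{2q^2}\right)=\tr_{n/m}\left(b^qA_2^2\right)$. The bookkeeping you defer works out exactly as you anticipate, so no gap remains.
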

\begin{proof}
%
According to Theorem \ref{th_Gen}, the dual linearized polynomial of $ \mathbb{L}_a(x) = F(x+a) + F(x) + F(a) + ax$ is
$\mathbb{L}^\ast_b(a)$:

\begin{eqnarray*}
  \mathbb{L}^\ast_b(a) &=& A_0\cdot a +  A_1\cdot a^{2^m} + A_2\cdot a^{2^{2m}},
\end{eqnarray*}
where

\begin{equation*}
    \left\{\begin{array}{lll}
             A_0 & = & b, \\
             A_1 &=&  \sum\limits_{i=0}^{2m-1} (c_{1,i}b)^{2^{n-i}}+ \sum\limits_{i=0}^{m-1} (c_{2,i}b)^{2^{m-i}} = A_2^q,  \\
             A_2 & = & \sum\limits_{i=0}^{m-1}(c_{2,i}b)^{2^{n-i}} + \sum\limits_{i=0}^{2m-1}(c_{1,i}b)^{2^{2m-i}}.
           \end{array}
\right.
\end{equation*}
Then
\begin{eqnarray*}
  \det M_b &=& \left|\begin{array}{ccc}
                        A_0 & A_1 & A_2 \\
                       A_2^q & A_0^q & A_1^q \\
                       A_1^{q^2} & A_2^{q^2} & A_0^{q^2}
                     \end{array}
  \right| \\
&= & A_0^{q^2+q+1} + A_1^{q^2+q+1} + A_2^{q^2+q+1} \\
&& + \tr_{n/m}\left(A_0A_1^qA_2^{q^2}\right)\\
  &=& b^{q^2+q+1} + \tr_{n/m}\left(bA_2^{2q^2}\right)\\
  & =&   b^{q^2+q+1} + \tr_{n/m}\left(b^qA_2^2\right).
\end{eqnarray*}
Hence the result follows directly from Theorem \ref{th_Gen}.
\end{proof}

\begin{Th}\label{th_ind3Class1}
Set $q=2^m$ and $n=3m$.  Let
$$F(x)=c x^{2(q+1)} + c^q x^{2(q^2+1)} \in \gf_{2^n}[x].$$
Then $F$ is pseudo-planar over $\gf_{2^n}$.
\end{Th}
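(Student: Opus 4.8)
The plan is to recognize $F(x) = c x^{2(q+1)} + c^q x^{2(q^2+1)}$ as the special case of Theorem~\ref{th_ind3Gen} in which the only nonzero coefficients are $c_{1,1} = c$ and $c_{2,1} = c^q$. Indeed, $2(q+1) = 2^{m+1} + 2^{1}$ and $2(q^2+1) = 2^{2m+1} + 2^{1}$, so both exponents sit at index $i = 1$ in the respective sums of that theorem. Consequently it suffices, by Theorem~\ref{th_ind3Gen}, to verify that
$$b^{q^2+q+1} + \tr_{n/m}\!\left(b^q A_2^2\right) \neq 0$$
for every nonzero $b \in \gf_{2^n}$, where, keeping only the $i=1$ terms, $A_2 = (c^q b)^{2^{n-1}} + (cb)^{2^{2m-1}}$.

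First I would simplify $A_2$ by reducing the Frobenius exponents modulo $n = 3m$, which gives $A_2 = c^{2^{m-1}} b^{2^{3m-1}} + c^{2^{2m-1}} b^{2^{2m-1}}$. The key computational device is to square $A_2$ \emph{before} taking the trace: squaring collapses the awkward exponents $2^{m-1}$ and $2^{2m-1}$ into clean powers of $q$, and using $b^{2^{3m}} = b$ yields $A_2^2 = c^q b + c^{q^2} b^{q^2}$. Hence $b^q A_2^2 = c^q b^{q+1} + c^{q^2} b^{q^2+q}$, a sum of just two monomials.

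The remaining, and essentially the only, step is to expand $\tr_{n/m}(b^q A_2^2) = z + z^q + z^{q^2}$ with $z = c^q b^{q+1} + c^{q^2} b^{q^2+q}$, reducing each of the six resulting monomials via $c^{q^3} = c$ and $b^{q^3} = b$. The main obstacle is this bookkeeping, though I expect it to resolve cleanly: the six monomials cancel in three pairs, so the trace term vanishes identically, $\tr_{n/m}(b^q A_2^2) = 0$. Granting this, the criterion collapses to $b^{q^2+q+1} = \mathrm{N}_{n/m}(b) \neq 0$, which holds for every nonzero $b$ since the norm of a nonzero element is nonzero; thus $\det M_b \neq 0$ for all $b \neq 0$ and $F$ is pseudo-planar. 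I would remark in passing that this realizes the symmetric situation (Case~1 of the previous section) in its most favorable form: the determinant reduces to the top elementary symmetric polynomial $B_t = \mathrm{N}_{n/m}(b)$, so no irreducibility analysis of $m_b(x)$ is required at all.
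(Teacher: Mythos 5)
Your proposal is correct and follows essentially the same route as the paper: both specialize Theorem \ref{th_ind3Gen} to $c_{1,1}=c$, $c_{2,1}=c^q$, observe that $b^qA_2^2 = c^qb^{q+1}+c^{q^2}b^{q^2+q}$ has identically vanishing trace over $\gf_q$ (your explicit pairwise cancellation of the six monomials is exactly the computation the paper leaves implicit in its ``$\equiv 0$''), and conclude $\det M_b = b^{q^2+q+1}=\mathrm{N}_{n/m}(b)\neq 0$. No gaps; your closing remark that this is the degenerate symmetric case requiring no irreducibility analysis of $m_b(x)$ is also accurate.
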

\begin{proof}
By Theorem \ref{th_ind3Gen},  we have
$$A_2=(c^qb)^{2^{n-1}} + (c b)^{2^{2m-1}}.$$
Then it follows that
$$\tr_{n/m}\left(b^qA_2^2\right) = \tr_{n/m}\left(c^qb^{q+1} + c^{q^2}b^{q^2+q} \right) \equiv 0. $$
Hence
$$\det M_b = b^{q^2+q+1} + \tr_{n/m}\left(b^qA_2^2\right)= b^{q^2+q+1}\neq 0$$
for any nonzero $b$ in $\gf_{2^n}$. Then the result follows directly from Theorem \ref{th_ind3Gen}.
\end{proof}

Before introducing the second family of pseudo-planar function, we
set up some notations as in Section III.B.
Let
$$x_1=b, x_2=b^q, \text{\  and \ } x_3=b^{q^2}.$$
Then \eqref{eq_B1Bt} and \eqref{eq_mb_Gen} become
 \begin{eqnarray*}
   B_1 &=& x_1 + x_2 + x_3 = \tr_{n/m}(b), \\
    B_2 &=& x_1x_2 + x_1x_3 + x_2x_3,  \\
     B_3 &=& x_1  x_2  x_3 = {\rm{N}}_{n/m}(b),
\end{eqnarray*}
and
$$m_b(x)=x^3 + B_1x^2 + B_2x + B_3\in \gf_q[x].$$
The following identity can be easily verified.
\begin{equation}\label{eq_ind3_Trb3}
    \tr_{n/m}(b^3) = x_1^3 + x_2^3 + x_3^3 = B_1^3 + B_3 + B_1B_2.
\end{equation}

\begin{Th}\label{th_ind3Class2}
Set $q=2^m$ and $n=3m$.  Let
$$F(x)= x^{2(q+1)} + x^{q^2+1} + x^{q^2+q} +  x^{2(q^2+1)}.$$
Then $F$ is pseudo-planar over $\gf_{2^n}$ if and only if $m\not\equiv 1\bmod 3$.
\end{Th}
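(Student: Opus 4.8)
The plan is to invoke Theorem~\ref{th_ind3Gen} and then run the non-symmetric (``Case~2'') analysis of Section~III.B. First I would match the four terms of $F$ against the shape \eqref{eq_gen_Form}, obtaining $c_{1,1}=c_{1,m}=c_{2,0}=c_{2,1}=1$ and all other $c_{j,i}=0$. Feeding these into the expression for $A_2$ in Theorem~\ref{th_ind3Gen} and squaring to clear the half-powers gives $A_2^2=b+b^2+b^{2q}+b^{q^2}$, so $\det M_b=b^{q^2+q+1}+\tr_{n/m}(b^qA_2^2)$ splits into four traces. Setting $x_1=b,\ x_2=b^q,\ x_3=b^{q^2}$ and using the symmetric functions $B_1,B_2,B_3$ of \eqref{eq_B1Bt}, I expect the symmetric pieces $\tr_{n/m}(b^{q+1})$ and $\tr_{n/m}(b^{q^2+q})$ to both equal $B_2$ and cancel, while $\tr_{n/m}(b^{3q})=\tr_{n/m}(b^3)=B_1^3+B_3+B_1B_2$ by \eqref{eq_ind3_Trb3} and $b^{q^2+q+1}=\mathrm{N}_{n/m}(b)=B_3$. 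The outcome should be
\begin{equation*}
\det M_b=t_1+B_1^3+B_1B_2,\qquad t_1:=\tr_{n/m}(b^{q+2})=x_1^2x_2+x_2^2x_3+x_3^2x_1,
\end{equation*}
whose only non-symmetric ingredient is $t_1$ (equivalently $\det M_b=\tr_{n/m}(b^3+b^{q+2})+\mathrm{N}_{n/m}(b)$).

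Next I would introduce the resolvent. Let $t_2:=x_1x_2^2+x_2x_3^2+x_3x_1^2$ be the image of $t_1$ under a transposition; both $t_1,t_2$ are fixed by the cyclic shift $x_i\mapsto x_{i+1}$, i.e.\ by Frobenius $b\mapsto b^q$, so both lie in $\gf_q$. A short symmetric-function calculation yields $t_1+t_2=B_1B_2+B_3$ and $t_1t_2=B_2^3+B_1^3B_3+B_3^2$, making $t_1,t_2$ the two roots of $R(T)=T^2+(B_1B_2+B_3)T+(B_2^3+B_1^3B_3+B_3^2)$ over $\gf_q$. Writing $w:=B_1^3+B_1B_2$, the condition $\det M_b=0$ is precisely $t_1=w$. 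Computing $R(w)$, the $B_1^3B_3$ and $B_1^2B_2^2$ terms cancel and I obtain the single \emph{Relation~X}
\begin{equation*}
B_1^6+B_1^4B_2+B_1B_2B_3+B_2^3+B_3^2=0,
\end{equation*}
which is equivalent to $w\in\{t_1,t_2\}$. As Section~III.B warns, this is necessary but not sufficient: one must still settle the \emph{orientation}, namely whether $w=t_1$ ($\det M_b=0$, $F$ not pseudo-planar) or $w=t_2$ (harmless). Because $w$ is symmetric while the transposition swaps $t_1\leftrightarrow t_2$ and is \emph{not} realized by the cyclic Galois action of $\gf_{q^3}/\gf_q$, this orientation is a genuine arithmetic invariant, and I expect it to be governed by $m\bmod 3$.

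To expose this I would split on $B_1$. When $B_1=0$, Relation~X forces $B_3^2=B_2^3$; writing $s=\sqrt{B_2}$ one gets $m_b(x)=x^3+s^2x+s^3=s^3\,g(x/s)$ with $g(u)=u^3+u+1$, so $m_b$ is irreducible over $\gf_q$ iff $g$ is, i.e.\ iff $3\nmid m$, and its roots then lie in a coset of $\gf_8$. For a root $\gamma$ of $g$ the Frobenius $\gamma\mapsto\gamma^q=\gamma^{2^{m\bmod 3}}$ cycles $(\gamma,\gamma^2,\gamma^4)$ when $m\equiv1$ but reverses it when $m\equiv2$, swapping $t_1$ and $t_2$; a direct evaluation using $\gamma^7=1$ and $\gamma+\gamma^2+\gamma^4=0$ gives $t_1(\gamma)=0=w$ for $m\equiv1\bmod 3$ and $t_1(\gamma)=1\neq0$ for $m\equiv2\bmod 3$. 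Hence for $m\equiv1\bmod 3$ the generator $b=s\gamma$ of $\gf_{q^3}$ already satisfies $\det M_b=0$, so $F$ is \emph{not} pseudo-planar, whereas for $m\equiv2\bmod 3$ (and for $m\equiv0$, where $g$ splits and such $b$ fall into the subfield $\gf_q$, where $\det M_b=b^3\neq0$) the branch $B_1=0$ produces no bad $b$.

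The remaining and, I expect, hardest step is $B_1\neq0$: normalizing $x=B_1y$ turns $m_b$ into $y^3+y^2+Py+Q$ with $(P,Q)=(B_2/B_1^2,\,B_3/B_1^3)$ constrained by $Q^2+PQ+P^3+P+1=0$, and I must show that for $m\not\equiv1\bmod 3$ every irreducible such cubic has the \emph{opposite} orientation $w=t_2$, i.e.\ $\det M_b\neq0$, so that no bad generator appears outside $\gf_8$. Proving this orientation statement uniformly across the two-parameter family, rather than inside the fixed field $\gf_8$ as above, is the main obstacle; I would attack it by applying Lemma~\ref{redu-condition} to the resolvent $R(T)$ and tracking the $m\bmod 3$ dependence of how Frobenius orients the root-triple $(b,b^q,b^{q^2})$. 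Combining the two branches then gives: a bad $b$ exists if and only if $m\equiv1\bmod 3$, which by Theorem~\ref{th_ind3Gen} is exactly the asserted equivalence $F$ pseudo-planar $\iff m\not\equiv1\bmod 3$.
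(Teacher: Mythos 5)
Your reduction is sound and, up to the point where you stop, it coincides with the paper's own proof: the computation of $A_2$, the identity $\det M_b=B_1^3+B_1B_2+t_1$ with $t_1=\tr_{n/m}(b^{q+2})$, the resolvent pair with $t_1+t_2=B_1B_2+B_3$ and $t_1t_2=B_1^3B_3+B_2^3+B_3^2$, the subfield case $\det M_b=b^3$, the counterexample $b=\gamma$ for $m\equiv 1\bmod 3$, and the whole $B_1=0$ branch (scaling to $y^3+y+1$ and evaluating $\tr_{n/m}(\gamma^6)=1$ when $m\equiv 2\bmod 3$) all match. But there is a genuine gap exactly where you flag ``the main obstacle'': the branch $B_1\neq 0$ is never proved, only a plan is sketched, and that branch is the bulk of the paper's argument. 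Moreover, the sketched remedy cannot work as stated: applying Lemma~\ref{redu-condition} to the resolvent $R(T)$ yields no information, because whenever $m_b$ is irreducible over $\gf_q$ its Galois group is cyclic of order $3$, so $t_1$ and $t_2$ are individually fixed by Frobenius and $R(T)$ \emph{automatically} splits over $\gf_q$; a reducibility criterion for $R$ therefore cannot distinguish the harmful orientation $w=t_1$ from the harmless one $w=t_2$.

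What the paper actually does in this case is concrete and is the idea you are missing. After normalizing $B_1=1$ (legitimate since $\det M_{cb}=c^3\det M_b$), the assumption $\det M_b=0$ reads $t_1=B_2+1$; feeding this through the sum and product identities gives $B_3^2+B_2B_3+B_2^3+B_2+1=0$. The paper then \emph{parametrizes} the solutions of this relation: if $B_2=0$ then $B_3=1$ and $m_b=x^3+x^2+1$, whose roots lie in $\gf_8$ and are dispatched as in your $B_1=0$ branch; if $B_2\neq 0$, set $u=(B_3+1)/B_2\in\gf_q$, deduce $B_2=u^2+u$ and $B_3=u^3+u^2+1$, and verify the explicit root $b_0=\gamma u+\gamma^6$ of $m_b(x)=x^3+x^2+B_2x+B_3$, where $\gamma^3+\gamma+1=0$. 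For $m\equiv 0\bmod 3$ this puts a root of $m_b$ in $\gf_q$, contradicting irreducibility; for $m\equiv 2\bmod 3$, $b_0$ is a Frobenius conjugate of $b$, and since $\det M$ is Frobenius-invariant the direct evaluation $\det M_{b_0}=B_2+t_1+1=u^3+u+1\neq 0$ (as $u\in\gf_q$ and $\gcd(7,q-1)=1$) contradicts $\det M_b=0$. This explicit witness $b_0=\gamma u+\gamma^6$ is precisely the $B_1=1$ analogue of your scaling trick $b=s\gamma$ in the $B_1=0$ case; without it, or an equivalent device that actually resolves the orientation across the whole two-parameter family, your proof is incomplete.
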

\begin{proof}
According to Theorem \ref{th_ind3Gen}, we have
$$A_2=b^{2^{2m-1}} + b^{2^{m}} + b + b^{2^{n-1}}.$$
Then it follows from Theorem \ref{th_ind3Gen} that
\begin{eqnarray*}
  && \det M_b\\
   &=& b^{q^2+q+1} + \tr_{n/m} (b^qA_2^2)\\
   &=&  b^{q^2+q+1} + \tr_{n/m} \left(b^{q^2+q} + b^{3q} + b^{q+2} + b^{q+1}\right)\\
   &=&  b^{q^2+q+1} + \tr_{n/m} \left(b^{3} + b^{q+2}\right).
\end{eqnarray*}

Then with \eqref{eq_ind3_Trb3}, we have
\begin{equation}\label{eq_ind3Mb}
    \det M_b = B_1^3 + B_1B_2 + t_1,
\end{equation}
where
\begin{equation}\label{eq_ind3t1}
    t_1 = \tr_{n/m}( b^{q+2}) = x_1^2x_2 + x_2^2x_3 + x_3^2x_1.
\end{equation}
Let $t_2$ be the image of $t_1$ under the transformation of
$(12)$, that is, to exchange $x_1$ and $x_2$.
 \begin{eqnarray*}
   t_2 &=& x_1x_2^2 + x_2x_3^2 + x_3x_1^2.\\
\end{eqnarray*}
Then the following identities hold.
\begin{equation}\label{eq_ind3_tsum}
    t_1 + t_2=  B_3 + B_1B_2,
\end{equation}
\begin{equation}\label{eq_ind3_tquad}
    t_1t_2 =   B_1^3B_3 + B_2^3+ B_3^2.
\end{equation}

Firstly, we assume that $b\in \gf_{q}^\ast$. Then we have $B_1=b$,  $B_2=b^2$ and $t_1=b^3$.
Hence
$$\det M_b = b^3\neq 0$$
for any $b\in \gf_{q}^\ast$. In the following, we always assume that $b\in \gf_{q^3}^\ast \setminus \gf_{q}$,
which  means that $m_b(x)$ is an irreducible polynomial over $\gf_q$ with degree $3$.

Let $\gamma$ be a solution of $y^3+y+1=0$ in some extension field of $\gf_q$.
Then $\text{Ord}(\gamma)=7$.

If $m\equiv 1\bmod 3$, then $q \equiv 2\bmod 7$. Further, we have
\begin{eqnarray*}
   \det M_\gamma&=& \gamma^{q^2+q+1} + \tr_{n/m} \left(\gamma^{3} + \gamma^{q+2}\right)\\
   &=& 1+\tr_{n/m}(\gamma^3+\gamma^4)=1+\tr_{n/m}(\gamma^6)\\
   &=& 1 + \gamma^6 +  \gamma^5 + \gamma^3 =0,
\end{eqnarray*}
which means that $F$ is not pseudo-planar.
In the rest of the proof, we always assume that $m\not\equiv 1 \bmod 3$.
It suffices to prove that $\det M_b\neq 0$ for any $b\in \gf_{q^3}^\ast \setminus \gf_q$.

The following proof is split into two cases according to $B_1=0$ or not.

{\bf Case 1:  $B_1=0$.}

Now \eqref{eq_ind3Mb} becomes
\begin{equation}\label{eq_ind3_B10_Mb}
\det M_b = t_1.
\end{equation}

Assume that $\det M_b = t_1 = 0$ for some $b\in \gf_{q^3}^\ast \setminus \gf_{q}$.
Plugging it with $B_1=0$ into \eqref{eq_ind3_tquad}, one gets
\begin{equation}\label{eq_B10_t23S}
    B_3 = B_2^{3/2}.
\end{equation}

Then it follows that $B_2\neq 0$ since $B_3\neq 0$.
Let $x=B_2^{1/2}y$. Then we have
\begin{eqnarray*}
  m_b(x)   &=&  x^3 + B_2x + B_3  =   x^3 + B_2x + B_2^{3/2}\\
  &=& B_2^{3/2}(y^3+y+1).
\end{eqnarray*}
Hence $$b\in \{B_2^{1/2}\gamma, B_2^{1/2}\gamma^2, B_2^{1/2}\gamma^4\}.$$

If $m\equiv 0\bmod 3$, then both $\gamma$ and $b$ are in $\gf_q$. Contradicts!

If $m\equiv 2\bmod 3$, then $q\equiv 4\bmod 7$ and
\begin{eqnarray*}
  && \det M_{B_2^{1/2}\gamma} \\
   &=& t_1   = \tr_{n/m}(B_2^{3/2}\gamma^{q+2})=B_2^{3/2}\tr_{n/m}(\gamma^6)\\
    &=& B_2^{3/2}(\gamma^6 + \gamma^3 + \gamma^5) = B_2^{3/2}\neq 0,
\end{eqnarray*}
which is also a contradiction.

Hence $\det M_b\neq 0$ if $B_1=0$.

{\bf Case 2:  $B_1\neq 0$.}

It is clear that $\det M_{cb}=c^3 \det M_b$ holds for any $c\in \gf_q^\ast$.
Hence, WLOG, we assume that $B_1=1$.
Then  \eqref{eq_ind3Mb} becomes
\begin{equation}\label{eq_ind3_B11_Mb}
\det M_b = B_2 + t_1 + 1.
\end{equation}

Assume, on the contrary,  that $\det M_b = 0$ for some $b\in \gf_{q^3}^\ast \setminus \gf_{q}$.
Then it follows from \eqref{eq_ind3_B11_Mb} that
\begin{equation}\label{eq_ind3B11_t1}
    t_1 = B_2 + 1.
\end{equation}

Plugging it with $B_1=1$ into \eqref{eq_ind3_tsum}, one gets
\begin{equation}\label{eq_ind3B11_t2}
    t_2 = B_3 +1.
\end{equation}

Substituting \eqref{eq_ind3B11_t1}, \eqref{eq_ind3B11_t2}  and  $B_1=1$ into
\eqref{eq_ind3_tquad}, we have
\begin{equation}\label{eq_ind3_B11}
    B_3^2 + B_2B_3 + B_2^3 + B_2 + 1 =0.
\end{equation}

 We distinguish two subcases.

 {\bf Subcase 2.1: $B_2 = 0$. }

 Then it follows from \eqref{eq_ind3_B11} that $B_3=1$.
 Hence
 $$m_b(x) = x^3+x^2+1,$$
 which implies that $$b\in \{\gamma^3, \gamma^6,  \gamma^5\}.$$
 A similar argument as in the last case can show that $\det M_b\neq 0$.

 {\bf Subcase 2.2: $B_2\neq 0$. }

Let $u=\frac{B_3+1}{B_2}\in \gf_q$. Then dividing $B_2^2$ across both sides of \eqref{eq_ind3_B11} leads
to
$$B_2=u^2+u.$$
Further,
$$B_3=uB_2+1=u^3+u^2+1.$$
We compute that
\begin{eqnarray*}
   && (\gamma u + \gamma^6  )^3 + (\gamma u + \gamma^6  )^2  + B_2(\gamma u + \gamma^6  ) + B_3 \\
      &=&  (\gamma^3 + \gamma + 1) u^3 + (\gamma^6 + \gamma^2 + 1) u^2 + (\gamma^4 + \gamma^5 + 1)\\
      & =& 0.
\end{eqnarray*}
Hence $b_0=\gamma u + \gamma^6  $ is a zero of $m_b(x) = x^3 + x^2 + B_2x+B_3 $.

If $m\equiv 0\bmod 3$, then $b_0\in \gf_{q}$, which contradicts that $m_b(x)$ is irreducible.
 If $m\equiv 2\bmod 3$, then $q\equiv 4\bmod 7$ and
 a direct computation shows that
 $$B_2 = b_0^{q+1} + b_0^{q^2+1} + b_0^{q^2+q} = u^2+u$$
 and
 \begin{eqnarray*}
   t_1 &=& \tr_{n/m}(b_0^{q+2}) \\
   &=& \tr_{n/m}\left(\gamma^6 u^3+ \gamma^5 u^2 + \gamma^2 u + \gamma\right)\\
   & =&  u^3 + u^2.
 \end{eqnarray*}
Hence
$$\det M_{b_0}=B_2+t_1+1=u^3+u+1\neq 0$$
since $u\in \gf_q$ and $\gcd(7, q-1)=1$.
Contradicts!

We finish the proof.
\end{proof}

\begin{Prop}\label{prop_ind3Trinomial}
Set $q=2^m$ and $n=3m$.  Let
\begin{equation}\label{eq_ind3Tri}
    F(x)= c_1x^{q+1} + c_2x^{q^2+q} + c_3x^{q^2+1}.
\end{equation}
Then $F$ is pseudo-planar over $\gf_{2^n}$ if and only if
\begin{equation}\label{eq_ind3TriCond}
    b^{q^2+q+1} + \tr_{n/m}\left(c_1^{2}b^{q^2+2} + c_2^{2}b^{3} + c_3^2b^{q+2}\right)\neq 0
\end{equation}
for any $b\in \gf_{2^n}^\ast$.
\end{Prop}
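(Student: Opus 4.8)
The plan is to apply Theorem \ref{th_ind3Gen} directly to the trinomial \eqref{eq_ind3Tri}, since this proposition is precisely the specialization of the general degree-$3$ criterion to a function with exactly three terms, one of each admissible exponent type $q+1$, $q^2+q$, $q^2+1$. First I would identify the coefficients in the notation of Theorem \ref{th_ind3Gen}. Writing $F(x)=\sum_{i=0}^{2m-1}c_{1,i}x^{2^{m+i}+2^i}+\sum_{i=0}^{m-1}c_{2,i}x^{2^{2m+i}+2^i}$, the term $c_1x^{q+1}$ corresponds to $c_{1,0}=c_1$ (exponent $2^m+2^0$), while $c_2x^{q^2+q}$ is a $(q^2+1)$-type exponent raised to the $q$-th power, i.e. $x^{q^2+q}=x^{2^m(q+1)}=\bigl(x^{q+1}\bigr)^q$, so it sits among the $c_{1,i}$ family at $i=m$, giving $c_{1,m}=c_2$ (exponent $2^{2m}+2^m$). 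Finally $c_3x^{q^2+1}$ gives $c_{2,0}=c_3$ (exponent $2^{2m}+2^0$). All other $c_{j,i}$ vanish.

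Next I would compute the element $A_2$ from Theorem \ref{th_ind3Gen}, namely $A_2=\sum_{i=0}^{m-1}(c_{2,i}b)^{2^{n-i}}+\sum_{i=0}^{2m-1}(c_{1,i}b)^{2^{2m-i}}$. With only $c_{1,0},c_{1,m},c_{2,0}$ nonzero this collapses to three terms: from $c_{2,0}=c_3$ the contribution $(c_3 b)^{2^{n}}=(c_3b)$ read modulo $2^n$; from $c_{1,0}=c_1$ the contribution $(c_1 b)^{2^{2m}}$; and from $c_{1,m}=c_2$ the contribution $(c_2 b)^{2^{2m-m}}=(c_2 b)^{q}$. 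I would then substitute this $A_2$ into the determinant expression $\det M_b=b^{q^2+q+1}+\tr_{n/m}(b^q A_2^2)$. Squaring $A_2$ distributes over the three summands (characteristic $2$), and the trace $\tr_{n/m}$ is $\gf_q$-linear and invariant under $x\mapsto x^q$, so each of the three pieces $b^q\cdot(\text{term})^2$ can be massaged by pulling out Frobenius powers and using trace-invariance to land on the claimed exponents $c_1^2 b^{q^2+2}$, $c_2^2 b^{3}$, and $c_3^2 b^{q+2}$. Matching the three terms to the three coefficients $c_1,c_2,c_3$ is the crux of the bookkeeping, and I would track each Frobenius exponent modulo $n=3m$ carefully to confirm the pairing.

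The result then follows immediately from Theorem \ref{th_ind3Gen}: $F$ is pseudo-planar if and only if $\det M_b\neq 0$ for all nonzero $b$, which is exactly condition \eqref{eq_ind3TriCond}. The main obstacle I anticipate is purely computational rather than conceptual: correctly reducing the Frobenius exponents modulo $n$ and exploiting $\tr_{n/m}(\alpha)=\tr_{n/m}(\alpha^q)$ to rewrite each of the three trace terms into the specific normalized exponents $q^2+2$, $3$, and $q+2$ claimed in \eqref{eq_ind3TriCond}. For instance, the $c_3$-term naively yields $\tr_{n/m}(b^q(c_3 b)^2)=\tr_{n/m}(c_3^2 b^{q+2})$ with no adjustment, the $c_1$-term yields $\tr_{n/m}(b^q(c_1 b)^{2\cdot q^2})=\tr_{n/m}(c_1^{2q^2}b^{q+2q^2})$ which I would fold down via a Frobenius twist to $\tr_{n/m}(c_1^2 b^{q^2+2})$ (applying $x\mapsto x^{q^{-2}}$, i.e. $x\mapsto x^{q}$), and the $c_2$-term similarly normalizes to $\tr_{n/m}(c_2^2 b^3)$. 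Once these three reductions are verified the proof is complete.
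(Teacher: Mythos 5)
Your proposal is correct and follows essentially the same route as the paper: identify $c_{1,0}=c_1$, $c_{1,m}=c_2$, $c_{2,0}=c_3$, so that $A_2=(c_1b)^{2^{2m}}+(c_2b)^{2^m}+c_3b$, then expand $\tr_{n/m}(b^qA_2^2)$ and use Frobenius-invariance of the trace to normalize the three terms to $c_1^2b^{q^2+2}$, $c_2^2b^3$, $c_3^2b^{q+2}$, after which Theorem \ref{th_ind3Gen} gives the claim. The only nitpick is the phrase describing $x^{q^2+q}$ as ``a $(q^2+1)$-type exponent raised to the $q$-th power'' (it is $(x^{q+1})^q$, a $(q+1)$-type exponent twisted by $q$), but your actual coefficient assignment and all subsequent computations are exactly the paper's.
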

\begin{proof}
In this case, we have
$$A_2=(c_1 b)^{2^{2m}} + (c_2 b)^{2^{m}} + c_3 b.$$
Then the result  follows from Theorem \ref{th_ind3Gen} and
\begin{eqnarray*}
  &&\tr_{n/m}\left(b^qA_2^2\right)\\ &=& \tr_{n/m}\left(c_1^{2q^2}b^{2q^2+q} + c_2^{2q}b^{3q} + c_3^2b^{q+2}
\right) \\
&=&  \tr_{n/m}\left(c_1^{2}b^{q^2+2} + c_2^{2}b^{3} + c_3^2b^{q+2}
\right).
\end{eqnarray*}

\end{proof}

Experiment results show that there are a lot of pseudo-planar functions
with the form \eqref{eq_ind3Tri}. We use Magma to do an exhaustive search
over $\gf_{2^{3m}}$ for $m=1, 2, 3$. Results show that
there are $8$, $960$ and $75264$ pseudo-planar functions
with the form \eqref{eq_ind3Tri} over $\gf_{2^{3}}$,  $\gf_{2^{6}}$
and  $\gf_{2^9}$ respectively.

\begin{Cor}\label{cor_ind3Class3}
Set $q=2^m$ and $n=3m$.  Let
$$F(x)=x^{q+1} + \alpha x^{q^2+q} + x^{q^2+1}, $$
where $\alpha$ is a solution of $x^3+x^2+1=0$.
Then $F$ is pseudo-planar over $\gf_{2^n}$.
\end{Cor}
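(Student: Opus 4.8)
The plan is to apply Proposition~\ref{prop_ind3Trinomial} with $c_1=c_3=1$ and $c_2=\alpha$, so that it suffices to show
$$\det M_b = b^{q^2+q+1} + \tr_{n/m}\!\left(b^{q^2+2} + \alpha^2 b^3 + b^{q+2}\right)\neq 0$$
for every nonzero $b$. First I would pass to the symmetric-function setup of Section~III.B, writing $x_1=b,\,x_2=b^q,\,x_3=b^{q^2}$ with elementary symmetric polynomials $B_1,B_2,B_3$, exactly as in the proof of Theorem~\ref{th_ind3Class2}. Identifying $b^{q^2+q+1}=B_3$ and recognizing $\tr_{n/m}(b^{q^2+2})$ and $\tr_{n/m}(b^{q+2})$ as the two non-symmetric cubics $t_2,t_1$ of that proof, the identity $t_1+t_2=B_3+B_1B_2$ from \eqref{eq_ind3_tsum} collapses the symmetric contributions and I expect to reach the clean expression
$$\det M_b = B_1B_2 + \tr_{n/m}(\alpha^2 b^3).$$
Using $\alpha\in\gf_q$ to pull $\alpha^2$ out of the trace and then \eqref{eq_ind3_Trb3} for $\tr_{n/m}(b^3)=B_1^3+B_3+B_1B_2$, this becomes
$$\det M_b=\alpha^2 B_1^3+\alpha^2 B_3+(1+\alpha^2)B_1B_2.$$

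With this formula I would dispose of the easy cases. If $b\in\gf_q^\ast$ then $B_1=b,\,B_2=b^2,\,B_3=b^3$, so $\det M_b=(1+\alpha^2)b^3$, which is nonzero because $1$ is not a root of $x^3+x^2+1$ and hence $\alpha\neq1$. Otherwise $b\notin\gf_q$, so $m_b(x)=x^3+B_1x^2+B_2x+B_3$ is irreducible over $\gf_q$, and I would split on $B_1$. When $B_1=0$ the formula reduces to $\det M_b=\alpha^2 B_3=\alpha^2\,{\rm N}_{n/m}(b)\neq0$. When $B_1\neq0$, the scaling $\det M_{cb}=c^3\det M_b$ for $c\in\gf_q^\ast$ noted in the proof of Theorem~\ref{th_ind3Class2} lets me normalize $B_1=1$.

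The crux is the normalized case $B_1=1$, and this is where the specific polynomial $x^3+x^2+1$ is forced to play its role. Assuming for contradiction that $\det M_b=0$, solving the resulting relation for $B_3$ gives $B_3=1+(1+\alpha^{-2})B_2$, whence
$$m_b(x)=(x^3+x^2+1)+B_2\,(x+1+\alpha^{-2}).$$
The key observation, which I expect to be the real content of the argument, is that evaluating at $x=\alpha$ annihilates \emph{both} summands at once: the first because $\alpha^3+\alpha^2+1=0$, and the second because dividing that same relation by $\alpha^2$ yields precisely $\alpha+1+\alpha^{-2}=0$. Thus $m_b(\alpha)=0$ independently of $B_2$, so the element $\alpha\in\gf_q$ would be a root of the supposedly irreducible $m_b$, a contradiction. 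Hence $\det M_b\neq0$ for all nonzero $b$, and Proposition~\ref{prop_ind3Trinomial} gives that $F$ is pseudo-planar. The main obstacle is locating this simultaneous vanishing; once one sees that the coefficient condition on $\alpha$ is exactly the defining equation rescaled, the contradiction is immediate.
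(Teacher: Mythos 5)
Your reduction to Proposition~\ref{prop_ind3Trinomial} and the symmetric-function bookkeeping are exactly the paper's route, and the case analysis you supply (including the nice observation that $x^3+x^2+1$ and $x+1+\alpha^{-2}$ vanish simultaneously at $x=\alpha$) is precisely the argument the paper leaves ``to the interested readers.'' However, the step you treat as free --- ``using $\alpha\in\gf_q$ to pull $\alpha^2$ out of the trace'' --- is a genuine gap. A root $\alpha$ of $x^3+x^2+1$ lies in $\gf_{2^3}$, and $\gf_{2^3}\subseteq\gf_q=\gf_{2^m}$ holds if and only if $3\mid m$; in general all that is guaranteed is $\alpha\in\gf_{2^{3m}}=\gf_{2^n}$. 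When $3\nmid m$, the identity $\tr_{n/m}(\alpha^2b^3)=\alpha^2\tr_{n/m}(b^3)$ is false, so your closed form $\det M_b=\alpha^2B_1^3+\alpha^2B_3+(1+\alpha^2)B_1B_2$ is unjustified; and your concluding contradiction evaporates as well, because the roots of an irreducible cubic $m_b$ over $\gf_q$ lie in $\gf_{q^3}$, which contains $\gf_{2^3}$, so $m_b(\alpha)=0$ contradicts nothing unless $\alpha\in\gf_q$.

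Moreover, the gap is fatal to the statement as written, not just to the write-up. Suppose $3\nmid m$. Then the restriction of $\tr_{n/m}$ to $\gf_{2^3}$ is the absolute trace of $\gf_{2^3}$ (the maps $y\mapsto y^{q^i}$, $i=0,1,2$, exhaust $\mathrm{Gal}(\gf_{2^3}/\gf_2)$), hence $\tr_{n/m}(\alpha^2)=\alpha+\alpha^2+\alpha^4=1$, the sum of the roots of $x^3+x^2+1$. Plugging $b=1$ into the criterion of Proposition~\ref{prop_ind3Trinomial} then gives
\begin{equation*}
\det M_1=1+\tr_{n/m}\left(1+\alpha^2+1\right)=1+\tr_{n/m}(\alpha^2)=1+1=0,
\end{equation*}
so $F$ is \emph{not} pseudo-planar. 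Concretely, for $m=1$ the map $F(x+1)+F(x)+x$ over $\gf_{2^3}$ has linearized part $(1+\alpha)(x^4+x^2)+x$, which vanishes at $x=\alpha^4\neq 0$. To be fair, the paper's own two-line computation moves $\alpha^2$ through $\tr_{n/m}$ in exactly the same way, so you have faithfully completed an argument whose flaw is already present in the paper: both your proof and Corollary~\ref{cor_ind3Class3} are valid precisely under the additional hypothesis $3\mid m$ (equivalently $\alpha\in\gf_q$), and with that hypothesis added your proof is correct and is the intended one.
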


\begin{proof}
Clearly such $\alpha$ does exist in $\gf_{2^3}^\ast$.
According to Proposition \ref{prop_ind3Trinomial}, we have
\begin{eqnarray*}
  &&\det M_b  \\
  &=& b^{q^2+q+1} + \tr_{n/m}\left(b^{q^2+2} + \alpha^{2}b^{3} + b^{q+2}\right) \\
  &=& B_3 + (B_3+ B_1B_2) + \alpha^2(B_1^3 + B_3 + B_1B_2) \\
   &=& \alpha^2B_3 + \alpha^2 B_1^3 + (1+\alpha^2)B_1B_2.
\end{eqnarray*}
Then a similar but  much simple argument as in Theorem \ref{th_ind3Class1} will prove this corollary.
We leave it to the interested readers.
\end{proof}

Several classes of known constructions can be explained by Proposition \ref{prop_ind3Trinomial}.
\begin{example}\label{ex_known_Pla}
In Proposition \ref{prop_ind3Trinomial},

(1) Let $c_1=0$ and $c_2=c_3=1$. Then $F(x)= x^{q^2+q} + x^{q^2+1}$ is pseudo-planar over $\gf_{2^n}$ if and only if
$$b^{q^2+q+1} + \tr_{n/m}\left(b^{q+2} + b^{3} \right)\neq 0$$
for any $b\in \gf_{2^n}^\ast$, which is the same equation as in Theorem \ref{th_ind3Class1}.
Hence $F$ is pseudo-planar if and only if $m\not\equiv 1\mod 3$. This is  \cite[Proposition 3.8]{HuLiZhang}.

(2) Let $c_1=c_2=1$ and $c_3=0$. Then $F(x)= x^{q+1} + x^{q^2+q}$ is pseudo-planar over $\gf_{2^n}$ if and only if
$$b^{q^2+q+1} + \tr_{n/m}\left(b^{q^2+2} + b^{3} \right)\neq 0$$
for any $b\in \gf_{2^n}^\ast$, which holds if and only if $m\not\equiv 2\mod 3$ by
a similar proof as in Theorem \ref{th_ind3Class1}. This is \cite[Proposition 3.6]{HuLiZhang}.

(3) Let $c_1=a^{-(q+1)}$, $c_2=0$ and $c_3=a^{q^2+1}$. Then $F(x)= a^{-(q+1)}x^{q+1} + a^{q^2+1}x^{q^2+1}$ is pseudo-planar over $\gf_{2^n}$ if and only if
\begin{equation}\label{eq_new_condition}
    b^{q^2+q+1} + \tr_{n/m}\left( a^{-2(q^2+q)}b^{2q+1}  + a^{2(q^2+1)}b^{q+2} \right)\neq 0
\end{equation}
for all $b\in \gf_{2^n}^\ast$. In   \cite[Proposition 3.2]{HuLiZhang}, a sufficient and necessary condition for
$F$ to be pseudo-planar was given as follows.
\begin{equation}\label{eq_HLZ_condition}
\begin{array}{lll}
  &&\tr_{n/m}\left( (a^{q^2+q} + a^{-q^2-q-2})(a^{q+1} + b^{q-1}) b^{q+2} \right.\\
  && \ \ \ \left. + a^{q-q^2} b^3 + b\right)\neq 0
\end{array}
\end{equation}
for all $b\in \gf_{2^n}^\ast$.
It seems that the sufficient and necessary condition here is more simple and compact, and
may be more easily handled.
\end{example}

In the end of this subsection, we revisit a class of pseudo-planar monomial proved by Scherr and Zieve.
For the readers' convenience, we recall their theorem.

\begin{Th}\label{th_ind3SZ}\cite{ScherrZieve}
For any positive integer $k$, write $q=2^{2k}$. If $c\in \gf_{q^3}^\ast$ is
a $(q-1)$-th power but not a $3(q-1)$-th power, then the function
$F(x)=cx^{q^2+q}$ is pseudo-planar over $\gf_{q^3}$.
\end{Th}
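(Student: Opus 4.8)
The statement concerns the monomial $F(x) = cx^{q^2+q}$ over $\gf_{q^3}$ with $q = 2^{2k}$, so $t = 3$ and we are squarely in the framework of Theorem \ref{th_ind3Gen}. The plan is to apply that theorem directly: compute $A_2$ for this monomial, form $\det M_b$, and show that the hypothesis on $c$ (a $(q-1)$-th power but not a $3(q-1)$-th power) is exactly what forces $\det M_b \neq 0$ for all nonzero $b$. This fits the trinomial Proposition \ref{prop_ind3Trinomial} as the special case $c_1 = c_3 = 0$, $c_2 = c$, so the pseudo-planarity condition reads
\begin{equation*}
    b^{q^2+q+1} + \tr_{n/m}\left(c^{2}b^{3}\right)\neq 0
\end{equation*}
for all $b \in \gf_{2^n}^\ast$.

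\emph{First} I would rewrite this condition using the symmetric-function setup of Section III.B. With $B_3 = \mathrm{N}_{n/m}(b) = b^{q^2+q+1}$ and the identity \eqref{eq_ind3_Trb3}, namely $\tr_{n/m}(b^3) = B_1^3 + B_3 + B_1 B_2$, the determinant becomes a symmetric polynomial in $B_1, B_2, B_3$; this is \textbf{Case 1} (the symmetric case) from the discussion following \eqref{eq_mb_Gen}. \emph{Next}, following the strategy laid out there, I would set $\det M_b = 0$ to obtain \emph{Relation X} among $B_1, B_2, B_3$, and then ask whether any irreducible $m_b(x) = x^3 + B_1 x^2 + B_2 x + B_3$ over $\gf_q$ can have coefficients satisfying it. The hypothesis that $c$ is a $(q-1)$-th power lets me normalize: writing $c = \delta^{q-1}$ and rescaling $b$, the constant $c^2$ should be absorbable so that \emph{Relation X} becomes a clean relation purely among the $B_i$, independent of $c$.

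\emph{Then} the arithmetic hypothesis enters. The condition ``not a $3(q-1)$-th power'' is a cube-type restriction on the coset of $c$ modulo $(q-1)$-th powers, and since $q = 2^{2k}$ we have $3 \mid q - 1$, so $\gcd(3, q^3 - 1)$-type divisibility is in play; I expect \emph{Relation X}, after normalization, to force $m_b(x)$ to factor over $\gf_q$ (equivalently, to have a root in $\gf_q$ or split) unless $c$ lies in the forbidden coset. Concretely, I anticipate the zero locus of the determinant to reduce to an equation whose solvability is governed by whether a certain element is a cube in $\gf_q^\ast$ — precisely the $3(q-1)$-th power condition — so that for admissible $c$, every $b$ with $\det M_b = 0$ would have to lie in a proper subfield, contradicting irreducibility of $m_b(x)$.

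\textbf{The main obstacle} will be disentangling the cube/coset arithmetic: translating ``$(q-1)$-th power but not $3(q-1)$-th power'' into a usable statement about the normalized \emph{Relation X}, and verifying that this is exactly the threshold separating the case where some irreducible $m_b(x)$ satisfies the relation from the case where none does. I would also need to handle separately the $b \in \gf_q^\ast$ sub-case (where $B_1 = b$, $B_2 = b^2$, $B_3 = b^3$ and the determinant simplifies), as was done in Theorems \ref{th_ind3Class1} and \ref{th_ind3Class2}, to confirm the condition is never violated there. Since Scherr and Zieve's original proof uses different (Galois-theoretic) machinery, the value of this route is to show the new determinant approach recovers their result; the payoff and the risk both lie in that final cube-counting step.
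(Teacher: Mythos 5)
Your plan follows the same route as the paper's own treatment of this statement (Proposition \ref{prop_ind3Monmial} and the remark after it): reduce to the determinant condition of Proposition \ref{prop_ind3Trinomial} with $c_1=c_3=0$, $c_2=c$, absorb $c$ by rescaling $b$, and then argue through the symmetric functions $B_1,B_2,B_3$ and the irreducibility of $m_b(x)$. However, there is one genuine conceptual error that would derail the execution: you claim that after normalization \emph{Relation X} becomes ``a clean relation purely among the $B_i$, independent of $c$.'' That cannot happen, and the theorem itself shows why: $c=1$ is a $(q-1)$-th power (and also a $3(q-1)$-th power), yet $F(x)=x^{q^2+q}$ is \emph{not} pseudo-planar, so the relation must retain information about $c$. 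A second, smaller slip is the order of operations: before any rescaling the trace term in \eqref{eq_ind3TriCond} is $\tr_{n/m}(c^2b^3)$, which is \emph{not} symmetric in $x_1,x_2,x_3$ (its conjugate coefficients $c^2, c^{2q}, c^{2q^2}$ are distinct unless $c\in\gf_q$), so the normalization must precede the symmetric-function step, not follow it. What actually happens is this: since $3\mid q-1$, every $(q-1)$-th power is a cube; write $c=c_0^3$ and substitute $a=c_0^{-2}b$, which gives $\det M_a = uB_3 + \tr_{n/m}(b^3) = (u+1)B_3 + B_1^3 + B_1B_2$, where the scalar $u=c_0^{-2(q^2+q+1)}={\rm N}_{n/m}(c_0)^{-2}$ survives in front of the norm. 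One checks $u^3=c^{-2(q^2+q+1)}=1$ (because $c$, being a $(q-1)$-th power, satisfies $c^{q^2+q+1}=1$), and by Hilbert 90, $u=1$ exactly when $c$ is a $3(q-1)$-th power. So the hypothesis of the theorem is precisely the statement that $u$ is a primitive cube root of unity, and $q=2^{2k}$ guarantees that $\omega$, an element of order $3$, lies in $\gf_q$.

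With that correction your case analysis goes through, and the endgame is simpler than the ``cube-counting'' you anticipate. For $b\in\gf_q^\ast$ one gets $\det M_a=(u+1)b^3$, and for $B_1=0$ one gets $\det M_a=(u+1)B_3$; both are nonzero precisely because $u\neq 1$ — this is where the ``not a $3(q-1)$-th power'' hypothesis is consumed. For $B_1=1$ (WLOG), the relation $\det M_a=0$ forces $B_3=(B_2+1)/(u+1)=\omega(B_2+1)$ (taking $u=\omega$), and then $m_b(x)=x^3+x^2+B_2x+\omega(B_2+1)=(x+\omega)\left(x^2+\omega^2x+B_2+1\right)$ is reducible over $\gf_q$ for \emph{every} $B_2\in\gf_q$: the fixed root $x=\omega$ does all the work, with no counting argument and no solvability analysis. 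This explicit factorization, which your proposal stops short of, is exactly how the paper closes the proof.
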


\begin{Prop}\label{prop_ind3Monmial}
Let $n=3m$, and let
$$F(x) = c x^{2^{2m}+2^m} \in \gf_{2^n}[x].$$
Assume that $c$ is a nonzero cube, and $c_0\in \gf_{2^n}^\ast$ such that $c_0^3=c$.
Set $q=2^m$ and $u=c_0^{-2(q^2+q+1)}$.  Then $F$ is pseudo-planar over $\gf_{2^n}$ if and only if
$u\neq 1$ and $$x^3+x^2+B_2x + \frac{B_2+1}{u+1}$$ is reducible over $\gf_q$ for any $B_2\in \gf_q$.
Particularly, if $m$ is even and $u=\omega$, where $\omega$ is an element with order 3,
then $F$ is a pseudo-planar function over $\gf_{2^n}$.
\end{Prop}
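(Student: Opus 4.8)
The plan is to invoke Proposition~\ref{prop_ind3Trinomial} with the monomial parameters $c_1=0$, $c_2=c$, $c_3=0$, which reduces pseudo-planarity of $F(x)=cx^{q^2+q}$ to the condition that $b^{q^2+q+1}+\tr_{n/m}(c^2 b^3)\neq 0$ for all nonzero $b$. First I would absorb the cube $c=c_0^3$ by the substitution $b\mapsto c_0^{-2}b$ (or an equivalent scaling), which is legitimate because $F$ is pseudo-planar if and only if the displayed inequality holds for all $b\in\gf_{2^n}^\ast$, and scaling $b$ by a nonzero constant merely permutes the nonzero elements. The aim of this normalization is to rewrite $b^{q^2+q+1}+\tr_{n/m}(c^2b^3)$ in the form $B_3+u\,\tr_{n/m}(b^3)$ for a suitable constant, where $B_3=\mathrm{N}_{n/m}(b)=b^{q^2+q+1}$ and $u=c_0^{-2(q^2+q+1)}$; the exponent $2(q^2+q+1)$ is exactly the norm exponent doubled, so the Teichm\"uller-type scaling identity $\mathrm{N}_{n/m}(c_0^{-2})=c_0^{-2(q^2+q+1)}=u$ is what makes $u$ appear naturally.

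Next I would use the identity~\eqref{eq_ind3_Trb3}, namely $\tr_{n/m}(b^3)=B_1^3+B_3+B_1B_2$, to convert $\det M_b$ into a symmetric expression in the elementary symmetric functions $B_1,B_2,B_3$ of $x_1=b,x_2=b^q,x_3=b^{q^2}$. After substitution $\det M_b$ becomes a $\gf_q$-polynomial relation of the shape $(1+u)B_3+u\,B_1^3+u\,B_1B_2$ (up to the precise bookkeeping of which terms carry the factor $u$). This places the problem squarely in the symmetric setting of Case~1 of Section~III.B. Then, as in the proof of Theorem~\ref{th_ind3Class2}, I would first dispose of $b\in\gf_q^\ast$ (where $B_1=b$, $B_2=b^2$, $B_3=b^3$ collapse the determinant to a nonzero monomial unless $u$ forces a cancellation, which is precisely where the condition $u\neq1$ enters), and then assume $b\in\gf_{q^3}^\ast\setminus\gf_q$ so that $m_b(x)$ is irreducible of degree $3$.

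The heart of the argument is the translation of the assumption $\det M_b=0$ into a statement about $m_b(x)=x^3+B_1x^2+B_2x+B_3$. Setting $\det M_b=0$ and solving the resulting relation for $B_3$ in terms of $B_1,B_2,u$ yields, after normalizing $B_1=1$ by the homogeneity $\det M_{cb}=c^3\det M_b$ for $c\in\gf_q^\ast$, the value $B_3=\tfrac{B_2+1}{u+1}$. Thus $\det M_b=0$ for some $b\in\gf_{q^3}^\ast\setminus\gf_q$ is equivalent to the existence of some $B_2\in\gf_q$ for which $x^3+x^2+B_2x+\tfrac{B_2+1}{u+1}$ is \emph{irreducible} over $\gf_q$. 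Negating this gives the stated criterion: $F$ is pseudo-planar if and only if $u\neq1$ and $x^3+x^2+B_2x+\tfrac{B_2+1}{u+1}$ is reducible for every $B_2\in\gf_q$.

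For the final ``particularly'' clause, when $m$ is even and $u=\omega$ has order $3$, I would show directly that the cubic $x^3+x^2+B_2x+\tfrac{B_2+1}{u+1}$ can never be irreducible over $\gf_q$, by exhibiting a root in $\gf_q$ or by a discriminant/resolvent argument tailored to characteristic two and to $\gcd(3,q-1)$; here $q=2^m$ with $m$ even gives $3\mid q-1$, so $\omega\in\gf_q$ and the coefficient $\tfrac{1}{u+1}$ is well defined. The main obstacle I anticipate is this last reducibility step: controlling when a parametrized family of cubics over $\gf_{2^m}$ is uniformly reducible is delicate, and one must carefully handle the trace-based irreducibility test for cubics in characteristic two (rather than the odd-characteristic discriminant) and verify that the presence of the order-$3$ element $\omega$ forces a factorization for every parameter value $B_2$. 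The symmetric reduction and the normalization are routine; pinning down the reducibility of the cubic uniformly in $B_2$ is where the real work lies.
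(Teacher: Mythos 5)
Your plan follows the paper's proof essentially step for step: invoke Proposition~\ref{prop_ind3Trinomial} with $c_1=c_3=0$, substitute $a=c_0^{-2}b$ so that the norm term picks up $u=\mathrm{N}_{n/m}(c_0^{-2})$, rewrite the determinant via $\tr_{n/m}(b^3)=B_1^3+B_1B_2+B_3$, dispose of $b\in\gf_q^\ast$ (which is exactly where $u\neq 1$ enters), normalize $B_1=1$ by homogeneity, and read off $B_3=(B_2+1)/(u+1)$, converting pseudo-planarity into uniform reducibility of the cubic. This is the paper's argument. Three points need tightening, though. First, the bookkeeping you hedge on does matter: after the substitution the correct expression is $u\,b^{q^2+q+1}+\tr_{n/m}(b^3)$, i.e.\ $u$ multiplies the norm term $B_3$ (exactly as your own identity $\mathrm{N}_{n/m}(c_0^{-2})=u$ predicts), giving $(u+1)B_3+B_1^3+B_1B_2$; if $u$ sat literally on the trace term as displayed, the normalization $B_1=1$ would yield $B_3=u(B_2+1)/(u+1)$, a different cubic. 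Second, the normalization $B_1=1$ presupposes $B_1\neq 0$, so the stated equivalence also requires the case $B_1=0$ (which can occur for $b\notin\gf_q$ when $t=3$): there $\det M_a=(u+1)B_3\neq 0$ since $B_3=\mathrm{N}_{n/m}(b)\neq 0$ and $u\neq 1$; this is the paper's Case 1 and cannot be skipped.

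Finally, the step you single out as ``where the real work lies'' is in fact immediate, and this is the only place where your proposal stops short of a proof. When $m$ is even and $u=\omega$ has order $3$, one has $\omega+1=\omega^2$, hence $1/(u+1)=\omega^{-2}=\omega$, so the cubic is $x^3+x^2+B_2x+\omega(B_2+1)$; then $x=\omega$ is a root for every $B_2\in\gf_q$, since $\omega^3+\omega^2+B_2\omega+\omega(B_2+1)=1+\omega^2+\omega=0$. The paper records this as the factorization $(x+\omega)\left(x^2+\omega^2x+B_2+1\right)$, and $\omega\in\gf_q$ because $m$ is even. No discriminant or resolvent machinery is needed. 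With these touch-ups your proposal coincides with the paper's proof.
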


\begin{proof}
It follows from Proposition \ref{prop_ind3Trinomial} that $F$ is pseudo-planar if and only if
\begin{eqnarray*}
  \det M_a= a^{q^2+q+1} + \tr_{n/m}\left(c^2a^3\right)\neq 0,
\end{eqnarray*}
for all $a\in \gf_{2^n}^\ast$.
Let $a=c_0^{-2}b$. Then $a^3=c_0^{-6}b^3=c^{-2}b^3$, and $F$ is pseudo-planar if and only if
\begin{equation}\label{eq_ind3Mon}
    \det M_a =  ub^{q^2+q+1} + \tr_{n/m}(b^3)\neq 0
\end{equation}
for all $b\in \gf_{2^n}^\ast$, where $u=c_0^{-2(q^2+q+1)}\neq 0$.

If $b\in \gf_q^\ast$, then
$$\det M_a=(u+1)b^3,$$
which is nonzero if and only if $u\neq 1$.

In the following, we assume that $u\neq 1$ and $b\in \gf_{q^3}^\ast\setminus \gf_{q}$.
Let $B_1, B_2, B_3$ be defined as before.
Plugging $\tr_{n/m}(b^3)=B_1^3 + B_1B_2 + B_3$ into \eqref{eq_ind3Mon},
we have
$$\det M_a =  (u+1)B_3 + B_1^3 + B_1B_2. $$
We distinguish two cases.

{\bf Case 1: $B_1=0$.}

Then it is clear that $\det M_a=(u+1)B_3\neq 0$.

 {\bf Case 2: $B_1\neq 0$.}

WLOG, we assume that $B_1=1$. Then
$$\det M_a =  (u+1)B_3 + B_2 + 1. $$
Assume $\det M_a=0$ for some $b$. Then it
follows that
$$B_3 =\frac{B_2 + 1}{u+1}.$$
Let us consider the polynomial
\begin{equation}\label{eq_ind3_Mon_m0}
    m_b(x)=x^3 + x^2 + B_2x + \frac{B_2 + 1}{u+1}.
\end{equation}

According to the analysis in Section III.B, $F$ is pseudo-planar over $\gf_{2^n}$ if and only if
$u\neq 1$ and $m_b(x)$ is reducible over $\gf_{2^m}$ for any $B_2\in \gf_{2^m}$.

Now we prove the second part. Assume that $m$ is even and $u=\omega$, where $\omega$ is an element with order 3.
Then \eqref{eq_ind3_Mon_m0} turns to
\begin{eqnarray*}
   m_b(x) &=& x^3 + x^2 + B_2x + \omega (B_2 + 1) \\
   &=& (x+\omega) (x^2+\omega^2 x + B_2+1),
\end{eqnarray*}
which is reducible over $\gf_q$ for any $B_2\in \gf_q$. Hence $F$ is pseudo-planar over $\gf_{2^n}$.
\end{proof}

It can be easily verified that the condition in the last part of Proposition \ref{prop_ind3Monmial}, ie. $m$ is even and $u=\omega$,
 is equivalent to the sufficient condition in Theorem \ref{th_ind3SZ}. Hence we
 give another proof for Theorem \ref{th_ind3SZ}. Moreover, a sufficient and necessary
 condition for $F$ to be pseudo-planar is given here.

\subsection{Case 2: Extension Degree $t=4$}

\begin{Th}\label{th_ind4Gen}
Assume $n=4m$ and $q=2^m$. Let
\begin{eqnarray*}
  F(x) &=& \sum\limits_{i=0}^{3m-1} c_{1, i}x^{2^i(q+1)} + \sum\limits_{i=0}^{2m-1} c_{2, i}x^{2^i(q^2+1)} \\
  &&  +
\sum\limits_{i=0}^{m-1} c_{3, i}x^{2^i(q^3+1)}\in \gf_{2^n}[x].
\end{eqnarray*}
Then $F$ is pseudo-planar over $\gf_{2^n}$ if and only if
\begin{eqnarray*}
  & b^{q^3+q^2+q+1} + A_2^{2q+2} + (A_3^{2q^2+2} + A_3^{2q^3+2q}) \\
&\ \   + (b^{q^2+1}A_2^{2q} + b^{q^3+q}A_2^{2}) + \tr_{n/m}\left(b^{q^2+q}A_3^2\right)\neq 0
\end{eqnarray*}
for any nonzero $b$ in $\gf_{2^n}$, where
\begin{equation*}
    \left\{\begin{array}{lll}
             A_2 & = & \sum\limits_{i=0}^{2m-1}\left((c_{2,i}b)^{2^{n-i}} + (c_{2,i}b)^{2^{2m-i}}\right), \\
            A_3 & = & \sum\limits_{i=0}^{m-1}\left(c_{3,i}b\right)^{2^{n-i}} + \sum\limits_{i=0}^{3m-1}\left(c_{1,i}b\right)^{2^{3m-i}}.
           \end{array}
    \right.
\end{equation*}
\end{Th}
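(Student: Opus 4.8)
The plan is to specialize Theorem \ref{th_Gen} to $t=4$ and then evaluate the Dickson determinant $\det M_b$ of \eqref{eq_detMb_Gen} explicitly. By Theorem \ref{th_Gen}, $F$ is pseudo-planar if and only if $\det M_b\neq 0$ for every nonzero $b$, where $A_0=b$ and $A_1,A_2,A_3$ are the linearized polynomials of $b$ given by \eqref{eq_Aall}. The symmetry relations \eqref{eq_Atj} here read $A_1=A_3^q$, $A_2=A_2^{q^2}$ and $A_3=A_1^{q^3}$; in particular $A_2\in\gf_{q^2}$, and only $A_2,A_3$ (together with $b$) need to survive in the final formula. So the whole content of the theorem is the computation that this $4\times4$ determinant collapses to the displayed six-term expression.

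First I would write $\det M_b$, which in characteristic $2$ equals the permanent, as the sum over $\sigma\in S_4$ of the products $P(\sigma)=\prod_{r=0}^{3}A_{(\sigma(r)-r)\bmod 4}^{q^r}$. The observation that makes this tractable is a pairing lemma: $P(\sigma)=P(\sigma^{-1})$ for every $\sigma$. This follows by reindexing the product for $\sigma^{-1}$ via $r=\sigma(s)$ and applying \eqref{eq_Atj} in the form $A_{(4-j)\bmod 4}=A_j^{q^{(4-j)\bmod 4}}$; the Frobenius exponents then telescope back to $q^s$. Since each $A_i\in\gf_{q^4}$, exponents of $q$ only matter modulo $4$, so the identity is exact. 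Consequently every permutation that is not an involution cancels against its distinct inverse, and $\det M_b$ reduces to the sum over the $10$ involutions of $S_4$: the identity, the six transpositions, and the three double transpositions.

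Next I would evaluate these ten surviving terms using $A_1=A_3^q$ and $A_2^{q^2}=A_2$. The identity gives $b^{q^3+q^2+q+1}$. The double transpositions $(0\,2)(1\,3)$, $(0\,1)(2\,3)$, $(0\,3)(1\,2)$ give the three pure terms $A_2^{2q+2}$, $A_3^{2q^3+2q}$, $A_3^{2q^2+2}$. The transpositions $(0\,2)$ and $(1\,3)$ give the mixed terms $b^{q^3+q}A_2^2$ and $b^{q^2+1}A_2^{2q}$. Finally the four transpositions $(0\,1),(0\,3),(1\,2),(2\,3)$ yield $b^{q^3+q^2}A_3^{2q}$, $b^{q^2+q}A_3^2$, $b^{q^3+1}A_3^{2q^2}$, $b^{q+1}A_3^{2q^3}$, which are precisely the four Frobenius conjugates $(b^{q^2+q}A_3^2)^{q^j}$, $0\le j\le 3$, and hence assemble into $\tr_{n/m}(b^{q^2+q}A_3^2)$. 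Summing the ten contributions reproduces the asserted expression for $\det M_b$, and the equivalence with pseudo-planarity is then immediate from Theorem \ref{th_Gen} (through Lemma \ref{le_PP_det}).

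The main obstacle is purely organizational: without the pairing lemma one faces a $24$-term expansion with repeated use of the relations among the $A_i$, which is error-prone. The pairing argument is the conceptual shortcut that cuts the work roughly in half and explains structurally why the $3$-cycles and $4$-cycles disappear; the remaining effort is the bookkeeping of the ten involution monomials and the recognition of the four transposition terms as a single trace.
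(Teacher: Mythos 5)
Your proposal is correct and follows the same route as the paper: specialize Theorem \ref{th_Gen} to $t=4$, invoke the relations $A_1=A_3^q$ and $A_2^{q^2}=A_2$ from \eqref{eq_Atj}, and reduce everything to showing the $4\times 4$ Dickson determinant equals the displayed six-part expression, which the paper dispatches as ``a direct computation'' after rewriting the matrix in the (now symmetric) form. Your pairing lemma $P(\sigma)=P(\sigma^{-1})$ is precisely the combinatorial content of that symmetry, so your involution-only expansion is a valid and cleanly organized version of the computation the paper leaves implicit; all ten involution terms you list are correct, and the four transposition terms involving $A_3$ do assemble into $\tr_{n/m}\left(b^{q^2+q}A_3^2\right)$ as claimed.
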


\begin{proof}
By Theorem \ref{th_Gen}, the dual linearized polynomial of $ \mathbb{L}_a(x) = F(x+a) + F(x) + F(a) + ax$ is
$\mathbb{L}^\ast_b(a)$:

\begin{eqnarray*}
  \mathbb{L}^\ast_b(a) &=& A_0\cdot a +  A_1\cdot a^{2^m} + A_2\cdot a^{2^{2m}} + A_3\cdot a^{2^{3m}},
\end{eqnarray*}
where
\begin{equation*}
    \left\{\begin{array}{lll}
             A_0 & = & b, \\
             A_1 &=&  \sum\limits_{i=0}^{3m-1}\left(c_{1,i}b\right)^{2^{n-i}} + \sum\limits_{i=0}^{m-1}\left(c_{3,i}b\right)^{2^{m-i}} = A_3^q,  \\
             A_2 & = & \sum\limits_{i=0}^{2m-1}\left((c_{2,i}b)^{2^{n-i}} + (c_{2,i}b)^{2^{2m-i}}  \right)\in \gf_{q^2}, \\
            A_3 & = & \sum\limits_{i=0}^{m-1}\left(c_{3,i}b\right)^{2^{n-i}} + \sum\limits_{i=0}^{3m-1}\left(c_{1,i}b\right)^{2^{3m-i}}.
           \end{array}
\right.
\end{equation*}
Hence
\begin{eqnarray*}
 \det M_b   &=& \left|\begin{array}{cccc}
                       A_0 & A_1 & A_2 & A_3 \\
                       A_3^q & A_0^q & A_1^q & A_2^q \\
                       A_2^{q^2} & A_3^{q^2} & A_0^{q^2} & A_1^{q^2} \\
                       A_1^{q^3} & A_2^{q^3} & A_3^{q^3} &  A_0^{q^3}
                     \end{array}
  \right|\\
  &=& \left|\begin{array}{cccc}
                       A_0 & A_3^q & A_2 & A_3 \\
                       A_3^q & A_0^q & A_3^{q^2} & A_2^q \\
                       A_2 & A_3^{q^2} & A_0^{q^2} & A_3^{q^3} \\
                       A_3 & A_2^q & A_3^{q^3} &  A_0^{q^3}
                     \end{array}
  \right|
\end{eqnarray*}
Then the result follows from Theorem \ref{th_Gen} and a direct computation.
\end{proof}

Similarly as in the extension degree 3 case,  we
set up some notations before constructing  pseudo-planar functions.
Let
$$x_1=b,\  x_2=b^q,\  x_3=b^{q^2}, \  \text{and } x_4=b^{q^3}.$$
Then \eqref{eq_B1Bt} and \eqref{eq_mb_Gen} become
 \begin{eqnarray*}
   B_1 &=& x_1 + x_2 + x_3 + x_4 = \tr_{n/m}(b), \\
    B_2 &=& x_1x_2 + x_1x_3 + x_1x_4 + x_2x_3 + x_2x_4 + x_3x_4,  \\
     B_3 &=& x_1x_2x_3 + x_1x_2x_4 + x_1x_3x_4 + x_2x_3x_4, \\
      B_4 &=& x_1  x_2  x_3  x_4 = {\rm{N}}_{n/m}(b),
\end{eqnarray*}
and
$$m_b(x)=x^4 + B_1x^3 + B_2x^2 + B_3x + B_4\in \gf_q[x]$$
respectively. 
%

\begin{Th}\label{th_ind4Class1}
Set $q=2^m$ and $n=4m$.  Let
$$F(x)= x^{q+1} + x^{q^2+1} + x^{q^3+q} + x^{q^3+1}.$$
Then $F$ is pseudo-planar over $\gf_{2^n}$.
\end{Th}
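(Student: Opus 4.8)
The plan is to apply Theorem~\ref{th_ind4Gen} verbatim and then grind the resulting five-term expression for $\det M_b$ into a transparent factored form. Matching the exponents of $F(x)=x^{q+1}+x^{q^2+1}+x^{q^3+q}+x^{q^3+1}$ against the general shape in Theorem~\ref{th_ind4Gen}, one has $c_{1,0}=c_{2,0}=c_{2,m}=c_{3,0}=1$ and all other coefficients zero (note $q^3+q=2^m(q^2+1)$, so the term $x^{q^3+q}$ contributes $c_{2,m}$). Feeding these into the formulas for $A_2$ and $A_3$, I expect
$$A_2=b+b^q+b^{q^2}+b^{q^3}=\tr_{n/m}(b)\in\gf_q,\qquad A_3=b+b^{q^3}.$$
Since $A_2\in\gf_q$ we have $A_2^q=A_2$, which already kills several Frobenius twists in the determinant formula.

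First I would introduce the conjugates $x_j=b^{q^{j-1}}$ and the paired quantities $p=x_1+x_3$, $r=x_2+x_4$, $P=x_1x_3$, $R=x_2x_4$, together with $w=p^2+pr+r^2$. The key observations are the field memberships: $r=p^q$ and $p\in\gf_{q^2}$ (because $p^{q^2}=b^{q^2}+b^{q^4}=p$), $R=P^q$ and $P\in\gf_{q^2}$, and $w^q=w$ so $w\in\gf_q$. The determinant is \emph{not} symmetric (it involves $P+R$, one of the three pairings of the $x_j$), so we are in Case~2 of Section~III.B; but the factorization I aim for bypasses the resolvent-cubic bookkeeping. The heart of the argument is to show that the five summands of $\det M_b$ collapse to
$$\det M_b=(w+P)(w+R)=\mathrm{N}_{q^2/q}(w+P).$$
Concretely, $A_2^{2q+2}$ becomes $B_1^4$, the pair $A_3^{2q^2+2}+A_3^{2q^3+2q}$ becomes $(pr)^2$, the fourth group becomes $B_1^2(P+R)$, and $b^{q^3+q^2+q+1}$ is $PR$. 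The one genuinely delicate step is the trace term $\tr_{n/m}(b^{q^2+q}A_3^2)$: expanding $b^{q^2+q}A_3^2=x_2x_3(x_1^2+x_4^2)$ and summing the four cyclic Frobenius shifts must be checked, monomial by monomial, to equal $pr(P+R)$. Assembling the pieces and using $p^4+p^2r^2+r^4=w^2$ then yields the claimed factorization; this computation is the main obstacle, since one mishandled twist would wreck the clean norm form.

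Once the norm form is in hand, the conclusion is short. Because $\mathrm{N}_{q^2/q}(w+P)=0$ forces $w+P=0$, i.e.\ $P=w\in\gf_q$, and since $P\in\gf_{q^2}$ this gives $P=P^q=R=:\lambda\in\gf_q$. I would first dispatch the degenerate cases where $b$ lies in a proper subfield: for $b\in\gf_q$ one gets $\det M_b=b^4$, and for $b\in\gf_{q^2}\setminus\gf_q$ one gets $\det M_b=b^{2(q+1)}$, both nonzero, so $\det M_b\neq0$ there as required. So assume $b$ generates $\gf_{q^4}$, whence $m_b(x)$ is irreducible over $\gf_q$ of degree $4$ and in particular $p=b+b^{q^2}\neq0$; suppose for contradiction that $\det M_b=0$. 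Then $P=R=w=\lambda$, and the identities $x_1+x_3=p$, $x_1x_3=P=\lambda$ show that $b=x_1$ is a root of $t^2+pt+\lambda=0$ over $\gf_{q^2}$, with $\lambda=w=p^2+p^{q+1}+p^{2q}$.

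The finishing blow is a trace evaluation that makes the theorem hold for every $m$ without any residue condition on $m$. By Lemma~\ref{redu-condition}, $t^2+pt+\lambda$ is irreducible over $\gf_{q^2}$ precisely when $\tr_{2m/1}(\lambda/p^2)=1$. Writing $z=p^{q-1}$, one computes $\lambda/p^2=1+z+z^2$, and then
$$\tr_{2m/1}(1+z+z^2)=\tr_{2m/1}(1)+\tr_{2m/1}(z)+\tr_{2m/1}(z^2)=0,$$
since $\tr_{2m/1}(1)=0$ as $2m$ is even and $\tr_{2m/1}(z^2)=\tr_{2m/1}(z)$. Hence the trace is $0$, so $t^2+pt+\lambda$ is \emph{reducible} over $\gf_{q^2}$, forcing $b\in\gf_{q^2}$ and contradicting that $b$ generates $\gf_{q^4}$. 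Therefore $\det M_b\neq0$ for all nonzero $b$, and Theorem~\ref{th_Gen} (via Theorem~\ref{th_ind4Gen}) yields that $F$ is pseudo-planar. I expect the determinant simplification to be the real labor, and this identically-vanishing trace to be the conceptual crux explaining why the construction works unconditionally in $m$.
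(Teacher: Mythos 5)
Your proposal is correct, and after the common starting point it takes a genuinely different route from the paper's proof. Both arguments invoke Theorem~\ref{th_ind4Gen} with the same data $A_2=\tr_{n/m}(b)$, $A_3=b+b^{q^3}$, and your factored form is consistent with the paper's expansion: writing $p=b+b^{q^2}$, $r=p^q$, $P=b^{q^2+1}$, $R=P^q$, $w=p^2+pr+r^2$, the paper's expression $\det M_b=B_4+B_1^4+B_2^2+B_1B_3+t_1$ indeed equals $w^2+w(P+R)+PR=(w+P)(w+R)$, so your norm form $\det M_b=\mathrm{N}_{q^2/q}(w+P)$ is exactly the same determinant (I checked the delicate pieces you flagged: the coefficient matching $c_{2,m}=1$, the identity $\tr_{n/m}(b^{q^2+q}A_3^2)=pr(P+R)$, and the subfield values $b^4$ and $b^{2(q+1)}$). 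From there the proofs diverge. The paper runs the general Case~2 machinery of Section~III.B: it introduces $t_2,t_3$, expresses the elementary symmetric functions of $t_1,t_2,t_3$ in the $B_i$, splits into $B_1=0$ and $B_1=1$, and in each case shows $\det M_b=0$ would force an explicit factorization of the quartic $m_b(x)$ into two quadratics over $\gf_q$, contradicting irreducibility. You instead reduce vanishing of the determinant to the single equation $P=w$ (norms vanish only at zero), note that this forces $P=R=w\in\gf_q$ so that $b$ is a root of $t^2+pt+w$ over $\gf_{q^2}$, and then dispose of this by the trace criterion of Lemma~\ref{redu-condition} applied over $\gf_{q^2}$: since $w/p^2=1+z+z^2$ with $z=p^{q-1}$, and $\tr_{2m/1}(1)=0$ ($2m$ even) while $\tr_{2m/1}(z^2)=\tr_{2m/1}(z)$, the quadratic splits in $\gf_{q^2}$, forcing $b\in\gf_{q^2}$, a contradiction. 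What your route buys is brevity (no case split on $B_1$, no resolvent-style identities) and a conceptual explanation of why the theorem holds for every $m$ with no congruence condition: the relevant absolute trace vanishes identically. What it costs is generality: the norm factorization exploits the special pairing structure $\{x_1x_3,\,x_2x_4\}$ of this particular $F$, whereas the paper's computation is an instance of the template it reuses for the other families (e.g.\ Theorem~\ref{th_ind3Class2}).
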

\begin{proof}
According to Theorem \ref{th_ind4Gen}, we have
\begin{equation*}
    \left\{\begin{array}{lll}
             A_2 & = & b + b^q + b^{q^2} + b^{q^3} = \tr_{n/m}(b), \\
            A_3 & = & b^{q^3} + b.
           \end{array}
    \right.
\end{equation*}
Then a direct computation shows

\begin{equation*}
    \left\{\begin{array}{l}
             A_2^{2q+2} = \tr_{n/m}(b^4), \\
            A_3^{2q^2+2} + A_3^{2q^3+2q} = \tr_{n/m}(b^{2q+2}),\\
            b^{q^2+1}A_2^{2q} + b^{q^3+q}A_2^{2} = (b^{q^2+1} + b^{q^3+q})\cdot \tr_{n/m}(b^2),\\
            \tr_{n/m}\left(b^{q^2+q}A_3^2\right) = \tr_{n/m}\left(b^{2q^3+q^2+q} + b^{q^2+q+2}\right).
           \end{array}
    \right.
\end{equation*}

Hence
\begin{eqnarray*}
  \det M_b &=& b^{q^3+q^2+q+1} + \tr_{n/m}(b^4) + \tr_{n/m}(b^{2q+2})\\
   &&  \ + \ (b^{q^2+1} + b^{q^3+q})\cdot \tr_{n/m}(b^2) \\
   && \ + \ \tr_{n/m}\left(b^{2q^3+q^2+q} + b^{q^2+q+2}\right).
\end{eqnarray*}

Then we have
\begin{equation}\label{eq_ind4Mb}
    \det M_b = B_4 + B_1^4 + B_2^2 + B_1B_3 + t_1,
\end{equation}
where
\begin{equation}\label{eq_ind4t1}
    t_1 = x_1^3x_3 + x_1^2x_3^2 + x_1x_3^3 + x_2^3x_4 + x_2^2x_4^2 + x_2x_4^3.
\end{equation}

Let $t_2$ and $t_3$ be the images of $t_1$ under the transformation of $(12)$ (or $(34)$) and $(14)$  (or $(23)$) respectively.
 \begin{eqnarray*}
   t_2 &=& x_1^3x_4 + x_1^2x_4^2 + x_1x_4^3 + x_2^3x_3 + x_2^2x_3^2 + x_2x_3^3. \\
    t_3 &=& x_1^3x_2 + x_1^2x_2^2 + x_1x_2^3 + x_3^3x_4 + x_3^2x_4^2 + x_3x_4^3.
\end{eqnarray*}

Then the following identities hold.
\begin{equation}\label{eq_ind4_tsum}
    t_1 + t_2 + t_3 =  B_1^2B_2 + B_1B_3+B_2^2,
\end{equation}
\begin{equation}\label{eq_ind4_tquad}
    t_1t_2 + t_1t_3 + t_2t_3 =   B_1^5B_3 + B_1B_2^2B_3+ B_1^2B_3^2,
\end{equation}
\begin{eqnarray}\label{eq_ind4_tmul}
    t_1 t_2  t_3 &=&  B_4B_1^8 + B_4B_1^6B_2 + B_4B_1^4B_2^2 \\
 & & \nonumber + B_4B_1^2B_2^3 + B_1^2B_2^2B_3^2 + B_1^3B_3^3 + B_2^3B_3^2.
\end{eqnarray}

Firstly, we assume that $b\in \gf_{q^2}^\ast$. Then we have $x_1=x_3$ and $x_2=x_4$.
Further, it follows that $B_1=B_3=0$, $B_2=x_1^2+x_2^2$ and $t_1=x_1^4+x_2^4=B_2^2$.
Hence
$$\det M_b = B_4\neq 0$$
for any $b\in \gf_{q^2}^\ast$. In the following, we always assume that $b\in \gf_{q^4}^\ast \setminus \gf_{q^2}$.
 Hence $m_b(x)$ is an irreducible polynomial over $\gf_q$ with degree 4.

The following proof is split into two cases according to $B_1=0$ or not.

{\bf Case 1:  $B_1=0$.}

Now \eqref{eq_ind4Mb} becomes
\begin{equation}\label{eq_ind4_B10_Mb}
\det M_b = B_4 + B_2^2 + t_1,
\end{equation}
and \eqref{eq_ind4_tsum}, \eqref{eq_ind4_tquad} and \eqref{eq_ind4_tmul}
reduce to

\begin{equation}\label{eq_ind4_B10_tsum}
    t_1 + t_2 + t_3 =  B_2^2,
\end{equation}
\begin{equation}\label{eq_ind4_B10_tquad}
    t_1t_2 + t_1t_3 + t_2t_3 =  0,
\end{equation}
\begin{equation}\label{eq_ind4_B10_tmul}
    t_1 t_2  t_3 = B_2^3B_3^2.
\end{equation}

Assume, on the contrary,  that $\det M_b = 0$ for some $b\in \gf_{q^4}^\ast \setminus \gf_{q^2}$.
Then it follows from \eqref{eq_ind4_B10_Mb} that
\begin{equation}\label{eq_B10_t1}
    t_1 = B_4 + B_2^2.
\end{equation}
Plugging it into \eqref{eq_ind4_B10_tsum}, one gets
\begin{equation}\label{eq_B10_t23S}
    t_2 + t_3 = B_4.
\end{equation}
With \eqref{eq_ind4_B10_tquad}, we deduce that
\begin{equation}\label{eq_B10_t23M}
    t_2t_3 = t_1 (t_2 + t_3)=  B_4^2 + B_2^2B_4.
\end{equation}
Substituting \eqref{eq_B10_t1} and \eqref{eq_B10_t23M} into
\eqref{eq_ind4_B10_tmul} leads to
\begin{equation}\label{eq_ind4_B10}
    B_4^3 + B_2^4B_4 + B_2^3B_3^2=0.
\end{equation}
Since $B_4\neq 0$, we know $B_2\neq 0$.

Define $r=(B_4/B_2)^{1/2}$, $u=B_2$ and $v=B_4/B_2=r^2$. Then $u, v, r\in \gf_q$.
Now we compute
\begin{eqnarray*}
   && (x^2+rx+u)(x^2+rx+v) \\
   &=&  x^4 + (r^2+u+v)x^2 + r(u+v)x + uv\\
   &=&  x^4 + B_2x^2 + \left((B_4 + B_2^2)B_4^{1/2}/B_2^{3/2}\right)x + B_4\\
   &=&  x^4 + B_2x^2 + \left((B_4^3 + B_2^4B_4)/B_2^{3}\right)^{1/2}x + B_4\\
   &=&  x^4 + B_2x^2 + B_3x + B_4\\
   &=& m_b(x),
\end{eqnarray*}
where the last second equality follows from \eqref{eq_ind4_B10}. Thus
$m_b(x)$ can be factored into two quadratic polynomials over $\gf_{q}$, which is impossible.
Hence $\det M_b\neq 0$ if $B_1=0$.

{\bf Case 2:  $B_1\neq 0$.}

WLOG, we assume that $B_1=1$. Then  \eqref{eq_ind4Mb} becomes
\begin{equation}\label{eq_ind4_B11_Mb}
\det M_b = B_4  + B_2^2 + B_3 + t_1 + 1,
\end{equation}
and \eqref{eq_ind4_tsum}, \eqref{eq_ind4_tquad} and \eqref{eq_ind4_tmul}
reduce to
\begin{equation}\label{eq_ind4_B11_tsum}
    t_1 + t_2 + t_3 =  B_2 + B_3+B_2^2,
\end{equation}
\begin{equation}\label{eq_ind4_B11_tquad}
     t_1t_2 + t_1t_3 + t_2t_3 =   B_3 + B_2^2B_3+ B_3^2,
\end{equation}
\begin{eqnarray}\label{eq_ind4_B11_tmul}
    t_1 t_2  t_3 &=&  B_4 + B_4B_2 + B_4B_2^2 + B_4B_2^3\\
 & & \nonumber  + B_2^2B_3^2 + B_3^3 + B_2^3B_3^2.
\end{eqnarray}

Assume, on the contrary,  that $\det M_b = 0$ for some $b\in \gf_{q^4}^\ast \setminus \gf_{q^2}$.
Then it follows from \eqref{eq_ind4_B11_Mb} that
\begin{equation}\label{eq_B11_t1}
    t_1 = B_4  + B_2^2 + B_3 + 1.
\end{equation}

Plugging it into \eqref{eq_ind4_B11_tsum}, one gets
\begin{equation}\label{eq_B11_t23S}
    t_2 + t_3 = B_2 + B_3+B_2^2 +t_1 = B_4 + B_2 +1.
\end{equation}
With \eqref{eq_ind4_B11_tquad}, we deduce that
\begin{eqnarray}\label{eq_B11_t23M}
   t_2t_3& =& B_3 + B_2^2B_3+ B_3^2 + t_1 (t_2 + t_3)\\
\nonumber     &=&  B_4^2 + B_3B_4 + B_2^2B_4 + B_2B_4 + B_3^2\\
 \nonumber    & &\   + B_2^2B_3 + B_2B_3 + B_2^3 + B_2^2 +B_2 +1.
\end{eqnarray}
Substituting \eqref{eq_B11_t1} and \eqref{eq_B11_t23M} into
\eqref{eq_ind4_B11_tmul}, and after a direct computation, we finally get
\begin{equation}\label{eq_ind4_B11}
    B_4^3 + (B_2+1)B_4^2 + C_1B_4 + C_0=0,
\end{equation}
where
\begin{eqnarray*}
  C_1 &=& B_3(B_2+1)^2 +  B_2(B_2+1)^3, \\
  C_0 &=& B_3^2(B_2+1)^3 +  B_3(B_2+1)^4 + (B_2+1)^5.
\end{eqnarray*}
Combing the above equation with $B_4\neq 0$, one can conclude that $B_2+1\neq 0$.

Let
\begin{equation}\label{eq_ind4B42}
    B_4=(B_2+1)(z+1).
\end{equation}
 Plugging it into \eqref{eq_ind4_B11}, then dividing $(B_2+1)^3$ across the both sides,
 and  after simplification, we have
 \begin{equation}\label{eq_ind4_z}
    z^3 + (B_2^2+B_3+B_2+1)z + (B_3^2 + B_2B_3 +B_2+1)=0.
\end{equation}

In the rest of the proof,  we distinguish two subcases.

 {\bf Subcase 2.1: $B_4 = B_2^2+1$. }

 Plugging $B_4 = B_2^2+1$ into \eqref{eq_ind4B42},
 one can deduce that $z=B_2$, and then substituting it into \eqref{eq_ind4_z} leads to $B_3=B_2+1$.
  Let $r$ be an element of $\gf_{q^2}$ such that
 $r^2+r+B_2=0$. Define
 $$\phi(x) = x^2 + r x + (B_2+1).$$
 Since
 \begin{eqnarray*}
   && \  \tr_{2m/1}\left(\frac{B_2+1}{r^2}\right) = \tr_{2m/1}\left(\frac{r^2+r+1}{r^2}\right)\\
   &  = & \tr_{2m/1}\left(1+\frac{1}{r}+\frac{1}{r^2}\right)=0,
 \end{eqnarray*}
$\phi(x)$ is reducible over $\gf_{q^2}$ according to Lemma \ref{redu-condition}.
 Let $\tau\in \gf_{q^2}$ be a zero of $\phi(x)$. Then
 $$\tau^2 + r\tau + (B_2+1)=0.$$
 Now we compute
\begin{eqnarray*}
    & & m_b(\tau) \\
    &=&\tau^4 + \tau^3 + B_2\tau^2 + B_3\tau + B_4    \\
   &=&  \tau^4 + \tau^3 + (r^2+r)\tau^2 + (B_2+1)\tau + (B_2+1)^2\\
   &= & \left(\tau^2 + r\tau + (B_2+1)\right)^2  + \tau \left(\tau^2 + r\tau + (B_2+1)\right)  \\
   &=&  0.
\end{eqnarray*}
Thus $\tau\in \gf_{q^2}$ is a zero of $m_b(x)$, which contradicts the assumption that $m_b(x)$ is
irreducible over $\gf_q$.

 {\bf Subcase 2.2: $B_4\neq B_2^2+1$. }

Let us define
\begin{equation}\label{eq_ind4uv}
    u=B_2+1,  v=B_4/(B_2+1).
\end{equation}
Then $u\neq v$.  Set
 \begin{equation}\label{eq_ind4_r}
    r=\frac{B_3+u}{u+v}=\frac{B_2^2+B_2B_3+B_3+1}{B_2^2+B_4+1}.
 \end{equation}
    Then $u, v, r\in \gf_q$
 and
 \begin{equation}\label{eq_ind4_r1}
   r+1=\frac{B_3+v}{u+v}=\frac{B_2B_3+B_3+B_4}{B_2^2+B_4+1}.
 \end{equation}
Hence
\begin{eqnarray*}
   && (x^2+rx+u)(x^2+(r+1)x+v) \\
   &=&  x^4 + x^3 + (r(r+1)+u+v)x^2 \\
   && \ \ + ( (r+1)u + rv )x + uv\\
   &=&  x^4 + x^3 + (r(r+1)+u+v)x^2  \\
    && \ \ + \left( \frac{(B_3+v) u + (B_3+u) v }{u+v} \right)x + B_4 \\
   &=&  x^4 + x^3 + (r(r+1)+u+v)x^2 + B_3x + B_4.
\end{eqnarray*}
Now, to finish the proof, it suffices to prove that
\begin{equation}\label{eq_ind4_ruv}
    r(r+1)+u+v=B_2,
\end{equation}
which means that $m_b(x)$ can be factored into two polynomials with degree $2$ over $\gf_q$,
and it will then lead to a contradiction.

Plugging \eqref{eq_ind4uv}, \eqref{eq_ind4_r} and \eqref{eq_ind4_r1} into \eqref{eq_ind4_ruv} leads to
\begin{equation*}
    \frac{(B_2^2+B_2B_3+B_3+1)(B_2B_3+B_3+B_4)}{(B_2^2+B_4+1)^2} = \frac{B_4}{B_2+1} + 1.
\end{equation*}
Substituting \eqref{eq_ind4B42} into the above equation, we have
\begin{equation*}
    \frac{(B_2+B_3+1)(B_3+z+1)}{(B_2+z)^2} = z,
\end{equation*}
which can be easily verified to be equivalent to \eqref{eq_ind4_z}. Hence \eqref{eq_ind4_ruv}
always holds.

We finish the proof.
\end{proof}

\begin{Th}\label{th_ind4Class2}
Set $q=2^m$ and $n=4m$.  Let
$$F(x)=x^{q^2+q} + x^{q^3+q^2} + x^{q^3+q}.$$
Then $F$ is pseudo-planar over $\gf_{2^n}$.
\end{Th}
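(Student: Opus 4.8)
The plan is to apply Theorem \ref{th_ind4Gen} directly and reduce the pseudo-planarity criterion to a single statement about the characteristic polynomial $m_b(x)$. First I would read off the coefficients: writing the three exponents as $q^2+q = 2^m(q+1)$, $q^3+q^2 = 2^{2m}(q+1)$ and $q^3+q = 2^m(q^2+1)$, the function sits inside the general family of Theorem \ref{th_ind4Gen} with $c_{1,m}=c_{1,2m}=c_{2,m}=1$ and all other coefficients zero. Substituting into the formulas for $A_2$ and $A_3$ yields the compact expressions $A_2 = b^q + b^{q^3}$ (which indeed lies in $\gf_{q^2}$) and $A_3 = b^q + b^{q^2}$.

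Next I would substitute these into the determinant criterion of Theorem \ref{th_ind4Gen}, expanding every term in the conjugate variables $x_1=b,\ x_2=b^q,\ x_3=b^{q^2},\ x_4=b^{q^3}$, on which the $q$-th power map acts as the cyclic shift $x_i\mapsto x_{i+1}$. The simplification I expect to drive the proof is a clean cancellation of the "square" part: one finds $A_2^{2q+2} = A_3^{2q^2+2} + A_3^{2q^3+2q} = x_1^2x_2^2 + x_2^2x_3^2 + x_3^2x_4^2 + x_1^2x_4^2$, so these three terms together vanish in characteristic two. What survives is $B_4$ together with the full symmetric sum $\sum_{i\neq j} x_i^3 x_j$, coming from $b^{q^2+1}A_2^{2q} + b^{q^3+q}A_2^2$ and $\tr_{n/m}(b^{q^2+q}A_3^2)$. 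Using the power-sum identities $p_4 = B_1^4$ and $p_3 = B_1^3 + B_1B_2 + B_3$, this symmetric sum collapses to $B_1^2B_2 + B_1B_3$, giving $\det M_b = B_4 + B_1^2B_2 + B_1B_3$, a symmetric expression so that Case 1 of Section III.B applies.

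The decisive observation, around which I would build the whole argument, is that this determinant is exactly the evaluation of the characteristic polynomial at its subtrace. Since $m_b(x) = x^4 + B_1x^3 + B_2x^2 + B_3x + B_4$, a direct check gives $m_b(B_1) = B_1^2B_2 + B_1B_3 + B_4 = \det M_b$ (the two $B_1^4$ terms cancel in characteristic two). Now $B_1 = \tr_{n/m}(b) \in \gf_q$. If $b \in \gf_{q^2}^\ast$ then $B_1 = 0$, whence $\det M_b = B_4 = {\rm N}_{n/m}(b) \neq 0$. If instead $b \in \gf_{q^4}^\ast \setminus \gf_{q^2}$, then $b$ generates $\gf_{q^4}$ over $\gf_q$, so $m_b(x)$ is irreducible of degree $4$ over $\gf_q$ and therefore has no root in $\gf_q$; in particular $\det M_b = m_b(B_1) \neq 0$. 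By Theorem \ref{th_ind4Gen}, $F$ is pseudo-planar.

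I do not anticipate a deep obstacle once the cancellation $A_2^{2q+2} = A_3^{2q^2+2} + A_3^{2q^3+2q}$ is spotted; the only genuine risk is an arithmetic slip in expanding the six determinant terms, so I would carry out that computation entirely in the variables $x_i$ and cross-check the symmetric-function reduction against the Newton identities above. The conceptual heart of the argument is the identity $\det M_b = m_b(\tr_{n/m}(b))$, which lets the irreducibility of $m_b$ over $\gf_q$ do all the work and makes the case analysis essentially trivial.
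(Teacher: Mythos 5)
Your proposal is correct and takes essentially the same route as the paper: both apply Theorem \ref{th_ind4Gen}, reduce the criterion to $\det M_b = B_4 + B_1^2B_2 + B_1B_3$, and conclude from the irreducibility of $m_b(x)$ over $\gf_q$ when $b\notin \gf_{q^2}$. The only difference is in the finish: your identity $\det M_b = m_b(B_1)$ merges the two cases into one line, whereas the paper treats $B_1=0$ directly and, for $B_1\neq 0$, normalizes to $B_1=1$ and derives the contradiction from the factorization $m_b(x)=(x+1)(x^3+B_2x+B_2+B_3)$, which is the same observation in the form $m_b(1)=\det M_b=0$.
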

\begin{proof}
By Theorem \ref{th_ind4Gen}, we have
\begin{equation*}
    \left\{\begin{array}{lll}
             A_2 & = &  b^{q^3}+b^q, \\
            A_3 & = & b^{q^2} + b^q.
           \end{array}
    \right.
\end{equation*}
Then a lengthy but direct computation shows that
\begin{eqnarray*}
   && \det M_b\\
  &=& b^{q^3+q^2+q+1} + \tr_{n/m}\left(b^{q^2+3} + b^{q^2+3q} + b^{3q^2+q}\right) \\
   &=&  B_4 + B_1^2B_2 + B_1B_3.
\end{eqnarray*}
If $B_1=0$, then $\det M_b=B_4\neq 0$.
If $b\in \gf_{q^2}^\ast$, then $B_1=0$. In the following,
we assume that $b\in \gf_{q^4}^\ast \setminus \gf_{q^2}$ and $B_1\neq 0$.
WLOG, let $B_1=1$. Assume that
$$\det M_b=B_4 + B_2 + B_3=0$$
for some $b\in \gf_{q^4}^\ast \setminus \gf_{q^2}$. Then
$B_4=B_2+ B_3$, and
\begin{eqnarray*}
  m_b(x) &=& x^4 + B_1x^3 + B_2x^2 + B_3x + B_4 \\
   &=& x^4 + x^3 + B_2x^2 + B_3x + B_2 + B_3 \\
   &=& (x+1) (x^3 + B_2x + B_2 + B_3).
\end{eqnarray*}
Contradicts! We finish the proof.
\end{proof}

\subsection{Case 3: Extension Degree $t=2$}

\begin{Th}\label{th_ind2Gen}
Let $n=2m$, and let
$$F(x)=\sum\limits_{i=0}^{m-1} c_ix^{2^{m+i}+2^i}\in \gf_{2^n}[x].$$
Then $F$ is pseudo-planar over $\gf_{2^n}$ if and only if
$$b^{2^m+1} + \sum\limits_{i=0}^{m-1}(c_ib)^{2^{m-i+1}} + \sum\limits_{i=0}^{m-1}(c_ib)^{2^{2m-i+1}}\neq 0$$
for any nonzero $b$ in $\gf_{2^n}$.
\end{Th}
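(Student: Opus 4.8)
The plan is to specialize Theorem \ref{th_Gen} to the extension degree $t=2$ and then simplify the resulting $2\times 2$ Dickson determinant. Setting $t=2$ in \eqref{eq_gen_Form} recovers exactly $F(x)=\sum_{i=0}^{m-1}c_i x^{2^{m+i}+2^i}$ (with $c_i=c_{1,i}$, since the exponent $2^i(q+1)=2^{m+i}+2^i$), so the theorem applies verbatim: $F$ is pseudo-planar if and only if $\det M_b\neq 0$ for every nonzero $b$, where $M_b$ is the $2\times 2$ Dickson matrix built from the entries $A_0,A_1$ of \eqref{eq_Aall}. First I would read off, for $t=2$,
$$A_0=b,\qquad A_1=\sum_{i=0}^{m-1}(c_i b)^{2^{n-i}}+\sum_{i=0}^{m-1}(c_i b)^{2^{m-i}}.$$

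Next I would evaluate the determinant. Expanding the $2\times 2$ Dickson determinant from \eqref{eq_detMb_Gen} and using characteristic $2$ gives
$$\det M_b=A_0^{q+1}+A_1^{q+1}=b^{q+1}+A_1^{q+1}.$$
The key simplification is that $A_1\in\gf_q$. This follows immediately from the symmetry relation \eqref{eq_Atj}, which for $t=2$, $j=1$ reads $A_1=A_1^{q}$; alternatively one checks it directly, since raising $A_1$ to the $q$-th power merely swaps its two sums because the exponents of $2$ are taken modulo $n=2m$. Consequently $A_1^{q+1}=A_1\cdot A_1^{q}=A_1^2$, which is the step that collapses the criterion into a clean form.

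Finally, since squaring is additive in characteristic $2$, I would expand
$$A_1^2=\sum_{i=0}^{m-1}(c_i b)^{2^{m-i+1}}+\sum_{i=0}^{m-1}(c_i b)^{2^{n-i+1}},$$
and substitute $n=2m$ in the second sum. Combining with $b^{q+1}=b^{2^m+1}$ yields
$$\det M_b=b^{2^m+1}+\sum_{i=0}^{m-1}(c_i b)^{2^{m-i+1}}+\sum_{i=0}^{m-1}(c_i b)^{2^{2m-i+1}},$$
so that $\det M_b\neq 0$ for all nonzero $b$ is precisely the stated condition, and the equivalence follows from Theorem \ref{th_Gen}.

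This argument is a routine specialization rather than a genuinely hard proof, so there is no real obstacle to overcome. The only point deserving care is the observation that $A_1$ lies in $\gf_q$: it is this fact that reduces the norm-type term $A_1^{q+1}$ to the plain square $A_1^2$, and hence makes the final sufficient and necessary condition both simple and explicit. I would present that reduction carefully and leave the Frobenius bookkeeping in the exponents as a direct verification.
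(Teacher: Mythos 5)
Your proposal is correct and matches the paper's own proof essentially step for step: both specialize Theorem \ref{th_Gen} to $t=2$, compute the $2\times 2$ Dickson determinant as $b^{q+1}+A_1^{q+1}$, and use the fact that $A_1\in\gf_q$ (equivalently $A_1=A_1^q$ from \eqref{eq_Atj}) to reduce $A_1^{q+1}$ to $A_1^2$. The only difference is that you spell out the Frobenius bookkeeping that the paper leaves implicit, which is a fine presentational choice.
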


\begin{proof}
Set $q=2^m$. According to Theorem \ref{th_Gen}, the dual linearized polynomial of $ \mathbb{L}_a(x) = F(x+a) + F(x) + F(a) + ax$ is
$\mathbb{L}^\ast_b(a)$:
\begin{eqnarray*}
  \mathbb{L}^\ast_b(a) &=& A_0\cdot a +  A_1\cdot a^{2^m},
\end{eqnarray*}
where
\begin{equation*}
    \left\{\begin{array}{lll}
             A_0 & = & b, \\
             A_1 &=&  \sum\limits_{i=0}^{m-1}\left(c_ib\right)^{2^{n-i}} + \sum\limits_{i=0}^{m-1}\left(c_{i}b\right)^{2^{m-i}}\in \gf_q.
           \end{array}
\right.
\end{equation*}
Hence
$$\det M_b  = \left|\begin{array}{cc}
                       A_0 & A_1  \\
                       A_1^q & A_0^q
                     \end{array}
  \right|= A_0^{q+1} + A_1^{q+1} = b^{q+1} + A_1^2.$$
Then the result follows from Theorem \ref{th_Gen}.
\end{proof}

Now we use Theorem \ref{th_ind2Gen} to characterize a monomial
pseudo-planar function, which was firstly studied by Schmidt and Zhou in \cite{SchmidtZhou}.

\begin{Th}\label{th_ind2Monomial}
Let $n=2m$, and let
$$F(x)=cx^{2^m+1}, \ \ \text{where} \ c\in \gf_{2^n}.
$$
Then $F$ is pseudo-planar over $\gf_{2^n}$ if and only if $\tr_{m/1}(c^{2^m+1})=0$.
Further, the number of such $c$ in $\gf_{2^n}$ is equal to  $2^{2m-1}-2^{m-1}$.
\end{Th}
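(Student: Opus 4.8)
The plan is to specialize Theorem~\ref{th_ind2Gen} to the monomial $F(x)=cx^{2^m+1}$, simplify the resulting non-vanishing condition into a single expression lying in $\gf_q$, and then decide its solvability via the quadratic irreducibility criterion of Lemma~\ref{redu-condition}. Set $q=2^m$. First I would dispose of $c=0$ separately: there $F=0$ is trivially pseudo-planar and $\tr_{m/1}(c^{2^m+1})=0$, so the equivalence holds; assume $c\neq 0$ from now on. Applying Theorem~\ref{th_ind2Gen} with $c_0=c$ and $c_i=0$ for $i\geq 1$, $F$ is pseudo-planar iff $b^{q+1}+(cb)^{2^{m+1}}+(cb)^{2^{2m+1}}\neq 0$ for all $b\in\gf_{2^n}^\ast$. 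Since $x^{2^{2m}}=x$ on $\gf_{2^n}$, the exponents collapse to $(cb)^{2^{m+1}}=(cb)^{2q}$ and $(cb)^{2^{2m+1}}=(cb)^2$, so the condition becomes $E(b):=b^{q+1}+(cb)^2+(cb)^{2q}\neq 0$. The key identity is that, in characteristic $2$,
$$(cb)^2+(cb)^{2q}=\bigl((cb)+(cb)^q\bigr)^2=\bigl(\tr_{n/m}(cb)\bigr)^2,$$
whence $E(b)=\mathrm{N}_{n/m}(b)+\bigl(\tr_{n/m}(cb)\bigr)^2\in\gf_q$.

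Next I would clear the coefficient $c$. Writing $y=cb$ (a bijection of $\gf_{2^n}^\ast$) and multiplying by the nonzero constant $c^{q+1}$, pseudo-planarity of $F$ is equivalent to
$$\mathrm{N}_{n/m}(y)+d\,\bigl(\tr_{n/m}(y)\bigr)^2\neq 0\quad\text{for all }y\in\gf_{2^n}^\ast,$$
where $d:=c^{q+1}=\mathrm{N}_{n/m}(c)\in\gf_q$. The heart of the argument is to determine when this expression can vanish. If $\tr_{n/m}(y)=0$ then $y\in\gf_q^\ast$ and the expression equals $\mathrm{N}_{n/m}(y)=y^2\neq 0$; hence any nonzero vanishing $y$ must lie in $\gf_{q^2}\setminus\gf_q$, so its minimal polynomial $z^2+sz+p$ over $\gf_q$, with $s=\tr_{n/m}(y)\neq 0$ and $p=\mathrm{N}_{n/m}(y)$, is irreducible. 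By Lemma~\ref{redu-condition} irreducibility forces $\tr_{m/1}(p/s^2)=1$, while the vanishing condition $p+ds^2=0$ says $p/s^2=d$; so a nonzero zero of $E$ exists only if $\tr_{m/1}(d)=1$. Conversely, when $\tr_{m/1}(d)=1$ the polynomial $z^2+z+d$ is irreducible over $\gf_q$, and any root $y$ gives $\tr_{n/m}(y)=1$, $\mathrm{N}_{n/m}(y)=d$, so the expression equals $d+d=0$. This yields the first claim: $F$ is pseudo-planar iff $\tr_{m/1}(c^{q+1})=0$.

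For the enumeration I would pass through the norm map $\mathrm{N}:\gf_{q^2}\to\gf_q$, which sends $0\mapsto 0$ and is $(q+1)$-to-one onto $\gf_q^\ast$. Only $c=0$ gives $d=0$, and $\tr_{m/1}(0)=0$, contributing one admissible value; the remaining admissible $c$ are exactly the norm-preimages of the $d\in\gf_q^\ast$ with $\tr_{m/1}(d)=0$. Since $\tr_{m/1}$ is a balanced $\gf_2$-linear form on $\gf_q$, there are $2^{m-1}-1$ such $d$, each having $q+1$ preimages, so the total is $1+(2^{m-1}-1)(2^m+1)=2^{2m-1}-2^{m-1}$, as asserted.

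The only delicate step—and the one I expect to be the main obstacle—is the translation through Lemma~\ref{redu-condition}: I must verify that the $\tr_{n/m}(y)=0$ case is automatically harmless, and that the prescribed value $d$ is realizable as $p/s^2$ for a genuinely irreducible quadratic \emph{precisely} when $\tr_{m/1}(d)=1$ (the converse direction, exhibiting an explicit root of $z^2+z+d$). Everything else reduces to the routine Frobenius simplification above and the standard fiber count for the norm map.
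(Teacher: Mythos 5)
Your proof is correct and follows essentially the same route as the paper: both specialize Theorem~\ref{th_ind2Gen} to the monomial, reduce the non-vanishing condition to a statement about $\tr_{n/m}(b)$ and ${\rm N}_{n/m}(b)$ (your $s$, $p$ are the paper's $B_1$, $B_2$ after its normalization $B_1=1$), and settle both directions with the quadratic irreducibility criterion of Lemma~\ref{redu-condition} applied to $z^2+z+c^{q+1}$. The only addition is that you spell out the fiber count of the norm map for the enumeration, which the paper dismisses as "follows directly."
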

\begin{proof}
We only prove the sufficient and necessary condition in the first part.
Then the counting argument follows directly.

Let $q=2^m$. The case $c=0$ is trivial. We assume in the following that $c\neq 0$.
According to Theorem \ref{th_ind2Gen}, $F$ is pseudo-planar if and only if
\begin{equation}\label{eq_ind2_mon}
    \det M_a = a^{q+1} + (ca)^2 + (ca)^{2q}\neq 0
\end{equation}
for any nonzero $a\in \gf_{2^n}^\ast$.
Let $a=c^{-1}b$. Define $x_1=b$ and $x_2=b^q$. Let
\begin{eqnarray*}
  B_1 &=& x_1+x_2=b+b^q=\tr_{n/m}(b),  \\
  B_2 &=& x_1x_2=b^{q+1}={\rm N}_{n/m}(b).
\end{eqnarray*}
Then $F$ is pseudo-planar if and only if
\begin{eqnarray*}
 \det M_a   &=& c^{-(q+1)} b^{q+1} + b^2 + b^{2q} \\
 &=& c^{-(q+1)} B_2 + B_1^2\neq 0
\end{eqnarray*}
for any nonzero $b\in \gf_{2^n}^\ast$.

If $b\in \gf_q^\ast$, then
$$\det M_a=c^{-(q+1)} b^2,$$
which is clearly nonzero for any nonzero $b$.

In the following, we assume that $b\in \gf_{q^2}^\ast\setminus \gf_{q}$.
We distinguish two cases.

{\bf Case 1: $B_1=0$.}

Then it is clear that $\det M_a=c^{-(q+1)} B_2\neq 0$.

 {\bf Case 2: $B_1\neq 0$.}

WLOG, we assume that $B_1=1$. Then
$$\det M_a =  c^{-(q+1)} B_2 + 1. $$
Assume $\det M_a=0$ for some $b$. Then it
follows that
$$B_2 =  c^{q+1}.$$
Let us consider the polynomial
\begin{equation}\label{eq_ind3_Mon_m}
    m_b(x) = x^2 + x + c^{q+1}.
\end{equation}
If $\tr_{m/1}(c^{q+1})\neq 0$, then $m_b(x)$ is irreducible over $\gf_q$. Hence
its solutions are all in $ \gf_{q^2}^\ast\setminus \gf_{q}$,
and for each solution, $\det M_a=0$ holds, which means that
$F$ is not pseudo-planar. On the other hand, if $\tr_{m/1}(c^{q+1})=0$, then $m_b(x)$ is reducible over $\gf_q$,
which contradicts that $b\in \gf_{2^n}^\ast\setminus \gf_{q}$. This contradiction shows that $\det M_a\neq 0$ holds.
Hence $F$ is  pseudo-planar over $\gf_{2^n}$.
\end{proof}

The above theorem generalizes \cite[Theorem 3.1]{SchmidtZhou}, which said that: if $c\in \gf_q^\ast$ and $\tr_{m/1}(c)=0$, then
$F(x)=cx^{q+1}$ is pseudo-planar over $\gf_{q^2}$.

An exhaustive search over $\gf_{2^{2m}}$ for $1\leq m\leq 4$ shows that there are no
pseudo-planar functions with the form $\sum\limits_{i=0}^{m-1} c_ix^{2^{m+i}+2^i}$, where
$c_i\in \gf_{2^{2m}}$ other than the monomials given by Theorem \ref{th_ind2Monomial}.
It takes about $120$ hours for the exhaustive search over $\gf_{2^8}$ by Magma V2.12-16 on a personal computer
(IntelCore CPU i5-3337U@1.80GHz, 1.80GHz, RAM 8.0GB).
 Hence
we propose the following conjecture. We can not prove it now and leave it as an open problem.

\begin{Prob}\label{conj_ind2}
Set $n=2m$ and $q=2^m$.  Let
$$F(x)=\sum\limits_{i=0}^{m-1} c_ix^{2^{m+i}+2^i}\in \gf_{2^n}[x].$$
To prove $F$ is pseudo-planar over $\gf_{2^n}$ if and only if
$\tr_{m/1}(c_0^{q+1})=0$, and $c_1=c_2=\cdots=c_{m-1}=0$;
or to find a counter-example.
\end{Prob}

\section{Equivalence Problem on  constructed pseudo-planar functions}

In Section III a general family of quadratic pesudo-planar functions was presented.
Moreover, in Section IV five explicit families of pesudo-planar functions were constructed.
Note that we call a family of pesudo-planar functions explicit if the condition (for it to be pesudo-planar) can be easily verified.
For example, the following are the list of these five explicit families of functions,
while the family defined by Proposition \ref{prop_ind3Trinomial} is not  explicit since
the condition \eqref{eq_ind3TriCond} can not be easily verified (though it can be verified by computer for small variables).
\begin{enumerate}
  \item $cx^{2(q+1)} + c^qx^{2(q^2+1)}$, where  $n=3m$, $q=2^m$, $c\in \gf_{2^n}$ (Theorem \ref{th_ind3Class1}).
  \item $x^{2(q+1)} + x^{q^2+1} + x^{q^2+q} +  x^{2(q^2+1)}$, where  $n=3m$, $m\not\equiv 1\mod 3$ and  $q=2^m$ (Theorem \ref{th_ind3Class2}).
  \item $x^{q+1} + \alpha x^{q^2+q} +  x^{q^2+1}$, where   $n=3m$, $q=2^m$ and $\alpha^3+\alpha^2+1=0$ (Corollary \ref{cor_ind3Class3}).
  \item $x^{q+1} + x^{q^2+1} + x^{q^3+q} + x^{q^3+1}$, where  $n=4m$, $q=2^m$ (Theorem \ref{th_ind4Class1}).
  \item $x^{q^2+q} + x^{q^3+q^2} + x^{q^3+q}$, where  $n=4m$, $q=2^m$ (Theorem \ref{th_ind4Class2}).
\end{enumerate}

In this section, we will discuss the equivalence problem on these functions.
Firstly, the pesudo-planar functions in Theorem
\ref{th_ind3Class1}, Theorem \ref{th_ind3Class2} and Corollary \ref{cor_ind3Class3}  cannot be new.
The reason is that they are all of Dembowski-Ostrom type, which means that the semifields' centers must contain $\gf_q$.
By the classification of semifields of order $q^3$ over $\gf_q$ by Menichetti in 1977 \cite{Menichetti}, they must be finite fields.
Therefore these functions should be equivalent to $F(x)=0$. The same argument also works for the functions in Result \ref{result_binom}
discovered by Hu et al \cite{HuLiZhang}.

Secondly, we study the equivalence of the functions in Theorems \ref{th_ind4Class1} and \ref{th_ind4Class2}.
To check whether they are new or not, we determine the left (right) nucleus of the derived semifields.

\begin{Prop}
Let $F$ be the function in Theorem \ref{th_ind4Class1} or Theorem \ref{th_ind4Class2}.
Then the semifield derived from $F$ is isomorphic to the finite field.
\end{Prop}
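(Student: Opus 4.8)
The plan is to exhibit the derived presemifield explicitly and then determine its left nucleus, as announced just before the statement. By the correspondence recalled in Section II.A, $F$ produces the commutative presemifield $(\gf_{2^{4m}},+,\star)$ with $x\star y=xy+F(x+y)+F(x)+F(y)$. Polarizing the Dembowski--Ostrom polynomial $F$, using $(x+y)^{q^i+q^j}+x^{q^i+q^j}+y^{q^i+q^j}=x^{q^i}y^{q^j}+x^{q^j}y^{q^i}$ in characteristic $2$, rewrites this as
$$x\star y=\sum_{i,j\in\Z/4}s_{ij}\,x^{q^i}y^{q^j},$$
where $(s_{ij})$ is a symmetric matrix with entries in $\gf_2$ read off from the exponents of $F$: for Theorem~\ref{th_ind4Class2} only the index pairs $(0,0),(1,2),(2,3),(1,3)$ occur, and for Theorem~\ref{th_ind4Class1} the pairs $(0,0),(0,1),(0,2),(1,3),(0,3)$. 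Observe that $a\star x=\mathbb{L}_a(x)$, the linearized permutation polynomial of Section III, so left multiplication by a nonzero element is an $\gf_2$-linear bijection and $\star$ has no zero divisors.

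First I would reduce the claim to associativity of $\star$. If $\star$ is associative then, fixing $a\neq0$, the bijection $x\mapsto a\star x$ has a fixed point $e$ with $a\star e=a$; associativity and commutativity force $e\star b=b$ for every $b$, so $e$ is an identity and $(\gf_{2^{4m}},+,\star)$ is a finite integral domain, hence $\cong\gf_{2^{4m}}$. Since $a\in N_l$ iff $(a\star x)\star y=a\star(x\star y)$ for all $x,y$, requiring $N_l=\gf_{2^{4m}}$ is exactly requiring $\star$ to be associative; thus it suffices to determine the left nucleus and show it is the whole field.

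Next I would convert associativity into a finite coefficient identity. Because the $s_{ij}$ lie in $\gf_2\subset\gf_q$ we have $s_{ij}^{q^p}=s_{ij}$, so expanding both $(a\star x)\star y$ and $a\star(x\star y)$ as polynomials in the Frobenius powers $a^{q^\alpha},x^{q^\beta},y^{q^\gamma}$ and reducing all exponents modulo $4$, the coefficient of $a^{q^\alpha}x^{q^\beta}y^{q^\gamma}$ on the two sides becomes $\sum_p s_{p\gamma}\,s_{\alpha-p,\beta-p}$ and $\sum_r s_{\alpha r}\,s_{\beta-r,\gamma-r}$ respectively. Associativity, equivalently $N_l=\gf_{2^{4m}}$, is therefore equivalent to
$$\sum_{p\in\Z/4}s_{p\gamma}\,s_{\alpha-p,\,\beta-p}=\sum_{r\in\Z/4}s_{\alpha r}\,s_{\beta-r,\,\gamma-r}\qquad(\forall\,\alpha,\beta,\gamma\in\Z/4),$$
which I would then check for each of the two explicit matrices $(s_{ij})$.

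The main obstacle is precisely this verification: there is no conceptual difficulty once the matrix is fixed, but the bookkeeping over the cyclic index set $\Z/4$ is lengthy — most of the $64$ triples reduce to $0=0$ and only a handful are nontrivial, each requiring careful tracking of the shifted indices $\alpha-p,\beta-p,\beta-r,\gamma-r$ modulo $4$. I expect both matrices to satisfy the identity, so that $N_l=\gf_{2^{4m}}$ and $\star$ is associative; by the reduction above the presemifield is then the finite field itself, and a fortiori its semifield isotope is $\gf_{2^{4m}}$. This shows, as for the trinomial families, that the functions of Theorems~\ref{th_ind4Class1} and \ref{th_ind4Class2} are equivalent to $F(x)=0$ and in particular are not new.
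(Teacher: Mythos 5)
Your framework is sensible, but the step you defer to the end --- ``I expect both matrices to satisfy the identity'' --- is false for one of the two functions, and this is fatal to the proof as planned. For $F$ as in Theorem~\ref{th_ind4Class1} the presemifield product $\star$ is \emph{not} associative; it does not even have an identity. Since $t=4$ is even, $\tr_{n/m}(1)=0$, and writing out $x\star y = x\tr_{n/m}(y)+x^q(y+y^{q^3})+x^{q^2}y+x^{q^3}(y+y^q)$ gives $x\star 1=x^{q^2}$; hence $(x\star 1)\star 1=x^{q^4}=x$ while $x\star(1\star 1)=x^{q^2}$, and these differ for $x\notin\gf_{q^2}$. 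In your coefficient language: with your matrix for Theorem~\ref{th_ind4Class1} (namely $s_{00}=s_{01}=s_{02}=s_{03}=s_{13}=1$ plus symmetry, all other entries $0$), the triple $(\alpha,\beta,\gamma)=(0,0,2)$ gives $\sum_p s_{p\gamma}s_{\alpha-p,\beta-p}=s_{02}s_{00}=1$, whereas $\sum_r s_{\alpha r}s_{\beta-r,\gamma-r}=s_{00}s_{02}+s_{01}s_{31}+s_{02}s_{20}+s_{03}s_{13}=1+1+1+1=0$ in $\gf_2$. So the identity you intend to verify fails, and your method stalls (or, worse, would wrongly suggest the proposition is false). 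The conceptual slip is the sentence equating ``$N_l=\gf_{2^{4m}}$'' with ``$\star$ is associative'': the nucleus is an invariant of the \emph{semifield derived from} $F$, which is an isotope of $(\gf_{2^{4m}},+,\star)$, not of $\star$ itself, and being isotopic to a field does not force the presemifield product to be associative. Associativity of $\star$ is sufficient for your conclusion (your finite-integral-domain reduction is fine) but not necessary, and here it simply does not hold.

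The repair is the step the paper inserts before any computation: replace $\star$ by the isotopic product $x\circ y=(x\star y)^{q^2}$. Since $x\star 1=x^{q^2}$, one has $x\circ 1=(x^{q^2})^{q^2}=x^{q^4}=x$, so $(\gf_{2^{4m}},+,\circ)$ is a commutative semifield with identity $1$, and \emph{this} is the object the proposition refers to. Your coefficient formalism then applies verbatim to $\circ$: because the $s_{ij}$ lie in $\gf_2$, the matrix of $\circ$ is just the cyclic shift $s'_{ij}=s_{i+2,j+2}$ of yours, and the verification over all triples in $\Z/4$ must be run for $s'$ --- this is exactly the paper's check that the coefficients $A_i(x,y)$ and $B_i(x,y)$ of $a^{q^i}$ in $a\circ(x\circ y)$ and $(a\circ x)\circ y$ coincide for $i=0,1,2,3$. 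Once $\circ$ is shown associative, your closing argument (finite, commutative, no zero divisors, with identity, hence $\cong\gf_{2^{4m}}$) finishes the proof. Note that for Theorem~\ref{th_ind4Class2} your original plan happens to be sound: there the polarized cross terms $x^q(y^{q^2}+y^{q^3})+x^{q^2}(y^q+y^{q^3})+x^{q^3}(y^q+y^{q^2})$ all vanish at $y=1$, so $x\star 1=x$, the isotopy is trivial, $\circ=\star$, and checking your identity for that matrix is precisely the required computation; but for Theorem~\ref{th_ind4Class1} the isotopy step cannot be skipped.
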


\begin{proof}
We only prove the case that $F$ is the function in Theorem \ref{th_ind4Class1}.
The other case can be proved similarly and is omitted here.
Then
$$F(x)= x^{q+1} + x^{q^2+1} + x^{q^3+q} + x^{q^3+1},$$
where $q=2^m$ and $n=4m$.

Let us define the following multiplication
\begin{eqnarray*}
   x\ast y
  &=& xy + F(x+y) + F(x) + F(y) \\
   &=&  x\tr_{n/m}(y) + x^q(y+y^{q^3}) + x^{q^2}y + x^{q^3} (y+y^q).
\end{eqnarray*}
Since $x\ast 1=x^{q^2}$, $(\gf_{2^n}, +, \ast)$ is not a semifield but a presemifield.
Then we define
\begin{eqnarray*}
  & & x\circ y = (x\ast y)^{q^2} \\
  & =& xy^{q^2} + x^{q} (y^{q^2}+y^{q^3}) + x^{q^2}\tr_{n/m}(y) + x^{q^3}(y^{q^2}+y^{q}).
\end{eqnarray*}
Hence $(\gf_{2^n}, +, \circ)$ is  a semifield corresponding to $F$.

On one hand, we have
\begin{eqnarray*}
   && a\circ (x\circ y) \\
   &=& a A_0(x, y) +  a^q A_1(x, y) + a^{q^2} A_2(x, y) + a^{q^3} A_3(x, y),
\end{eqnarray*}
where
\begin{eqnarray*}
   A_0(x, y)&=& (x\circ y)^{q^2},  \\
  A_1(x, y)&=& \left((x\circ y)^{q^2}+(x\circ y)^{q^3}\right),  \\
  A_2(x, y)&=& \tr_{n/m}(x\circ y),  \\
  A_3(x, y)&=& \left((x\circ y)^{q^2}+(x\circ y)^{q}\right).
\end{eqnarray*}

On the other hand, we have
\begin{eqnarray*}
   &&(a\circ x) \circ y\\
   &=& (a\circ x)y^{q^2} + (a\circ x)^{q} (y^{q^2}+y^{q^3}) \\
   && \ \ + (a\circ x)^{q^2}\tr_{n/m}(y) + (a\circ x)^{q^3}(y^{q^2}+y^{q})\\
   &=& a B_0(x, y) +  a^q B_1(x, y) + a^{q^2} B_2(x, y)\\
    && \ \ + a^{q^3} B_3(x, y),
\end{eqnarray*}
where
\begin{eqnarray*}
   && B_0(x, y)\\
   &=&  x^{q^2}y^{q^2} + (x^{q^2} + x^q)^{q} (y^{q^2}+y^{q^3})\\
   && \ \  + \tr_{n/m}(x)\tr_{n/m}(y) + (x^{q^2} + x^{q^3})^{q^3}(y^{q^2}+y^{q}), \\
   && B_1(x, y)\\
   &=&  (x^{q^2} + x^{q^3}) y^{q^2} + (x^{q^2})^q (y^{q^2}+y^{q^3}) \\
   && \ \ +  (x^{q^2} + x^q)^{q^2}\tr_{n/m}(y) + \tr_{n/m}(x)(y^{q^2}+y^{q}), \\
   && B_2(x, y)\\
   &=&  \tr_{n/m}(x)y^{q^2} + (x^{q^2} + x^{q^3})^{q} (y^{q^2}+y^{q^3}) \\
   && \ \ + (x^{q^2})^{q^2}\tr_{n/m}(y) + (x^{q^2} + x^q)^{q^3}(y^{q^2}+y^{q}), \\
    && B_3(x, y)\\
   &=&  (x^{q^2} + x^q)y^{q^2} + \tr_{n/m}(x) (y^{q^2}+y^{q^3}) \\
   && \ \ + (x^{q^2} + x^{q^3})^{q^2}\tr_{n/m}(y) + (x^{q^2})^{q^3}(y^{q^2}+y^{q}).
\end{eqnarray*}
Then a direct computation shows that
$$A_i(x, y) = B_i(x, y), i=0, 1, 2, 3.$$
Hence
$$a\circ (x\circ y)=(a\circ x) \circ y \ \text{ for all} \ a, x,y \in \gf_{2^n}, $$
which means that  $(\gf_{2^n}, +, \circ)$ is isomorphic to the finite field $\gf_{2^n}$.
\end{proof}

It is a pity that all the explicit families of pesudo-planar functions constructed in the last section are
equivalent to $F(x)\equiv 0$. However,  they are still interesting since it may be hard to prove a given function to be pesudo-planar even
 if it is equivalent to known functions.
 For example, the pesudo-planar function in \cite[Theorem 1.1]{ScherrZieve} is equivalent to the zero function.
 However, the fact that it is pesudo-planar seems not to be easily proved. The functions in Result \ref{result_binom} are also such examples.

Since the number of pairwise nonisomorphic commutative semifields of even order $N$ in the Kantor family
is not bounded above by any polynomial in $N$,
and the Kantor family is  included in the general family constructed in Theorem \ref{th_Gen}
(as shown in  the end of Section III.A),
we know that there exist plenties of pesudo-planar functions in our general family
which are inequivalent to the zero function.
However, we are wondering whether there exists a function in Theorem \ref{th_Gen}
 which is inequivalent to all known pesudo-planar functions. Currently we can not find
an answer and leave it as an open problem.

\begin{Prob}\label{conj_new}
Does there exist a pesudo-planar function in
the general family given by Theorem \ref{th_Gen} which is inequivalent to
those in Result \ref{result1}?  If yes, find such an example.
\end{Prob}

\section{Applications of constructed pseudo-planar functions}


%

According to Theorem \ref{th_Abdukhalikov} and Proposition \ref{prop_codebook}, 
the pseudo-planar functions constructed in Section IV
can contribute a lot of complete sets of MUBs, optimal codebooks meeting the Levenstein
bound. They can also be used to construct compressed sensing matrices with low coherence.
In the following we give a small example over $\gf_{2^3}$.

\begin{example}\label{ex_new}
In Theorem \ref{th_ind3Class1}, set $m=1$, $n=3$ and $c=1$. Then $F(x)=x^6+x^{10}$ is a pseudo-planar function over $\gf_{2^3}$.
According to Theorem \ref{th_Abdukhalikov} and Proposition \ref{prop_codebook}, the following bases is a complete set of MUB with dimension $3$.
 The union set of these basis vectors is an optimal $(72, 8)$  complex codebook meeting the Levenstein bound.
\begin{equation*}
    \begin{array}{rl}
B_1=\{(AAAAAAAA), & (AACACCCA),\\
(ACACCCAA), & (AACCCAAC),\\
(ACCCAACA), & (ACCAACAC),\\
(ACAACACC), & (AAACACCC)\},\\
B_2=\{(ADBDAADC), & (ADDDCCBC),\\
(ABBBCCDC), & (ADDBCADA),\\
(ABDBAABC), & (ABDDACDA),\\
(ABBDCABA), & (ADBBACBA),\\
\end{array}
\end{equation*}
\begin{equation*}
    \begin{array}{rl}
    B_3=\{(AADABDDC), & (AABADBBC)\},\\
(ACDCDBDC), & (AABCDDDA),\\
(ACBCBDBC), & (ACBABBDA),\\
(ACDADDBA), & (AADCBBBA)\},\\
\end{array}
\end{equation*}
\begin{equation*}
    \begin{array}{rl}
    B_4=\{(ADDCADAB), & (ADBCCBCB),\\
(ABDACBAB), & (ADBACDAD),\\
(ABBAADCB), & (ABBCABAD),\\
(ABDCCDCD), & (ADDAABCD)\},\\
\end{array}
\end{equation*}
\begin{equation*}
    \begin{array}{rl}
    B_5=\{(ABADDDAC), & (ABCDBBCC),\\
(ADABBBAC), & (ABCBBDAA),\\
(ADCBDDCC), & (ADCDDBAA),\\
(ADADBDCA), & (ABABDBCA)\},\\
\end{array}
\end{equation*}
\begin{equation*}
    \begin{array}{rl}
    B_6=\{(ADAADCDB), & (ADCABABB),\\
(ABACBADB), & (ADCCBCDD),\\
(ABCCDCBB), & (ABCADADD),\\
(ABAABCBD), & (ADACDABD)\},\\
\end{array}
\end{equation*}
\begin{equation*}
    \begin{array}{rl}
    B_7=\{(AADDDACB), & (AABDBCAB),\\
(ACDBBCCB), & (AABBBACD),\\
(ACBBDAAB), & (ACBDDCCD),\\
(ACDDBAAD), & (AADBDCAD)\},\\
\end{array}
\end{equation*}
\begin{equation*}
    \begin{array}{rl}
    B_8=\{(ACCBCBBD), & (ACABADDD),\\
(AACDADBD), & (ACADABBB),\\
(AAADCBDD), & (AAABCDBB),\\
(AACBABDB), & (ACCDCDDB)\},\\
\end{array}
\end{equation*}
\begin{equation*}
    \begin{array}{rl}
    B_{\infty}=\{(10000000), & (01000000),\\
(00100000), & (00010000),\\
(00001000), & (00000100),\\
(00000010), & (00000001)\},\\
\end{array}
\end{equation*}
where $A$, $B$, $C$ and $D$ denotes $\frac{1}{\sqrt{8}}$, $\frac{\sqrt{-1}}{\sqrt{8}}$,
$-\frac{1}{\sqrt{8}}$ and  $-\frac{\sqrt{-1}}{\sqrt{8}}$ respectively.
\end{example}

\section{Conclusion}

In this paper, we introduced a new approach to  constructing  quadratic pseudo-planar functions over $\gf_{2^n}$.
By using it,  a general family of such functions was constructed.
Then  five explicit  families of pseudo-planar functions were presented, and
many known families were reconstructed, some of which were generalized.
These pseudo-planar functions not only lead to projective planes,
relative difference sets and presemifields, but also give optimal codebooks meeting the Levenstein bound,
complete sets of MUB, and compressed sensing matrices with low coherence.

Now all the families of known pesudo-planar functions  are subfamilies of the functions with
the general form \eqref{eq_gen_Form}. On one hand, we believe that
there exist other explicit subfamilies of pseudo-planar functions in this general family.
Particularly, we are wondering  whether the answer to Problem \ref{conj_new} is positive.
On the other hand, it is more interesting to find a class of pseudo-planar functions
out of this family.
Further, we would like to ask again the following problem which was raised in \cite{PottSchmidtZhou}.

\begin{Prob}\label{conj_non_quad}
Is it possible to find a pesudo-planar function that is not of
Dembowski-Ostrom type?
\end{Prob}

To prove a quadratic function to be pseudo-planar, it is equivalent to proving
 a series of linearized polynomials are permutation polynomials. In this paper,
 instead of investigating these linearized polynomials directly, we turned to study the dual polynomials
  of these functions. It seems that this method is efficient. It should be useful to study other problems about linearized permutation polynomials.
Particularly, it may work for planar functions over finite fields with odd characteristic.


\section*{Acknowledgement}
The author would like to thank Dr. Yue Zhou for reading an early version of this paper
and giving many helpful
comments which improve the quality of the paper.

\bigskip

{\bfseries Longjiang Qu} received his B.A. degree in 2002 and Ph.D. degree in 2007 in
mathematics from the National University of Defense Technology, Changsha,
China. He is now a Professor with College of Science,
 National University of Defense Technology of China. His
research interests are cryptography and coding theory.

\end{document}